\documentclass[11pt,notitlepage,tightenlines,nofootinbib,superscriptaddress]{revtex4-2}

\bibliographystyle{unsrt}

\usepackage{newpxtext,newpxmath}

\usepackage[latin1]{inputenc}
\usepackage{amsthm}
\usepackage{amssymb}
\usepackage{amsmath}
\usepackage{bbold}
\usepackage{bbm}
\usepackage[pdftex, backref=page]{hyperref}
\usepackage{braket}
\usepackage{dsfont}
\usepackage{mathdots}
\usepackage{mathtools}
\usepackage{enumerate}
\usepackage[shortlabels]{enumitem}
\usepackage{csquotes}
\usepackage{stmaryrd}
\usepackage[cal=boondox]{mathalfa}
\usepackage{graphicx}
\usepackage{stackengine}
\usepackage{scalerel}
\usepackage{tensor}       
\usepackage{array}
\usepackage{makecell}
\newcolumntype{x}[1]{>{\centering\arraybackslash}p{#1}}
\usepackage{tikz}
\usepackage{pgfplots}
\usetikzlibrary{shapes.geometric, shapes.misc, positioning, arrows, arrows.meta, decorations.pathreplacing, decorations.pathmorphing, patterns, angles, quotes, calc}
\usepackage{booktabs}
\usepackage{xfrac}
\usepackage{siunitx}
\usepackage{centernot}
\usepackage{comment}
\usepackage{chngcntr}
\usepackage[justification=justified,format=plain]{caption}
\usepackage{subcaption}
\usepackage{ragged2e}

\newtheorem{thm}{Theorem}
\newtheorem*{thm*}{Theorem}
\newtheorem{prop}[thm]{Proposition}
\newtheorem*{prop*}{Proposition}
\newtheorem{lemma}[thm]{Lemma}
\newtheorem*{lemma*}{Lemma}
\newtheorem{cor}[thm]{Corollary}
\newtheorem*{cor*}{Corollary}

\newtheorem*{cj*}{Conjecture}
\newtheorem{Def}[thm]{Definition}
\newtheorem*{Def*}{Definition}

\newtheorem*{question*}{Question}

\newtheorem*{problem*}{Problem}

\makeatletter
\def\thmhead@plain#1#2#3{%
  \thmname{#1}\thmnumber{\@ifnotempty{#1}{ }\@upn{#2}}%
  \thmnote{ {\the\thm@notefont#3}}}
\let\thmhead\thmhead@plain
\makeatother

\theoremstyle{definition}
\newtheorem{rem}[thm]{Remark}
\newtheorem*{note}{Note}

\newcommand{\bb}{\begin{equation}\begin{aligned}\hspace{0pt}}
\newcommand{\bbb}{\begin{equation*}\begin{aligned}}
\newcommand{\ee}{\end{aligned}\end{equation}}
\newcommand{\eee}{\end{aligned}\end{equation*}}
\newcommand\floor[1]{\lfloor#1\rfloor}
\newcommand\ceil[1]{\left\lceil#1\right\rceil}
\newcommand{\eqt}[1]{\stackrel{\mathclap{\scriptsize \mbox{#1}}}{=}}
\newcommand{\leqt}[1]{\stackrel{\mathclap{\scriptsize \mbox{#1}}}{\leq}}

\newcommand{\geqt}[1]{\stackrel{\mathclap{\scriptsize \mbox{#1}}}{\geq}}
\newcommand{\ketbra}[1]{\ket{#1}\!\!\bra{#1}}

\newcommand{\ketbraa}[2]{\ket{#1}\!\!\bra{#2}}
\newcommand{\Ketbraa}[2]{\Ket{#1}\!\!\Bra{#2}}
\newcommand{\sumno}{\sum\nolimits}

\newcommand{\e}{\varepsilon}
\renewcommand{\epsilon}{\varepsilon}

\newcommand{\dd}{\mathrm{d}}

\newcommand{\id}{\mathds{1}}
\newcommand{\R}{\mathds{R}}
\newcommand{\N}{\mathds{N}}

\newcommand{\C}{\mathds{C}}

\newcommand{\locc}{\mathrm{LOCC}}
\newcommand{\sep}{\mathrm{SEP}}

\newcommand{\SEP}{\pazocal{S}}

\DeclareMathOperator{\Tr}{Tr}

\DeclareMathOperator{\co}{conv}

\DeclareMathOperator{\Span}{span}
\DeclareMathAlphabet{\pazocal}{OMS}{zplm}{m}{n}

\DeclareMathOperator{\pr}{Pr}
\DeclareMathOperator{\supp}{supp}
\DeclareMathOperator{\spec}{spec}

\DeclareMathOperator{\tr}{tr}

\newcommand{\HH}{\pazocal{H}}
\newcommand{\T}{\pazocal{T}}

\newcommand{\EE}{\pazocal{E}}

\newcommand{\D}{\pazocal{D}}
\newcommand{\K}{\pazocal{K}}

\newcommand{\XX}{\pazocal{X}}
\newcommand{\TT}{\pazocal{T}}
\newcommand{\PP}{\pazocal{P}}
\newcommand{\FF}{\pazocal{F}}

\newcommand{\lsmatrix}{\left(\begin{smallmatrix}}
\newcommand{\rsmatrix}{\end{smallmatrix}\right)}

\newcommand{\deff}[1]{\textbf{\emph{#1}}}

\stackMath
\newcommand\xxrightarrow[2][]{\mathrel{%
  \setbox2=\hbox{\stackon{\scriptstyle#1}{\scriptstyle#2}}%
  \stackunder[5pt]{%
    \xrightarrow{\makebox[\dimexpr\wd2\relax]{$\scriptstyle#2$}}%
  }{%
   \scriptstyle#1\,%
  }%
}}

\newcommand{\tendsn}[1]{\xxrightarrow[\! n\rightarrow \infty\!]{\mathrm{#1}}}
\newcommand{\tendsk}[1]{\xxrightarrow[\! k\rightarrow \infty\!]{\mathrm{#1}}}
\newcommand{\ctends}[3]{\xxrightarrow[\raisebox{#3}{$\scriptstyle #2$}]{\raisebox{-0.7pt}{$\scriptstyle \mathrm{#1}$}}}

\stackMath

\makeatletter
\newcommand*\rel@kern[1]{\kern#1\dimexpr\macc@kerna}
\newcommand*\widebar[1]{%
  \begingroup
  \def\mathaccent##1##2{%
    \rel@kern{0.8}%
    \overline{\rel@kern{-0.8}\macc@nucleus\rel@kern{0.2}}%
    \rel@kern{-0.2}%
  }%
  \macc@depth\@ne
  \let\math@bgroup\@empty \let\math@egroup\macc@set@skewchar
  \mathsurround\z@ \frozen@everymath{\mathgroup\macc@group\relax}%
  \macc@set@skewchar\relax
  \let\mathaccentV\macc@nested@a
  \macc@nested@a\relax111{#1}%
  \endgroup
}

\counterwithin*{equation}{part}
\counterwithin*{thm}{part}
\counterwithin*{figure}{part}

\tikzset{meter/.append style={draw, inner sep=10, rectangle, font=\vphantom{A}, minimum width=30, line width=.8, path picture={\draw[black] ([shift={(.1,.3)}]path picture bounding box.south west) to[bend left=50] ([shift={(-.1,.3)}]path picture bounding box.south east);\draw[black,-latex] ([shift={(0,.1)}]path picture bounding box.south) -- ([shift={(.3,-.1)}]path picture bounding box.north);}}}
\tikzset{roundnode/.append style={circle, draw=black, fill=gray!20, thick, minimum size=10mm}}
\tikzset{squarenode/.style={rectangle, draw=black, fill=none, thick, minimum size=10mm}}

\definecolor{Blues5seq1}{RGB}{239,243,255}
\definecolor{Blues5seq2}{RGB}{189,215,231}
\definecolor{Blues5seq3}{RGB}{107,174,214}
\definecolor{Blues5seq4}{RGB}{49,130,189}
\definecolor{Blues5seq5}{RGB}{8,81,156}

\definecolor{Greens5seq1}{RGB}{237,248,233}
\definecolor{Greens5seq2}{RGB}{186,228,179}
\definecolor{Greens5seq3}{RGB}{116,196,118}
\definecolor{Greens5seq4}{RGB}{49,163,84}
\definecolor{Greens5seq5}{RGB}{0,109,44}

\definecolor{Reds5seq1}{RGB}{254,229,217}
\definecolor{Reds5seq2}{RGB}{252,174,145}
\definecolor{Reds5seq3}{RGB}{251,106,74}
\definecolor{Reds5seq4}{RGB}{222,45,38}
\definecolor{Reds5seq5}{RGB}{165,15,21}

\allowdisplaybreaks

\usepackage[most,breakable]{tcolorbox}
\renewenvironment{boxed}[1]{\expandafter\ifstrequal\expandafter{#1}{orange}{\begin{tcolorbox}[colback=red!15,colframe=orange!15,breakable,enhanced]}{\begin{tcolorbox}[colback=Blues5seq1,colframe=Blues5seq5,breakable,enhanced]}}{\end{tcolorbox}}

\setlength{\parskip}{4pt}

\newcommand{\rel}[3]{#1\big(#2\,\big\|\,#3\big)}
\newcommand{\hyp}[3]{H_{#1\!,\,#2;\, #3}}
\newcommand{\LL}{\pazocal{L}}
\newcommand{\OO}{\pazocal{O}}
\newcommand{\ane}{\mathrm{ANE}}
\newcommand{\lowarrow}[1]{\text{\raisebox{-2.3pt}{$\overset{\text{\raisebox{-1.6pt}{$#1$}}}{\longrightarrow}$}}\,}
\DeclareMathOperator{\stein}{Stein}
\newcommand{\VV}{\pazocal{V}}
\newcommand{\raisemath}[1]{\mathpalette{\raisem@th{#1}}}
\newcommand{\raisem@th}[3]{\raisebox{#1}{$#2#3$}}
\makeatother
\newcommand{\aiid}[1]{\pazocal{A}^{\raisemath{0.5pt}{#1}}}
\newcommand{\VVsym}{\pazocal{V}_{\mathrm{Sym}}}

\begin{document}

\title{A solution of the generalised quantum Stein's lemma}

\author{Ludovico Lami}
\email{ludovico.lami@gmail.com}
\affiliation{Scuola Normale Superiore, Piazza dei Cavalieri 7, 56126 Pisa, Italy}
\affiliation{QuSoft, Science Park 123, 1098 XG Amsterdam, the Netherlands}
\affiliation{Korteweg--de Vries Institute for Mathematics, University of Amsterdam, Science Park 105-107, 1098 XG Amsterdam, the Netherlands}
\affiliation{Institute for Theoretical Physics, University of Amsterdam, Science Park 904, 1098 XH Amsterdam, the Netherlands}

\begin{abstract}
We solve the generalised quantum Stein's lemma, proving that the Stein exponent associated with entanglement testing, namely, the quantum hypothesis testing task of distinguishing between $n$ copies of an entangled state $\rho_{AB}$ and a generic separable state $\sigma_{A^n:B^n}$, equals the regularised relative entropy of entanglement. Not only does this determine the ultimate performance of entanglement testing, but it also establishes the reversibility of all quantum resource theories under asymptotically resource non-generating operations, with the regularised relative entropy of resource governing the asymptotic transformation rate between any two quantum states. As a by-product, we prove that the same Stein exponent can also be achieved when the null hypothesis is only approximately i.i.d., in the sense that it can be modelled by an `almost power state'. To solve the problem we introduce two techniques. The first is a procedure that we call `blurring', which, informally, transforms a permutationally symmetric state by making it more evenly spread across nearby type classes. Blurring alone suffices to prove the generalised Stein's lemma in the fully classical case, but not in the quantum case. Our second technical innovation, therefore, is to perform a second quantisation step to lift the problem to an infinite-dimensional bosonic quantum system; we then solve it there by using techniques from continuous-variable quantum information. Rather remarkably, the second-quantised action of the blurring map corresponds to a pure loss channel. A careful examination of this second quantisation step is the core of our quantum solution. 
\end{abstract}

\maketitle

\section{Introduction} \label{sec_intro}

Quantum entanglement, initially regarded as a mathematical oddity of quantum theory~\cite{EPR, schr}, is increasingly recognised as a fundamental concept of quantum mechanics~\cite{Horodecki-review}, and as one of the sources of its technological power. It fuels quantum teleportation~\cite{teleportation} and superdense coding~\cite{dense-coding}, it enables the violation of Bell inequalities~\cite{Brunner-review}, it plays a critical role in quantum key distribution~\cite{Ekert91, RennerPhD}, and it enhances quantum communication on noisy channels~\cite{superactivation, Hastings2008}. Recently, it has been proposed that its detection in a gravitationally interacting system would provide evidence in favour of the quantisation of gravity~\cite{tabletop, testing}.

\subsection{Entanglement testing}

All of the above applications require the detection of entanglement in unknown states, as this is a necessary step, e.g., in certifying the devices that produce it. This leads us to the key operational task that will be the central focus of this paper: \deff{entanglement testing}. Imagine that you bought a device that --- so is promised to you --- produces many copies of a possibly mixed bipartite entangled state $\rho_{AB}$. If that were actually the case, after using it $n$ times the global state would of the form $\rho_{AB}^{\otimes n}$. You however suspect that the device might be faulty or even maliciously engineered, so that, in reality, it produces a global \emph{separable} (that is, unentangled~\cite{Werner}) state $\sigma_{A^n:B^n}$ --- here we stress that separability is between $n$ copies of $A$ on one side and $n$ copies of $B$ on the other. How do you decide whether to trust the device or not? See Figure~\ref{entanglement_testing_fig} for a pictorial representation of the problem. We can think of this question --- one of the most fundamental of entanglement theory --- as an important special case of quantum hypothesis testing. We call it `entanglement testing'. As in traditional quantum hypothesis testing, there are two hypotheses to decide between:
\begin{itemize}
\item Null hypothesis: the unknown state is $\rho_{AB}^{\otimes n}$.
\item Alternative hypothesis: the unknown state is some $\sigma_{A^n:B^n}$, of which nothing is known except that it is separable across the cut $A^n:B^n$.
\end{itemize}
Unlike in traditional quantum hypothesis testing, which features two \emph{independent and identically distributed} (i.i.d.) hypotheses~\cite{Hiai1991, Ogawa2000, Nussbaum2009, qChernoff, Audenaert2008}, in entanglement testing the alternative hypothesis is \emph{composite}, that is, non-i.i.d.; this feature complicates the analysis considerably and has hindered progress on the problem for many years. To this day, only a handful of results on composite quantum hypothesis testing are known~\cite{Bjelakovic2005, Noetzel2014, brandao_adversarial, berta_composite}.

As in every hypothesis testing problem, in entanglement testing there are two types of errors:
\begin{itemize}
\item Type-I error: the state was $\rho_{AB}^{\otimes n}$, but we incorrectly guessed that it was separable.
\item Type-II error: the state was separable, but we incorrectly guessed that it was $\rho_{AB}^{\otimes n}$.
\end{itemize}
There is clearly a trade-off between the probabilities of these two errors occurring: since the above events are often not equally consequential in practical applications, in asymmetric hypothesis testing one tries to minimise one of the two while keeping the other at most equal to $\e\in (0,1)$. When this is done, the minimised probability usually decays \emph{exponentially fast} to zero as a function of $n$, the number of available copies. The \emph{coefficients} governing these exponential decays are known as error exponents. The rate of decay of the type-II error probability is usually called the \deff{Stein exponent}, while the rate of decay of the type-I error probability is known as the \deff{Sanov exponent}. Formally,
\begin{align}
\stein(\rho\,\|\,\sep) &\coloneqq \lim_{\e\to 0^+} \liminf_{n\to\infty} -\frac1n \log \min\left\{ \pr\{\text{$n$-copy type-II error}\}\!:\ \pr\{\text{type-I error}\} \leq \e\right\} , \label{Stein_informal} \\
\mathrm{Sanov}(\rho\,\|\,\sep) &\coloneqq \lim_{\e\to 0^+} \liminf_{n\to\infty} -\frac1n \log \min\left\{ \pr\{\text{$n$-copy type-I error}\}\!:\ \pr\{\text{type-II error}\} \leq \e\right\} . \label{Sanov_informal} 
\end{align}
where $(\alpha_n,\beta_n)$ represent achievable pairs of type-I and type-II error probabilities in $n$-copy entanglement testing. In this paper we will establish an expression for~\eqref{Stein_informal}, while in~\cite{generalised-Sanov} we solve~\eqref{Sanov_informal}.

In the simple i.i.d.\ setting, where the alternative hypothesis is represented by some i.i.d.\ state $\sigma^{\otimes n}$, the Stein exponent is given by the quantum \deff{relative entropy}, defined as~\cite{Umegaki1962}
\bb
D(\rho\|\sigma) \coloneqq \Tr \left[\rho \left(\log \rho - \log \sigma\right)\right] .
\label{Umegaki}
\ee
In entanglement testing, however, the second state is not fixed; in this case, a natural guess for the Stein exponent is obtained by minimising the relative entropy with respect to the second argument --- which corresponds to considering a worst-case scenario. By doing so one defines the \deff{relative entropy of entanglement}, given by~\cite{Vedral1997, Vedral1998}
\bb
E_R(\rho_{AB}) = D\big(\rho_{AB} \big\| \SEP_{A:B}\big) \coloneqq \min_{\sigma_{AB} \in \SEP_{A:B}} D(\rho_{AB} \|\sigma_{AB})\, ,
\label{relative_entropy_entanglement}
\ee
where $\SEP_{A:B}$ denotes the set of separable or unentangled states on $AB$. To obtain an asymptotically meaningful expression, however, we need to consider the relative entropy of entanglement not of a single copy but of many copies of $\rho_{AB}$. The resulting measure, called the \deff{regularised relative entropy of entanglement}, is defined as
\bb
E_R^\infty(\rho_{AB}) = D^\infty\big(\rho_{AB} \big\| \SEP_{A:B}\big) \coloneqq \lim_{n\to\infty} \frac1n\, E_R\big(\rho_{AB}^{\otimes n}\big)\, ,
\label{regularised_relative_entropy_entanglement}
\ee
where it is understood that on the right-hand side the relevant bipartition is $A^n\!:\!B^n$. The regularisation is needed, as additivity violations, i.e.\ examples of states $\rho_{AB}$ where $E_R^\infty(\rho_{AB}) < E_R(\rho_{AB})$, are known~\cite{Werner-symmetry}.

\begin{center}
\begin{figure}[h!t] \centering
\includegraphics[scale=.193]{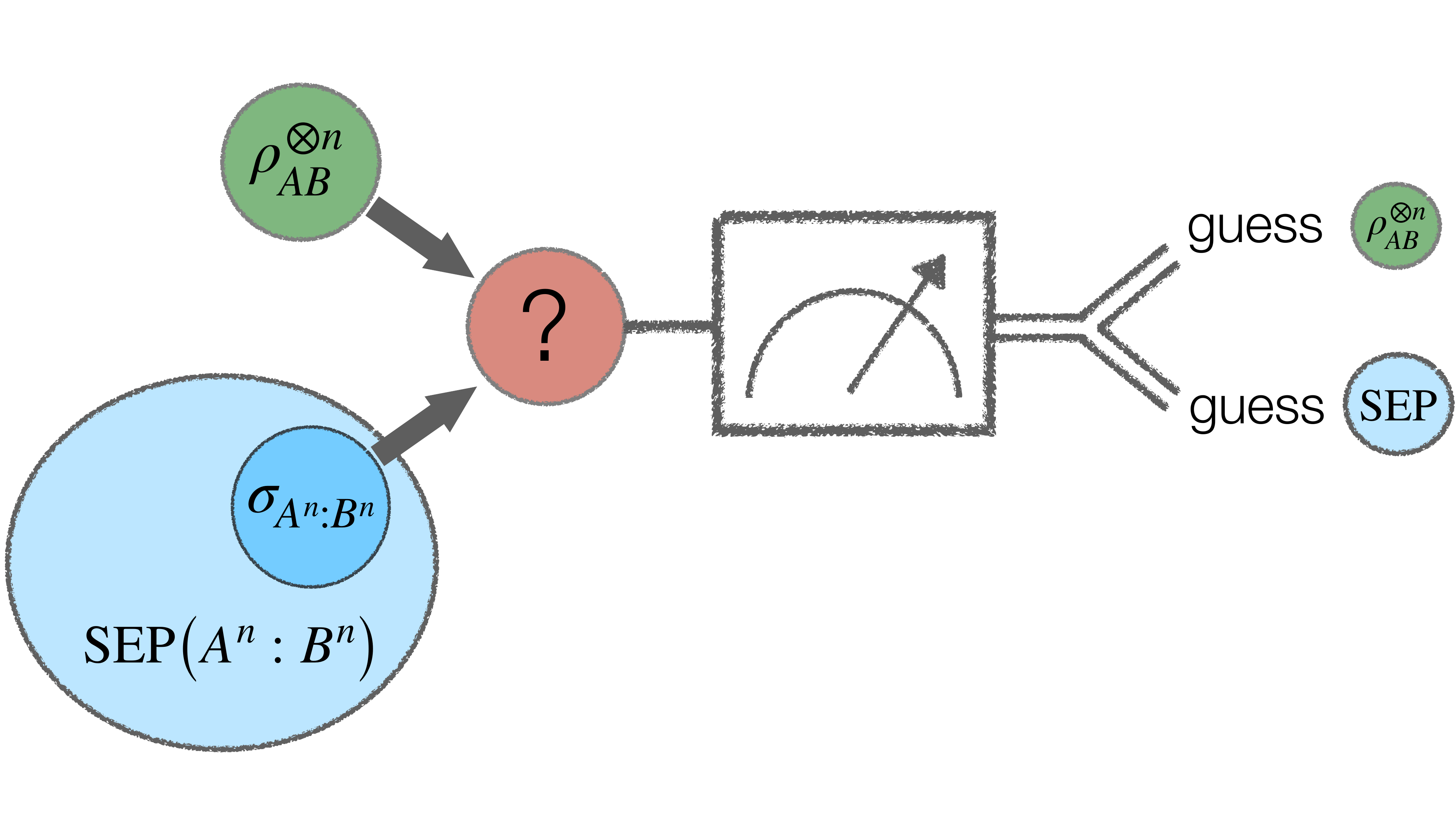}
\caption{\justifying  
The task of entanglement testing is a special case of quantum hypothesis testing. The null hypothesis is that the unknown quantum state is of the form $\rho_{AB}^{\otimes n}$, for some bipartite state $\rho_{AB}$. The alternative hypothesis is that it is a generic separable, i.e.\ unentangled, state $\sigma_{A^n:B^n}$ of the global system $A^n:B^n$. In order to guess which of these two hypotheses is the correct one, we can perform an arbitrary measurement on the unknown state. The fundamental difficulty in analysing this problem is that the set of states that constitutes the alternative hypothesis is non-i.i.d.\ and in fact potentially quantumly correlated, because although $\sigma_{A^n:B^n}$ is separable across the cut $A^n:B^n$, it might be entangled across the several $A$ systems, and similarly across the several $B$ systems.}
\label{entanglement_testing_fig}
\end{figure}
\end{center}

\subsection{Connection with asymptotic entanglement manipulation}

Entanglement testing is connected at a fundamental level with the task of entanglement distillation, a critical primitive of quantum technology platforms. Entanglement distillation consists in preparing as much \emph{pure} entanglement as possible starting from many copies of a possibly \emph{mixed} state $\rho_{AB}$ and using only `free operations' belonging to some restricted set of quantum operations $\OO$, which encapsulates the operational and technological constraints to which we are subjected~\cite{Bennett-distillation, Bennett-distillation-mixed, Bennett-error-correction, Horodecki-review}. The amount of pure entanglement produced by the process is measured by the number of `ebits' --- an ebit, represented by the two-qubit maximally entangled state $\ket{\Phi_2} \coloneqq \frac{1}{\sqrt2} \left(\ket{00} + \ket{11}\right)$, is the fundamental unit of entanglement.

The ultimate efficiency of distillation can be defined as the maximum number of ebits that can be extracted with operations in $\OO$ per copy of $\rho_{AB}$, in the asymptotic limit in which many copies of $\rho_{AB}$ are available. This figure of merit is called the \deff{$\OO$-distillable entanglement}, and it is denoted by $E_{d,\OO}(\rho)$. The reverse task to entanglement distillation, entanglement dilution~\cite{Bennett-distillation, Bennett-error-correction, Hayden-EC, Buscemi2011, EC-infinite}, is the process of consuming ebits to prepare mixed states, and it is also important~\cite{Acin2005, Miller2022, Berta2013, Wilde2018}. Its ultimate efficiency, the \deff{$\OO$-entanglement cost}, is denoted by $E_{c,\OO}(\rho)$. Historically, the first set of free operations to be studied has been that of LOCC, which models what distant parties can do when they have access to local quantum computers but can communicate only classical messages~\cite{Bennett-distillation, LOCC}. Although this setting is operationally very relevant, there are compelling reasons to go beyond it. First, both the LOCC-entanglement cost and especially LOCC-distillable entanglement are very difficult to compute or even approximate. This is due in part to the famous disproof~\cite{Hastings2008} of the additivity conjectures~\cite{Shor2004} in quantum information, and in part to the intricate mathematical structure of the LOCC set~\cite{LOCC}. To improve this state of affairs, we can study simpler classes of free operations than LOCC. In particular, larger classes $\OO \supseteq \locc$ lead to lower and upper bounds on the distillable entanglement and on the entanglement cost, because $E_{c,\,\OO}(\rho)\leq E_{c,\,\locc}(\rho)$ and $E_{d,\,\OO}(\rho)\geq E_{d,\,\locc}(\rho)$~\cite{negativity, plenioprl, Rains2001}. 

But there is a more fundamental reason to go beyond the LOCC paradigm. In~2005, M.\ B.\ Plenio asked whether it would be possible to formulate a theory of entanglement manipulation in which entanglement can be converted \emph{reversibly} between mixed and pure states~\cite{OpenProblemArxiv, OpenProblem}. Plenio's question was motivated by the analogy with thermodynamics~\cite{Popescu1997, vedral_2002, Horodecki2002}, in which work and heat can be reversibly interconverted by means of Carnot cycles~\cite{CARNOT}. A thermodynamical, reversible theory of entanglement would correspond to a set of free operations $\OO$ such that the $\OO$-distillable entanglement and the $\OO$-entanglement cost coincide for all states, i.e.
\bb
E_{d,\,\OO}(\rho_{AB}) \overset{?}{=} E_{c,\,\OO}(\rho_{AB})\qquad \forall\ \rho=\rho_{AB}\, .
\label{Plenio_problem}
\ee
\emph{The resulting entanglement measure would therefore be the unique asymptotic entanglement quantifier} within the theory, mimicking the role of entropy in thermodynamics and simplifying the overall theoretical landscape. Reversibility implies that the transformations $\Phi_2^{\otimes n} \lowarrow{\OO} \rho_{AB}^{\otimes m} \lowarrow{\OO} \Phi_2^{\otimes (n-o(n))}$ become possible with asymptotically sub-linear ($o(n)$) waste of entanglement (cf.~\cite{Kumagai2013}). Importantly, a reversible set $\OO$ would need to be larger than LOCC~\cite{Vidal-irreversibility} and even larger than the set of all non-entangling, or separability-preserving, operations~\cite{irreversibility}.

Looking beyond LOCC has the added advantage that, following the paradigm of quantum resource theories~\cite{RT-review}, the results derived for general classes of free operations $\OO$ can be more easily extended to other quantum resources, where the definition of LOCC, which is specific to entanglement theory, would not carry over. Among these, a prominent role is played by the resource theory of non-stabiliser quantum computation, informally known as `quantum magic', that is key to universal quantum computation~\cite{Bravyi-Kitaev, Veitch2014}.

In their 2008 ground-breaking paper~\cite{BrandaoPlenio1}, Brand\~{a}o and Plenio have addressed the above problem, proposing a theory of entanglement manipulation that would achieve reversibility by employing so-called `asymptotically non-entangling' (ANE) operations. These are defined axiomatically as those that inject at most a `vanishingly small amount' of entanglement in the system, as measured by a prescribed entanglement measure. Brand\~{a}o and Plenio computed the corresponding distillable entanglement and entanglement cost~\cite{BrandaoPlenio2}:
\bb
E_{d,\,\ane} (\rho_{AB}) &= \stein(\rho_{AB} \,\|\, \SEP)\, , \\
E_{c,\,\ane} (\rho_{AB}) &= E_R^\infty(\rho_{AB})\, .
\label{connection_entanglement_testing_distillation}
\ee
What is remarkable here is that the ANE distillable entanglement is given precisely by the Stein exponent associated with entanglement testing. This provides a \emph{surprising connection between entanglement manipulation and entanglement testing}, that has recently been shown to be even more far-reaching than previously envisaged~\cite{DNE-distillable}.

\subsection{Generalised quantum Stein's lemma}

The strength of the Brand\~{a}o--Plenio connection~\eqref{connection_entanglement_testing_distillation} is that it allows us to rephrase an entanglement distillation problem as a hypothesis testing problem. To put it to good use, we need to solve the latter. In a pioneering work~\cite{Brandao2010}, Brand\~{a}o and Plenio themselves proposed a solution to this question, in the form of a beautiful equality known as the \deff{generalised quantum Stein's lemma}:

\begin{center}
\begin{minipage}{.75\textwidth}
\emph{The Stein exponent for entanglement testing equals the regularised relative entropy of entanglement}:
\end{minipage}
\end{center}
\vspace{-4.5ex}
\bb
\stein(\rho\,\|\,\sep) = E_R^\infty(\rho_{AB})\, .
\label{GQSL}
\ee

However, recently a serious gap in the original proof by Brand\~ao and Plenio was found~\cite{gap, gap-comment}, casting doubts on the validity of the above identity. In this work, we provide an alternative proof of the generalised quantum Stein's lemma~\eqref{GQSL}, shedding light on the asymptotic theory of entanglement manipulation and on the framework of quantum resource theories as a whole. In fact, our approach extends to general quantum resources beyond entanglement, provided that they obey the five axioms proposed by Brand\~{a}o and Plenio~\cite{Brandao2010}.

The generalised quantum Stein's lemma has been recently proved also by Hayashi and Yamasaki~\cite{Hayashi-Stein}. The independent proof presented here relies, however, on very different techniques that we believe to be of independent interest in (quantum) information theory, mainly the \emph{blurring lemma} and the \emph{bosonic lifting}. In its basic classical version, the blurring lemma outlines a procedure to make an unknown probability distribution more `regular', by adding some noise to it in such a way as to `smear it' over nearby type classes. Proving the quantum blurring lemma necessitates our bosonic lifting technique, which translates a sequence of problems over many copies of a finite-dimensional quantum system into a \emph{single} problem in an infinite-dimensional Hilbert space. The key advantage of this translation, which effectively amounts to a second quantisation, is that it enables the use of techniques from continuous-variable quantum information. We believe that this technique may find applicability elsewhere in information theory, where the very definition of rate requires the study of asymptotic sequences of problems.

We describe these techniques below in greater detail; here we remark that they present some notable advantages over the approach by Hayashi and Yamasaki~\cite{Hayashi-Stein}. First, they unify the calculations of the generalised Stein exponent~\eqref{Stein_informal} (i.e.\ the proof of~\eqref{GQSL}) and of the generalised Sanov exponent~\eqref{Sanov_informal}. 
The latter is derived in~\cite{generalised-Sanov}, where a complementary `generalised Sanov theorem' is established. Notably, the argument presented there relies crucially on our (classical) blurring lemma. 
Secondly, we obtain almost for free an extension of optimal resource testing to `almost power states'~\cite[p.~803]{Brandao2010}, which allows us to explore quantum hypothesis testing beyond the i.i.d.\ regime in \emph{both} arguments, i.e.\ not only on the alternative hypothesis but also on the null hypothesis. (And it brings us closer to restoring the original Brand\~{a}o--Plenio argument, albeit with a mathematically different approach.) 

On the one hand, for general quantum resources, our approach uses of all the original five Brand\~{a}o--Plenio axioms, while the one by Hayashi and Yamasaki~\cite{Hayashi-Stein} removes two of them. We did not attempt to extend our result in this direction because said axioms are anyway satisfied for most resources theories of practical interest, including entanglement and magic as well as many others~\cite{RT-review}. 

The rest of the paper is organised as follows. In Section~\ref{sec_main_result} we present our main result in full generality, discussing some of its manifold implications and giving a more detailed overview of the techniques used to prove it. In Section~\ref{sec_notation} we introduce the notation. In Section~\ref{sec_classical_solution} we present our solution to the \emph{classical} generalised Stein's lemma, which is preparatory to the quantum solution and gives a slightly stronger result. Section~\ref{subsec_intuitive_description_classical} features an informal description of \emph{blurring}, one of our two key techniques, at the classical level. The classical blurring lemma can be found in Section~\ref{subsec_classical_blurring_lemma}; the proof of the classical generalised Stein's lemma follows immediately (Section~\ref{subsec_classical_proof}). 
The quantum solution is presented in Section~\ref{sec_quantum_solution}. First, in Section~\ref{subsec_informal_quantum} we provide an informal description of it; the formal statement of the quantum blurring lemma is in Section~\ref{subsec_quantum_blurring_lemma}; we postpone its proof and present first the solution of the generalised quantum Stein's lemma \emph{assuming} the quantum blurring lemma (Section~\ref{subsec_proof_GQSL}). The technical core of the paper is the proof of the quantum blurring lemma, which occupies Sections~\ref{subsec_technical_preliminaries}--\ref{subsec_proof_quantum_blurring_lemma}. In Section~\ref{sec_extension_GQSL_almost_iid} we show how to further extend our main result to almost power states.

\section{Main result} \label{sec_main_result}

Our main result holds not only for entanglement theory, but for the broad class of quantum resource theories satisfying the list of assumptions proposed by Brand\~{a}o and Plenio~\cite{Brandao2010}. Let $\HH$ be a Hilbert space of finite dimension $d$. We consider a family $(\FF_n)_n$ of sets of `free states'. In practice, each $\FF_n$ is a subset of the set of density operators on $n$ copies of the system, in formula $\FF_n \subseteq \D\big(\HH^{\otimes n}\big)$. The sets of free states should satisfy the following \emph{Brand\~{a}o--Plenio axioms}~\cite[p.~795]{Brandao2010}:
\begin{enumerate}
    \item Each $\FF_n$ is a convex and closed subset of $\D(\HH^{\otimes n})$.
    \item $\FF_1$ contains a full-rank state $\sigma_0$, i.e.\ $\FF_1 \ni \sigma_0 \geq c\id > 0$.
    \item The family $(\FF_{n})_n$ is closed under partial traces, i.e.\ if $\sigma\in \FF_{n+1}$ then $\Tr_{n+1}\sigma \in \FF_n$, where $\Tr_{n+1}$ denotes the partial trace over the last subsystem.
    \item The family $(\FF_{n})_n$ is closed under tensor products, i.e.\ if $\sigma\in \FF_n$ and $\sigma'\in \FF_m$ then $\sigma\otimes \sigma'\in \FF_{n+m}$.
    \item Each $\FF_n$ is closed under permutations, i.e.\ if $\sigma\in \FF_n$ and $\pi\in S_n$ denotes an arbitrary permutation of a set of $n$ elements, then also $U_\pi^{\vphantom{\dag}} \sigma U_\pi^\dag\in \FF_n$, where $U_\pi$ is the unitary implementing $\pi$ over $\HH^{\otimes n}$. 
\end{enumerate}

These axioms are conceived as an abstraction of the properties of separable states~\cite{Werner}, and indeed the family of sets of separable states $\FF_n = \SEP_{A^n:B^n}$, where
\bb
\SEP_{A:B} \coloneqq \co \left\{ \alpha_A \!\otimes\! \beta_B\!:\, \alpha_A\!\in\! \D(\HH_{A}),\, \beta_B\!\in\! \D(\HH_{B}) \right\}\! ,
\label{separable_states}
\ee
where $\alpha_A$ and $\beta_B$ are arbitrary density operators on $\HH_A$ and $\HH_B$, respectively, satisfies Axioms~1--5. The task of entanglement testing described in Section~\ref{sec_intro} can be rather easily generalised to that of general \deff{quantum resource testing}, where the alternative hypothesis of the $n$-copy problem includes all free states $\FF_n$ (cf.\ Figure~\ref{entanglement_testing_fig}). Our main result can then be stated as follows.

\vspace{1ex}
\begin{boxed}{}
\begin{thm}[(Generalised quantum Stein's lemma)] \label{GQSL_thm}
Let $\HH$ be a finite-dimensional Hilbert space, and let $(\FF_n)_n$ be a sequence of sets of states $\FF_n\subseteq \D\big(\HH^{\otimes n}\big)$ that obeys the Brand\~{a}o--Plenio axioms (Axioms~1--5 in Section~\ref{sec_main_result}). Then, for all $\rho\in \D(\HH)$, the hypothesis testing relative entropy of resource testing, defined by~\eqref{hypothesis_testing_relative_entropy} with the convention~\eqref{relent_resource}, satisfies that
\bb
\lim_{n\to\infty} \frac1n\, \rel{D_H^\e}{\rho^{\otimes n}}{\FF_n} = D^\infty(\rho\|\FF) \quad\ \forall\, \e\!\in\! (0,1)\, ,
\label{GQSL_resources_D_H}
\ee
implying that
\bb
\stein(\rho\|\FF) = D^\infty(\rho\|\FF)\, .
\label{GQSL_resources}
\ee
In particular, the Stein exponent associated to entanglement testing equals the regularised relative entropy of entanglement, i.e.~\eqref{GQSL} holds for all finite-dimensional bipartite states $\rho_{AB}$.
\end{thm}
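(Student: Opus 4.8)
The plan is to prove the two matching bounds
\[
\limsup_{n\to\infty}\frac1n\,\rel{D_H^\e}{\rho^{\otimes n}}{\FF_n}\leq D^\infty(\rho\|\FF)\leq\liminf_{n\to\infty}\frac1n\,\rel{D_H^\e}{\rho^{\otimes n}}{\FF_n}
\]
for every $\e\in(0,1)$; then~\eqref{GQSL_resources_D_H} holds and~\eqref{GQSL_resources} is immediate from the definition of the Stein exponent. Since each $\FF_n$ is convex and closed (Axiom~1) and the set of tests $\{M:0\le M\le\id\}$ is convex and compact, Sion's minimax theorem lets me exchange the minimisation over tests with the maximisation over the alternative hypothesis, yielding the single-state reformulation $\rel{D_H^\e}{\rho^{\otimes n}}{\FF_n}=\min_{\sigma\in\FF_n}\rel{D_H^\e}{\rho^{\otimes n}}{\sigma}$. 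The composite problem thereby splits into an \emph{upper} bound, for which a single well-chosen $\sigma_n$ suffices, and a \emph{lower} bound that must hold \emph{uniformly} over all $\sigma\in\FF_n$. I also record that $E_R(\rho^{\otimes n})$ is subadditive in $n$ --- because $\sigma_n^*\otimes\sigma_m^*\in\FF_{n+m}$ by Axiom~4 --- so by Fekete's lemma the limit $D^\infty(\rho\|\FF)=\lim_n\frac1n E_R(\rho^{\otimes n})=\inf_n\frac1n E_R(\rho^{\otimes n})$ exists. Finally, $\rel{D_H^\e}{\rho^{\otimes n}}{\FF_n}$ is monotone in $\e$, so it is enough to prove the lower bound for arbitrarily small $\e$ and the upper bound for $\e$ close to $1$.

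\textbf{The converse (upper bound).} For each $n$ let $\sigma_n^*\in\FF_n$ attain $\rel{D}{\rho^{\otimes n}}{\sigma_n^*}=E_R(\rho^{\otimes n})$; the minimiser exists by compactness and lower semicontinuity of the relative entropy, and Axiom~2 supplies a full-rank free state ensuring finiteness. Since $\rel{D_H^\e}{\rho^{\otimes n}}{\FF_n}\le\rel{D_H^\e}{\rho^{\otimes n}}{\sigma_n^*}$, it suffices to control the latter. A weak converse follows at once from the single-shot bound $\rel{D_H^\e}{\omega}{\sigma}\le\frac{1}{1-\e}\big(\rel{D}{\omega}{\sigma}+h_2(\e)\big)$, giving $\limsup_n\frac1n\rel{D_H^\e}{\rho^{\otimes n}}{\FF_n}\le\frac{1}{1-\e}D^\infty(\rho\|\FF)$, hence the bound after $\e\to0^+$. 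To obtain the \emph{strong} converse valid for every fixed $\e$, I would instead feed in the sandwiched Rényi divergences $\widetilde D_\alpha$ for $\alpha>1$, using $\rel{D_H^\e}{\omega}{\sigma}\le\widetilde D_\alpha(\omega\|\sigma)+\frac{\alpha}{\alpha-1}\log\frac{1}{1-\e}$ together with a regularised single-letterisation $\lim_{\alpha\to1^+}\lim_n\frac1n\min_{\sigma\in\FF_n}\widetilde D_\alpha(\rho^{\otimes n}\|\sigma)=D^\infty(\rho\|\FF)$ (provable by standard Rényi-divergence techniques, the minimum being subadditive); the $\e$-dependent term is $O(1/n)$ and drops out. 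This direction never exploits the composite structure beyond the choice of one free state.

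\textbf{The achievability (lower bound): the crux.} Here I must show $\liminf_n\frac1n\rel{D_H^\e}{\rho^{\otimes n}}{\sigma_n}\ge D^\infty(\rho\|\FF)$ \emph{simultaneously} for every sequence $\sigma_n\in\FF_n$. First I would symmetrise: replacing $\sigma_n$ by its twirl $\frac{1}{n!}\sum_{\pi\in S_n}U_\pi^{\vphantom{\dag}}\sigma_n U_\pi^\dag$, which stays in $\FF_n$ by Axioms~1 and~5, can only decrease $\rel{D_H^\e}{\rho^{\otimes n}}{\cdot}$ by data processing (as $\rho^{\otimes n}$ is permutation invariant), so I may assume $\sigma_n$ is permutation symmetric. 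The difficulty is that a symmetric free state can concentrate on the same typical subspace as $\rho^{\otimes n}$ and so be hard to reject; the method-of-types and quantum de Finetti reductions one would like to use to single-letterise the bound are obstructed because the de Finetti approximation is too lossy in this regime --- exactly the point at which the original argument broke down. Following the two advertised techniques, my strategy is: apply the \emph{blurring} map to redistribute $\sigma_n$ evenly across neighbouring type classes, which in the classical (commuting) case lets the method of types and tensor-stability (Axiom~4) close the gap directly; in the genuinely quantum case, lift the problem by a \emph{second-quantisation} step to an infinite-dimensional bosonic system, where blurring becomes a pure loss channel, and invoke the quantum blurring lemma (assumed here) together with continuous-variable estimates to extract $\rel{D_H^\e}{\rho^{\otimes n}}{\sigma_n}\ge n\big(D^\infty(\rho\|\FF)-o(1)\big)$ uniformly in $\sigma_n$.

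\textbf{Main obstacle.} The entire weight of the theorem rests on the uniform lower bound of the previous paragraph. The obstacle is structural: $\FF_n$ contains permutation-symmetric states that are highly correlated across the $n$ copies and need not be well approximated by i.i.d.\ states, so additivity of the relative entropy cannot be invoked directly and the regularisation refuses to single-letterise by elementary means. Overcoming this is precisely the role of the blurring lemma and its second-quantised incarnation as a loss channel; once that lemma is in hand, combining it with symmetrisation, tensor-stability, and Fekete's lemma yields the matching lower bound for all $\e\in(0,1)$, completing the proof. Specialising $\FF_n=\sep(A^n\!:\!B^n)$, which satisfies Axioms~1--5, then recovers~\eqref{GQSL} for entanglement testing.
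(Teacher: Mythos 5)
Your high-level architecture --- converse via a single well-chosen free state (with a R\'enyi argument for the strong converse), achievability as the crux, blurring and second quantisation as the tools --- matches the paper's, and your converse sketch is sound: the single-letterisation $\lim_{\alpha\to1^+}\lim_n\frac1n\min_{\sigma\in\FF_n}\widetilde D_\alpha(\rho^{\otimes n}\|\sigma)=D^\infty(\rho\|\FF)$ does follow from subadditivity together with $\widetilde D_\alpha\to D$ at fixed block length (the paper simply imports this direction from Brand\~{a}o--Plenio rather than reproving it). The genuine gap is in the achievability direction, which you compress into ``invoke the quantum blurring lemma''. Even granting that lemma, the passage from it to the uniform bound $\rel{D_H^\e}{\rho^{\otimes n}}{\FF_n}\geq n\left(D^\infty(\rho\|\FF)-o(1)\right)$ is a substantial argument that your proposal does not supply, and the one concrete thing you do say about it --- that the blurring map is applied to $\sigma_n$ --- points the wrong way. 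The paper argues by contradiction from $\rel{D_{\max}^\e}{\rho^{\otimes n}}{\FF_n}\leq n\lambda$ with $\lambda<D^\infty(\rho\|\FF)$, extracting a state $\rho_n$ with $\frac12\|\rho_n-\rho^{\otimes n}\|_1\leq\e$ and $\rho_n\leq 2^{n\lambda}\sigma_n$, and it is $\rho_n$ (not $\sigma_n$) that gets blurred: the quantum blurring lemma's hypothesis is precisely that the input is $\e$-close to $\rho^{\otimes n}$ with $\e<1$, a property $\sigma_n$ need not have.

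The missing chain is: (i)~pass from $D_H^\e$ to the smoothed max-relative entropy via the known asymptotic equivalences, so that the hypothesis-testing statement becomes the operator inequality $\rho_n\leq 2^{n\lambda}\sigma_n$; (ii)~show that blurring injects only $O(\Delta n\log(1/c))$ of resource (Axioms~2--5), so that $\rel{D_{\max}}{\widetilde\rho_n}{\FF_n}\leq n\left(\lambda+\Delta\log(1/c)\right)$ for the blurred state $\widetilde\rho_n$; (iii)~use the blurring lemma to get $\Tr\left(\rho^{\otimes n}-2^{n\zeta}\widetilde\rho_n\right)_+\leq\eta$, i.e.\ $\rel{\widetilde D_{\max}^\eta}{\rho^{\otimes n}}{\widetilde\rho_n}\leq n\zeta$, and combine with~(ii) via a triangle inequality to bound $\rel{\widetilde D_{\max}^\eta}{\rho^{\otimes n}}{\FF_n}$; and (iv)~convert this, for arbitrarily small $\eta$, into $D^\infty(\rho\|\FF)\leq\lambda+\zeta+\Delta\log(1/c)+\eta\log(1/c)$ using asymptotic continuity of $D(\cdot\,\|\,\FF_n)$, which is the step where the regularised quantity finally enters and the contradiction is reached. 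None of these steps is automatic --- step~(iv) in particular is essential and absent from your outline --- so as written the proposal is a plan rather than a proof of the theorem, even modulo the blurring lemma.
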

\end{boxed}

Theorem~\ref{GQSL_thm} characterises the ultimate efficiency of entanglement testing, providing us with the formula~\eqref{GQSL} to compute it. Furthermore, it also entails that all four quantities appearing in~\eqref{connection_entanglement_testing_distillation} actually coincide:
\bb
E_{d,\,\ane} (\rho) = \stein(\rho \,\|\, \SEP) = E_R^\infty(\rho) = E_{c,\,\ane} (\rho)
\ee
for all $\rho = \rho_{AB}$. This solves immediately the Plenio problem~\eqref{Plenio_problem}, establishing a thermodynamical, fully reversible theory of entanglement manipulation under ANE operations; furthermore, it endows the regularised relative entropy of entanglement with a double operational meaning in entanglement testing and entanglement manipulation.

Theorem~\ref{GQSL_thm} encompasses a broad class of quantum resource theories~\cite{RT-review}, including quantum coherence~\cite{coherence-review}, athermality~\cite{thermo-review}, and the aforementioned `quantum magic'~\cite{Bravyi-Kitaev, Veitch2014}. It thus provides a much clearer understanding not only of entanglement theory, but indeed of all quantum resource theories. 

To prove Theorem~\ref{GQSL_thm} we introduce two new techniques. The first is a physical procedure that we call `blurring', which allows us to deal with the composite nature of the alternative hypothesis --- i.e.\ the fact that it might involve non-i.i.d.\ states. When applied to a permutationally invariant classical probability distribution over an $n$-copy alphabet, blurring makes it more evenly spread across nearby type classes (see Figure~\ref{blurring_fig} below for a pictorial representation). As it turns out, this technique alone suffices to prove a classical version of the generalised Stein's lemma, but not the fully quantum one. We then introduce our second technical innovation, which is to lift the problem to an infinite-dimensional bosonic quantum system by means of a second quantisation step; the problem will then be solved there with techniques from continuous-variable quantum information. We found rather remarkable, albeit not entirely unexpected, that the second-quantised action of the blurring map is represented by a bosonic quantum channel that is closely related to the pure loss channel, which has been widely studied in continuous-variable quantum information~\cite{holwer, Caruso2006, Wolf2006, Wolf2007}. This second quantisation step is the most technically delicate part of our proof. 

In the separate paper~\cite{generalised-Sanov} we use the same technique of blurring, and in particular the classical blurring lemma below (Lemma~\ref{blurring_lemma}), to solve also the complementary problem of calculating the Sanov exponent~\eqref{Sanov_informal} for entanglement testing and general resource testing. Interestingly, while the Stein exponent is given by a regularised expression (the regularised relative entropy of entanglement), the Sanov exponent can be expressed as a \emph{single-letter} formula, which makes it more computationally tractable.

In Section~\ref{sec_extension_GQSL_almost_iid}, we show that Theorem~\ref{GQSL_thm} can be extended rather easily to encompass also a modified version of entanglement testing in which one is no longer promised that the null hypothesis is perfectly i.i.d., but merely \emph{approximately} so, in the sense that a constant number of output copies can end up corrupted, i.e.\ not equal to $\rho$ and even possibly correlated. The intuition is that since the number of corrupted sites is anyway asymptotically constant, it should not affect an extensive quantity such as the Stein exponent. When one accounts for quantum correlations, this notion of approximate i.i.d.-ness can be formalised by means of \emph{almost power states}, a class of states introduced by Brand\~{a}o and Plenio themselves in the course of their proof attempt~\cite[p.~803]{Brandao2010}. Interestingly, with our techniques we are able to recover a weaker case of their Lemma~III.7, whose failure was precisely the breaking point of their argument. We fall short of restoring their proof completely, but we find this a strong indication that their overall proof strategy may be salvageable.

\section{Notation} \label{sec_notation}

\subsection{Classical notation}

In what follows, $\XX$ will denote a finite classical alphabet. The set of probability distributions over $\XX$, i.e.\ the set of functions $p:\XX\to [0,1]$ such that $\sum_x p(x) =1$, will be indicated as $\PP(\XX)$. To unify the notation to the quantum one, we will denote the \deff{total variation distance} between two probability distributions $p,q\in \PP(\XX)$ as
\bb
\frac12 \|p-q\|_1 \coloneqq \frac12 \sum_x |p(x) - q(x)|\, .
\label{TV_distance}
\ee
We will also sometimes use the distance
\bb
\|p-q\|_\infty \coloneqq \max_x |p(x) - q(x)|\, .
\label{infinity_distance_classical}
\ee
Note that $\frac12 \|p-q\|_1 \geq \|p-q\|_\infty$ for all pairs of normalised probability distributions $p,q\in \PP(\XX)$. For some positive integer $n\in \N^+$, $x^n$ will represent a generic sequence of length $n$ composed of symbols belonging to $\XX$. The set of all such sequences will be denoted as $\XX^n$.

Let $\XX$ be a finite alphabet, and let $n\in \N^+$ be a positive integer. An \deff{$\boldsymbol{n}$-type} on $\XX$ (or, simply, a type) is a probability distribution $t_n$ on $\XX$ such that $nt_n(x)\in \N$ is an integer for all $n\in \N$~\cite{CSISZAR-KOERNER}. Therefore, the set of all $n$-types on $\XX$ can be defined as
\bb
\mathcal{T}_n \coloneqq \left\{ \left(\frac{k_x}{n}\right)_{\!\!x\in \XX}\!\!:\ k_x\!\in\! \N\ \ \forall x\!\in\! \XX,\ \sum_{x\in\XX} k_x = n\right\} .
\ee
If $n$ is fixed or clear from the context, or simply for the sake of compactness, we will occasionally denote an $n$-type with $t$ instead of $t_n$. A well-known counting argument shows that
\bb
\left| \mathcal{T}_n \right| = \binom{n+|\XX|-1}{|\XX|-1}\leq (n+1)^{|\XX|-1}\, ,
\label{counting_types}
\ee
where $|\XX|$ is the cardinality of $\XX$. For a given $t_n \in \mathcal{T}_n$, the associated \deff{type class} $T_{n,t_n}$ is the set of sequences of length $n$ made from elements in $\XX$ that have type $t_n$. In formula,
\bb
T_{n,t_n} \coloneqq \left\{ x^n\in \XX^n\!:\ N(x|x^n) = n t_n(x)\ \ \, \forall\ x\!\in\! \XX \right\} ,
\label{type_class}
\ee
where $N(x|x^n)$ denotes the number of times the symbol $x$ appears in the sequence $x^n$. Clearly, any sequence in $T_{n,t_n}$ can be obtained from any other such sequence by applying a suitable permutation. The cardinality of $T_{n,t_n}$ can be calculated as
\bb
\big| T_{n,t_n} \big| = \frac{n!}{\prod_{x\in \XX} \big(nt_n(x)\big)!} \eqqcolon \binom{n}{nt_n}\, .
\label{size_type_class}
\ee

\subsection{Quantum notation}

\subsubsection{States and channels}

The set of states on a quantum system represented by a separable Hilbert space $\HH$ is identified with the set of \deff{density operators} on $\HH$. A trace class operator $\rho$ on $\HH$ is a density operator if it is positive semi-definite, denoted $\rho\geq 0$, i.e.\ such that $\braket{\psi|\rho|\psi} \geq 0$ for all $\ket{\psi}\in \C^d$, and of trace one, i.e.\ such that $\Tr \rho = \sum_x \braket{x|\rho|x} = 1$. Here, $\{\ket{x}\}_x$ represents an orthonormal basis of $\HH$, either finite or numerable (because $\HH$ is separable), and the sum is well defined because each term is non-negative. 

In what follows, we will denote with $\T(\HH)$ and $\D(\HH)$ the sets of trace class operators and of density operators on $\HH$, respectively. Note that $\T(\HH)$ can be thought of as a Banach space once it is equipped with the \deff{trace norm} $\|\cdot\|_1$ defined by $\|X\|_1\coloneqq \Tr |X|$, where $|X|\coloneqq \sqrt{X^\dag X}$ denotes the operator absolute value. If $X = X^\dag$ is self-adjoint with spectrum $(x_i)_i$, then $\|X\|_1 = \sum_i |x_i|$. In most of the paper we will be concerned only with finite-dimensional Hilbert spaces; however, right at the end of the proof of the generalised quantum Stein's lemma we will need to lift the problem to an infinite-dimensional space in order to solve it. When $\HH$ is finite dimensional, we will also denote with $\LL(\HH)$ the space of all linear operators on $\HH$.

Given two quantum systems $A$ and $B$ with Hilbert spaces $\HH_A$ and $\HH_B$, the composite quantum system $AB$ is represented by the tensor product Hilbert space $\HH_{AB} = \HH_A \otimes \HH_B$. For some positive integer $n\in \N^+$, we will also denote with $A^n$ the system with Hilbert space $\HH_{A^n} = \HH_A^{\otimes n}$ obtained by joining $n$ copies of $A$.

The distance between density operators is most commonly measured with the \deff{trace distance}, given by $\frac12 \|\rho - \sigma\|_1$ for $\rho,\sigma\in \D(\HH)$. Note that $\frac12 \|\rho - \sigma\|_1\in [0,1]$, and that the trace distance between two (normalised) pure states is given by
\bb
\frac12 \left\|\ketbra{\psi} - \ketbra{\phi} \right\|_1 = \sqrt{1 - |\!\braket{\psi|\phi}\!|^2}\, .
\label{trace_distance_pure_states}
\ee
An alternative measure of distance is the \deff{fidelity}, given by~\cite{Uhlmann-fidelity}
\bb
F(\rho,\sigma) \coloneqq \big\|\sqrt{\rho}\sqrt{\sigma}\big\|_1\, .
\label{fidelity}
\ee
Trace distance and fidelity are related by inequalities discovered by Holevo~\cite{Holevo1972} and Fuchs--van de Graaf~\cite{Fuchs1999}:
\bb
1 - F(\rho,\sigma) \leq 1 - \Tr \big[\sqrt\rho \sqrt\sigma \big] \leq \frac12 \left\|\rho-\sigma\right\|_1 \leq \sqrt{1 - F(\rho,\sigma)^2}\, .
\label{Holevo_Fuchs_van_de_Graaf}
\ee
Since these inequalities are independent of the dimension, we can think of the trace distance and the fidelity as essentially equivalent for our purposes.

A \deff{quantum channel}, or, simply, a channel, is a bounded linear map $\Lambda:\T(\HH_1) \to \T(\HH_2)$ that is completely positive, meaning that $\Lambda\otimes I_{\HH'}$ maps positive semi-definite operators on $\HH_1\otimes \HH'$ to positive semi-definite operators on $\HH_2\otimes \HH'$, for all auxiliary Hilbert spaces $\HH'$, and trace preserving, meaning that $\Tr \Lambda(X) = \Tr X$ for all input operators $X$~\cite{Stinespring, Jamiolkowski72, Choi, KRAUS}. In this paper we will also encounter \deff{sub-channels}, which are complete positive as well but only trace non-increasing, i.e.\ such that $\Tr \Lambda(X) \leq \Tr X$ for all $X$. The trace norm is contractive with respect to all (sub-)channels, meaning that~\cite{Ruskai1994}
\bb
\big\| \Lambda(X) \big\|_1 \leq \|X\|_1\qquad \forall\ X\in \T(\HH_1)
\label{contractivity_trace_norm}
\ee
whenever $\Lambda$ is a (sub-)channel.

In what follows, we will often need to study an operator function known as the \deff{positive part}. This is defined as follows: for any given self-adjoint trace class operator $X = X^\dag\in \TT(\HH)$ with spectral decomposition $X=\sum_i x_i P_i$, and all $x_i$'s distinct, its positive part is defined as
\bb
X_+ \coloneqq \sum_i \max\{x_i,0\}\, P_i\, .
\label{positive_part}
\ee
Note that the trace of the positive part can be written as
\bb
\Tr X_+ = \frac{\Tr X + \|X\|_1}{2}\, .
\label{relation_trace_positive_part_trace_norm}
\ee
The above function enjoys a wealth of well-known properties, collected in the lemma below. We provide also a brief proof for completeness.

\begin{lemma} \label{variational_program_trace_X_plus_lemma}
For all self-adjoint trace class operators $X=X^\dag \in \TT(\HH)$, it holds that
\begin{align}
\Tr X_+ &= \min\left\{ \Tr Y:\, Y\geq 0,\ Y\geq X\right\} \label{first_variational_program_trace_X_plus} \\
&= \max \left\{ \Tr QX:\, 0\leq Q\leq \id \right\} . \label{second_variational_program_trace_X_plus}
\end{align}
In particular, the trace of the positive part:
\begin{enumerate}[(a)]
\item is monotonically non-decreasing with respect to the positive semi-definite order, i.e.\ $X \leq X'$ implies $\Tr X_+ \leq \Tr X'_+$;
\item is sub-additive, meaning that $\Tr (X+Y)_+ \leq \Tr X_+ + \Tr Y_+$; and
\item obeys the data processing inequality, i.e.\ it is non-increasing under quantum channels.
\end{enumerate}
\end{lemma}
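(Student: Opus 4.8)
The plan is to prove the two variational formulas~\eqref{first_variational_program_trace_X_plus}--\eqref{second_variational_program_trace_X_plus} first, and then read off the three structural properties~(a)--(c) as essentially immediate consequences. Throughout, I would fix the spectral projector $P$ of $X$ onto its strictly positive eigenspace, so that $X_+ = PXP$ and hence $\Tr X_+ = \Tr(PX)$, and I would write $X_- \coloneqq X_+ - X \geq 0$ for the negative part. The whole point of the lemma is that these two dual descriptions of $\Tr X_+$ package all the analytic content, after which monotonicity, sub-additivity, and data processing become one-line deductions.

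To establish the minimisation formula~\eqref{first_variational_program_trace_X_plus}, I would argue by two inequalities. Feasibility of $X_+$ is immediate, since $X_+ \geq 0$ and $X_+ - X = X_- \geq 0$, so the minimum is at most $\Tr X_+$. For the reverse bound, given any feasible $Y$ with $Y \geq 0$ and $Y \geq X$, I would sandwich with $P$: from $Y \geq X$ one gets $\Tr(PYP) \geq \Tr(PXP) = \Tr X_+$, while from $Y \geq 0$ and $0 \leq P \leq \id$ one gets $\Tr Y \geq \Tr(PYP)$; chaining these gives $\Tr Y \geq \Tr X_+$. For the maximisation formula~\eqref{second_variational_program_trace_X_plus}, the choice $Q = P$ already attains $\Tr(PX) = \Tr X_+$, and for the upper bound I would decompose $\Tr(QX) = \Tr(QX_+) - \Tr(QX_-)$ and use $0 \leq Q \leq \id$ with $X_+ \geq 0$ (so $\Tr(QX_+) \leq \Tr X_+$) together with $Q, X_- \geq 0$ (so $\Tr(QX_-) \geq 0$).

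With both formulas available, property~(a) follows from~\eqref{first_variational_program_trace_X_plus}: if $X \leq X'$, every $Y$ feasible for $X'$ is also feasible for $X$, so the feasible set shrinks and the minimum can only grow, giving $\Tr X_+ \leq \Tr X'_+$. Property~(b) follows likewise, since $X_+ + Y_+ \geq 0$ dominates $X + Y$ and is therefore feasible for $X + Y$, yielding $\Tr(X+Y)_+ \leq \Tr X_+ + \Tr Y_+$. For the data-processing inequality~(c) I would instead use~\eqref{second_variational_program_trace_X_plus} and pass to the adjoint $\Lambda^\dag$: completeness of positivity and trace preservation make $\Lambda^\dag$ positive and unital, so $0 \leq Q \leq \id$ forces $0 \leq \Lambda^\dag(Q) \leq \id$, whence $\Tr(Q\,\Lambda(X)) = \Tr(\Lambda^\dag(Q)\,X) \leq \Tr X_+$, and maximising over $Q$ gives $\Tr(\Lambda(X))_+ \leq \Tr X_+$. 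The same line covers sub-channels, where $\Lambda^\dag$ is only sub-unital, $\Lambda^\dag(\id) \leq \id$, which still forces $\Lambda^\dag(Q) \leq \id$.

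None of these steps is a genuine obstacle; the only point requiring a little care is the optimality direction of~\eqref{first_variational_program_trace_X_plus}, where the sandwiching by $P$ must be arranged so that both $\Tr(PYP) \geq \Tr(PXP)$ and $\Tr Y \geq \Tr(PYP)$ hold, the latter relying on positivity of $Y$ and on $(\id - P)$ being a projector. In the infinite-dimensional trace-class setting, $X$ is compact with purely discrete spectrum away from $0$, so $P$ and the splitting $X = X_+ - X_-$ are well defined and every trace above is finite, and the arguments go through verbatim.
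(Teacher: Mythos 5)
Your proof is correct and follows essentially the same route as the paper: both variational formulas are established by exhibiting the optimiser ($Y=X_+$, resp.\ $Q=P$) and bounding the other direction via the spectral data of $X$, exactly as in the paper's proof. The only cosmetic difference is that you derive (a)--(b) from the minimisation formula and (c) from the maximisation formula via the adjoint channel, whereas the paper does the reverse (using the max formula for (a)--(b) and the min formula plus positivity of $\Lambda$ applied to $X_+\geq 0,X$ for (c)); since both formulas are proved, either choice works, and your adjoint argument has the small bonus of covering sub-channels directly.
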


\begin{proof}
The inequality $\geq$ in~\eqref{first_variational_program_trace_X_plus} follows by setting $Y = X_+$, and the converse relation $\leq$ is deduced by taking matrix elements in the eigenbasis $\{\ket{\psi_i}\}_i$ of $X = \sum_i x_i \ketbra{\psi_i}$, which yields $\Tr Y = \sum_i \braket{\psi_i|Y|\psi_i} \geq \sum_i \max\{0, x_i\} = \Tr X_+$ for any $Y$ that satisfies both $Y\geq 0$ and $Y\geq X$. The proof of~\eqref{second_variational_program_trace_X_plus} is even simpler: from $X\leq X_+$ we get $\Tr QX \leq \Tr Q X_+ \leq \Tr X_+$ for all $Q$ such that $0\leq Q\leq \id$; setting $Q$ equal to the projector onto the span of the eigenvectors of $X$ with positive eigenvalues achieves this bound.

Claims~(a) and~(b) follow immediately from~\eqref{second_variational_program_trace_X_plus}. As for~(c), taking an arbitrary quantum channel $\Lambda:\TT(\HH) \to \TT(\HH')$, we have $\Tr \Lambda(X)_+ \leq \Tr \Lambda(X_+) = \Tr X_+$, where the inequality follows from~\eqref{first_variational_program_trace_X_plus} once one observes that $X_+ \geq 0$ and $X_+ \geq X$, and hence $\Lambda(X_+)\geq 0$ and $\Lambda(X_+) \geq \Lambda(X)$ because $\Lambda$ is (completely) positive. Note that, due to~\eqref{relation_trace_positive_part_trace_norm}, (b)~also follows from the triangle inequality for the trace norm, while (c)~is equivalent to the contractivity of the trace norm under (sub-)channels, expressed by~\eqref{contractivity_trace_norm}.
\end{proof}

\subsubsection{Relative entropies}

\begin{note}
Unless otherwise specified, all logarithms are to be taken to base $2$. Occasionally, we will also use the natural logarithm, denoted by $\ln$.
\end{note}

The quantum (or Umegaki) relative entropy between two finite-dimensional states $\rho,\sigma\in \D(\C^d)$ is given by the expression $D(\rho\|\sigma)  = \Tr\left[ \rho \left(\log \rho - \log \sigma\right)\right]$ (see~\eqref{Umegaki}). In their pioneering work~\cite{Hiai1991}, Hiai and Petz have shown that this quantity has an operational meaning as the asymptotic Stein exponent associated with quantum hypothesis testing of i.i.d.\ states.

However, while $D(\rho\|\sigma)$ is the operationally relevant quantity asymptotically, at the one-shot level other types of quantum relative entropies are more relevant. The first one is the \deff{hypothesis testing relative entropy}, given for two arbitrary states $\rho,\sigma\in \D(\C^d)$ and some $\e\in (0,1)$ by~\cite{Buscemi2010, Wang-Renner}
\bb
D_H^\e(\rho\|\sigma) \coloneqq - \log \min\left\{ \Tr Q\sigma:\ 0\leq Q\leq \id,\ \Tr Q\rho \geq 1-\e\right\} .
\label{hypothesis_testing_relative_entropy}
\ee
It is not difficult to verify that $2^{-D_H^\e(\rho\|\sigma)}$ coincides with the minimal type-II error probability when distinguishing $\rho$ from $\sigma$, with the type-I error probability bounded from above by $\e$.

Another type of one-shot quantum relative entropy is the \deff{max-relative entropy}, given by~\cite{Datta08}
\bb
D_{\max}(\rho\|\sigma) \coloneqq \min\left\{\lambda:\ \rho \leq 2^\lambda \sigma \right\} ,
\label{D_max}
\ee
where, as usual, $A\leq B$ means that $B-A$ is positive semi-definite. The operator monotonicity of the logarithm implies immediately that
\bb
D(\rho\|\sigma) \leq D_{\max}(\rho\|\sigma)\, .
\label{relation_D_D_max}
\ee
To obtain a truly one-shot quantity, however, we need to perform a suitable `smoothing' on $D_{\max}$, including an optimisation over nearby states. The \deff{smoothed max-relative entropy} is given by
\bb
D_{\max}^\e(\rho\|\sigma) \coloneqq \min_{\rho':\ \frac12 \left\|\rho - \rho'\right\|_1 \leq \e} D_{\max}(\rho' \| \sigma)\, ,
\label{smoothed_D_max}
\ee
where $\rho'$ is assumed to be a density operator on the same system as $\rho$ and $\sigma$, and the minimum exists because $D_{\max}(\cdot\|\cdot)$ is jointly lower semi-continuous. Another option for carrying out a similar smoothing procedure is to proceed directly at the operator level: the \deff{Datta--Leditzky smoothed max-relative entropy} is given by~\cite[Definition~4.1]{Datta-Leditzky}
\bb
\widetilde{D}_{\max}^\e(\rho\|\sigma) \coloneqq&\ \min\left\{ \lambda:\ \rho\leq 2^\lambda \sigma + \Delta ,\ \Delta\geq 0,\ \Tr \Delta \leq \e\right\} \\
=&\ \min\left\{ \lambda:\ \Tr(\rho - 2^\lambda \sigma)_+ \leq \e \right\} ,
\label{Datta_Leditzky}
\ee
where the equality between the two expressions follows from~\eqref{first_variational_program_trace_X_plus}.

The three quantities~\eqref{D_max}--\eqref{Datta_Leditzky} turn out to be closely related.  Informally,
\bb
D_{\max}^\e(\rho\|\sigma) \sim \widetilde{D}_{\max}^{\e'}(\rho\|\sigma) \sim D_H^{1 - \e''}(\rho\|\sigma)\, ,
\label{informal_wsc_duality}
\ee
where $\e' = \e'(\e)$ and $\e'' = \e''(\e)$ denote unspecified universal (continuous) functions of $\e$ with $0$ and $1$ among the fixed points. Which functions appear precisely will depend on what inequality one wishes to obtain. We will not review the specific form of these inequalities here; we refer the interested reader to the specialised literature~\cite{Tomamichel2013, Datta-Leditzky, Anshu2019,tight-relations}. We make an exception for the relation between~\eqref{smoothed_D_max} and~\eqref{Datta_Leditzky}, which takes the form~\cite[Corollary~8]{tight-relations}
\bb
\widetilde{D}_{\max}^\e(\rho\|\sigma) \leq D_{\max}^\e(\rho\|\sigma) \leq \widetilde{D}_{\max}^{\,\e^2}(\rho\|\sigma) + \log\frac{1}{1-\e^2}\, .
\label{Datta_Renner_lemma}
\ee
The second inequality, which is the non-trivial one, is equivalent to a beautiful lemma first found by Datta and Renner~\cite{Datta2009} and then adapted to the normalised case by Brand\~{a}o and Plenio~\cite[Lemma~C.5]{Brandao2010}. See also~\cite[Theorem~5]{tight-relations} for a refined statement and a different proof technique.

All of the above relative entropies can be calculated for classical probability distributions instead of quantum states. In the classical case, the tightest relation between the hypothesis testing relative entropy and the smoothed max-relative entropy is~\cite[Corollary~9, Eq.~(60)]{tight-relations}
\bb
D_H^{1-\e-\eta}(p \| q) + \log \eta \leq D_{\max}^{\e} (p\|q) \leq D_H^{1-\e}(p \| q) + \log \e\, .
\label{classical_wsc_duality}
\ee
Quantumly, the tightest possible relation is entirely analogous to~\eqref{classical_wsc_duality}, but it features an $\e^2$ instead of $\e$ on the rightmost side~\cite[Corollary~9, Eq.~(58)]{tight-relations}.

In what follows, we will often need to compute a relative entropy between a state and a \emph{family} of states. Given $\rho\in \D(\C^d)$ and some set of states $\FF \subseteq \D(\C^d)$, denoting with $\mathds{D}$ one of the above relative entropies, we will adopt the convention that
\bb
\mathds{D}(\rho\|\FF) \coloneqq \inf_{\sigma\in \FF} \mathds{D}(\rho\|\sigma)\, .
\label{relent_resource}
\ee
If $\FF$ represents instead a sequence $(\FF_n)_n$ of families of states $\FF_n\subseteq \D\big((\C^d)^{\otimes n}\big)$, with a slight abuse of notation and for the specific case of the relative entropy, we will also set
\bb
D^\infty(\rho\|\FF) \coloneqq \lim_{n\to\infty} \frac1n\, \rel{D}{\rho^{\otimes n}}{\FF_n}\, ;
\label{regularised_relent_resource}
\ee
note that the limit exists and can be replaced with an infimum over $n\in \N^+$, due to Fekete's lemma~\cite{Fekete1923}, provided that $\FF$ obeys Axiom~4 in Section~\ref{sec_main_result}. Axiom~2 guarantees instead that the limit is finite, because $D(\rho\|\FF_1) \leq D_{\max}(\rho\|\FF_1) \leq D_{\max}(\rho\|\sigma_0) \leq \log(1/c)$. When $\mathds{D} = D$ is the relative entropy and $\FF$ is taken to be the set of separable states, Eq.~\eqref{relent_resource} yields the relative entropy of entanglement~\eqref{relative_entropy_entanglement}, while~\eqref{regularised_relent_resource} yields its regularised version~\eqref{regularised_relative_entropy_entanglement}.

With the above notation, the Stein exponent for resource testing, which we constructed informally in~\eqref{Stein_informal}, can be defined more rigorously as
\bb
\stein(\rho \| \FF) \coloneqq \lim_{\e\to 0^+} \liminf_{n\to\infty} \frac1n\, \rel{D_H^\e}{\rho^{\otimes n}}{\FF_n}\, .
\label{Stein_rigorous}
\ee

\begin{rem} \label{reformulations_rem}
Since it is already known from the work of Brand\~{a}o and Plenio~\cite{Brandao2010} that the limit $\e\to 1^-$ of the left-hand side equals $D^\infty(\rho\|\FF)$, Eq.~\eqref{GQSL_resources} and Eq.~\eqref{GQSL_resources_D_H} are fully equivalent reformulations of the generalised Stein's lemma. Other equivalent formulations can be obtained by exploiting the relations between the various one-shot relative entropies, which are all equivalent asymptotically:
\bb
&\lim_{\e\to 0^+} \liminf_{n\to\infty} \frac1n\, \rel{D_H^\e}{\rho^{\otimes n}}{\FF_n} \\
&\qquad = \lim_{\e\to 1^-} \liminf_{n\to\infty} \frac1n\, \rel{D_{\max}^\e}{\rho^{\otimes n}}{\FF_n} \\
&\qquad = \lim_{\e\to 1^-} \liminf_{n\to\infty} \frac1n\, \rel{\widetilde{D}_{\max}^\e}{\rho^{\otimes n}}{\FF_n}\, .
\label{reformulations}
\ee
Here, the first equality follows from~\cite[Corollary~S4]{DNE-distillable}, while the second follows from~\eqref{Datta_Renner_lemma}. Proving that either of the above quantities equals $D^\infty(\rho\|\FF)$ would prove Theorem~\ref{GQSL_thm}. In fact, since Brand\~{a}o and Plenio~\cite{Brandao2010} also showed that $\lim_{\e\to 0^+} \liminf_{n\to\infty} \frac1n\, \rel{D_{\max}^\e}{\rho^{\otimes n}}{\FF_n} = D^\infty(\rho\|\FF)$, and the smoothed max-relative entropy is a decreasing function of $\e\in (0,1)$, it suffices to prove that
\bb
\liminf_{n\to\infty} \frac1n\, \rel{D_{\max}^\e}{\rho^{\otimes n}}{\FF_n} \geqt{?} D^\infty(\rho\|\FF)\, .
\label{D_max_reformulation}
\ee
This will be our line of attack.
\end{rem}

\section{The classical solution} \label{sec_classical_solution}

This section is devoted to the presentation of the classical solution. Although the asymptotic statement of the generalised classical Stein's lemma (Theorem~\ref{classical_GSL_thm} below) is subsumed by the corresponding quantum statement (Theorem~\ref{GQSL_thm}), it is very much worthwhile to start with the classical solution anyway. First, because it will teach us how to attack the quantum problem, providing us with a better intuition of why our proof strategy should work. And second, because the classical statement is fully one-shot and tells us something more concerning the behaviour of errors in resource testing in the finite-copy regime. The main result of this section is as follows.

\begin{thm}{(Generalised classical Stein's lemma)} \label{classical_GSL_thm}
Let $\XX$ be a finite alphabet, and let $(\FF_n)_n$ be a sequence of sets of probability distributions $\FF_n \subseteq \PP(\XX^n)$ that obeys the Brand\~{a}o--Plenio axioms (Axioms~1--5 in Section~\ref{sec_main_result}). Then, for all $n\in \N^+$ and $\eta,\e\in (0,1)$ with $\eta+\e < 1$, it holds that
\bb
\rel{D_{\max}^\eta}{p^{\otimes n}}{\FF_n} &\leq \rel{D_{\max}^\e}{p^{\otimes n}}{\FF_n} + \log \frac{1}{1\!-\!\e\!-\!\eta} + 2n g\big(\big(2\delta_n \!+\! \tfrac1n\big) |\XX| \big) + \left(2n\delta_n \!+\! 1\right) |\XX| \log\frac{1}{c}\, ,
\ee
where
\bb
g(x) \coloneqq (x+1)\log(x+1) - x\log x
\label{g}
\ee
is the `bosonic entropy function',\footnote{We set $g(0) \coloneqq \lim_{x\to 0^+} g(x) = 0$.} 
$c>0$ is the constant appearing in Axiom~2 of Section~\ref{sec_main_result}, and
\bb
\delta_n \coloneqq \sqrt{\frac{(|\XX|-1) \ln(n+1) - \ln \eta}{2n}}\, .
\label{delta_n}
\ee
In particular,
\bb
\lim_{n\to\infty} \frac1n\, \rel{D_{\max}^\e}{p^{\otimes n}}{\FF_n} = D^\infty(p\|\FF) = \stein(p\|\FF)
\label{classical_GSL}
\ee
for all $\e\in (0,1)$.
\end{thm}

\begin{rem}
Due to the relation~\eqref{classical_wsc_duality} between the smoothed max-relative entropy and the hypothesis testing relative entropy, Theorem~\ref{classical_GSL_thm} implies immediately the following inequality: for all $\e,\e',\delta\in (0,1)$ with $\e+\e' > 1$ and $\delta<\e$,
\bb
\rel{D_H^{\e'}}{p^{\otimes n}}{\FF_n} &\geq \rel{D_H^{\e-\delta}}{p^{\otimes n}}{\FF_n} + \log \frac{\delta(\e\!+\!\e'\!-\!1)}{1\!-\!\e'} - 2n g\big(\big(2\delta_n \!+\! \tfrac1n\big) |\XX| \big) - \left(2n\delta_n \!+\! 1\right) |\XX| \log\frac{1}{c}\, .\vphantom{\bigg\|}
\label{classical_relation_n_copy_DHs}
\ee
The above statement is significant because it connects the $n$-copy hypothesis testing relative entropies corresponding to \emph{different} type-I error probabilities. To see why this is useful, consider the case where $\e$ is close to $1$, while $\e'$ is much smaller, say of the order of (but slightly larger than) $1-\e$. Eq.~\eqref{classical_relation_n_copy_DHs} tells us that if we can find a `rough' test that achieves a type-I error probability $\e-\delta$, which we can imagine to be close to $1$ if $\delta$ is small, then there exists a much more refined test with type-I error probability $\e' \ll \e$ and `comparable' type-II error probability, provided that $n$ is large enough. The inequality pinpoints what penalty in the type-II error we incur because of this drastic type-I error reduction: using the fact that $g(x) \sim -x \log x$ for small $x$, we see that this penalty is a sub-exponential factor of the form $2^{\kappa \sqrt{n}\, \log n}$.
\end{rem}

\begin{rem}
Another type of information that can be extracted from Theorem~\ref{classical_GSL_thm} is the relation between the hypothesis testing relative entropy and the Umegaki relative entropy: for all $0<\eta<\e'<1$ and all $n\in \N^+$, it holds that
\bb
&\rel{D_H^{\e'}}{p^{\otimes n}}{\FF_n} \\
&\quad \geq \rel{D_{\max}^{\eta}}{p^{\otimes n}}{\FF_n} + \log\frac{\e'- \eta}{1-\e'} - 2n\, g\big(\big(2\delta_n + \tfrac1n\big) |\XX| \big) - \left(2n\delta_n + 1\right) |\XX| \log\frac{1}{c} \\
&\quad \geq \rel{D}{p^{\otimes n}}{\FF_n} - n\eta \log\frac1c - g(\eta) + \log \frac{\e'- \eta}{1-\e'} - 2n\, g\big(\big(2\delta_n + \tfrac1n\big) |\XX| \big) - \left(2n\delta_n + 1\right) |\XX| \log\frac{1}{c}\, ,
\ee
where the second inequality can be deduced by lower bounding the max-relative entropy with the standard relative entropy, according to~\eqref{relation_D_D_max}, and then employing the asymptotic continuity of this latter quantity minimised over free states, proved in~\cite[Lemma~7]{tightuniform} and stated here in Lemma~\ref{AC_lemma}.
\end{rem}

\subsection{Technical preliminaries on the hypergeometric distribution}

The solution of the generalised classical Stein's lemma relies heavily on the properties of the \deff{hypergeometric distribution}. This is parametrised by three integers $N,K,n$ with $N\geq K,n$, and it has support over the integers $k\in \{0,1,\ldots,n\}$. It takes the form
\bb
H(N,K;\,n,k) \coloneqq \frac{\binom{K}{k} \binom{N-K}{n-k}}{\binom{N}{n}}\, ,
\label{hypergeometric}
\ee
and it represents the probability of drawing $k$ white marbles from an urn that contains a total of $N$ marbles, $K$ of which white, when making $n$ draws in total without replacement. The duality relations
\bb
H(N,K;\,n,k) &= H(N,n;\,K,k) \\
&= H(N,N-K;\,n,n-k) \\
&= H(N,K;\,N-n,K-k)
\label{duality_hypergeometric}
\ee
are sometimes useful.

The average of the hypergeometric distribution is naturally $K/N$, i.e.\ the frequency of white marbles in the urn. In what follows, we will make ample use of the following tail bounds, which tell us how concentrated the hypergeometric distribution is around its mean: for all $N,K,n,k \in \N$ with $N\geq K$, and all $u>0$, it holds that~\cite{Chvatal1979}
\bb
\sum_{k\in \{0,\ldots,n\},\, |\frac{k}{n} - \frac{K}{N}| \geq u} H(N,K;\,n,k) \leq 2\, e^{-2nu^2}\, .
\label{first_tail_bound_hypergeometric}
\ee
When $n \geq N/2$, as it will sometimes be the case here, one can also use the duality relation~\eqref{duality_hypergeometric} to write the tighter bound
\bb
\sum_{k\in \{0,\ldots,n\},\, |\frac{k}{n} - \frac{K}{N}| \geq u} H(N,K;\,n,k) \leq 2\, e^{- \frac{2n^2 u^2}{N-n}}\, .
\label{second_tail_bound_hypergeometric}
\ee

The hypergeometric distribution can be generalised to include the case where the urn contains marbles of more than two colours. Let $N,n\in \N^+$ be positive integers, with $N\geq n$. For a given $N$-type $s\in \mathcal{T}_N$ on a finite alphabet $\XX$, the \deff{multivariate hypergeometric distribution} is a probability distribution on $\mathcal{T}_n$ whose value on $t\in \mathcal{T}_n$ yields the probability of finding $nt(x)$ marbles of colour $x$ for all $x\in \XX$, when sampling from an urn containing $N$ marbles in total, $N s(x)$ of which of colour $x$. It is given by
\bb
H_{N,s;\, n}(t) \coloneqq \frac{\prod_x \lsmatrix \! N s(x)\! \\[.5ex] n t(x) \rsmatrix}{\binom{N}{n}} = \frac{\binom{n}{nt}\binom{N-n}{Ns-nt}}{\binom{N}{Ns}} \, ,
\label{multivariate_hypergeometric}
\ee
where we used the multinomial notation in~\eqref{size_type_class}.

\begin{note}
We will reserve the notation $H(N,K;\,n,k)$ for the bivariate case, and $H_{N,s;\, n}(t)$ for the multivariate case.  
\end{note}

Clearly, the expression on the right-hand side of~\eqref{multivariate_hypergeometric} is non-zero provided that we do not demand to draw more marbles of any given colour than what are there in the urn to start with. In formula, this is expressed by requiring that $nt(x)\leq Ns(x)$ for all $x\in \XX$. Introducing the element-wise ordering $\preceq$ among vectors on $\R^\XX$, we can state this as
\bb
nt \preceq Ns\, .
\ee
When this condition is obeyed, it is possible to construct the following elementary lower bound on $H_{N,s;\, n}(t)$, which is very loose but ultimately sufficient for our purposes.

\begin{lemma} \label{hypergeometric_inequality_lemma}
Let $N,n\in \N^+$ be positive integers, with $N\geq n$. For an $N$-type $s\in \mathcal{T}_N$ and an $n$-type $t\in \mathcal{T}_n$ on a finite alphabet $\XX$, if $nt(x)\leq Ns(x)$ for all $x\in \XX$, i.e.\ $nt\preceq Ns$, then 
\bb
H_{N,s;\, n}(t) \geq 2^{-n g\left(\frac{N}{n}-1\right)} ,
\label{hypergeometric_inequality}
\ee
where $g$ is defined by~\eqref{g}.
\end{lemma}

\begin{proof}
The inequality is very loose, and its proof very brutal. Since $Ns(x)\geq nt(x)$ for all $x\in \XX$, we have $\binom{N s(x)}{n t(x)}\geq 1$ for all $x$. Hence,
\bb
H_{N,s;\, n}(t) \geqt{(i)} \frac{1}{\binom{N}{n}} \geqt{(ii)} 2^{- N h_2(n/N)} \eqt{(iii)} 2^{-n g\left(\frac{N}{n}-1\right)} ,
\ee
where in~(i) we looked at the first expression in~\eqref{multivariate_hypergeometric}, in~(ii) we used the elementary inequality $\binom{N}{n}\leq 2^{N h_2(n/N)}$, with $h_2(x) \coloneqq - x\log x - (1-x) \log(1-x)$ being the binary entropy function, and finally in~(iii) we observed that $g(x) = (1+x) \,h_2\!\left(\frac{1}{1+x}\right)$.
\end{proof}

\subsection{An informal description of the argument} \label{subsec_intuitive_description_classical}

Fix an $\e\in (0,1)$. Due to the discussion in Remark~\ref{reformulations_rem}, all we have to do is to show that $\liminf_{n\to\infty} \frac1n\, \rel{D_{\max}^\e}{p^{\otimes n}}{\FF_n} \geq D^\infty(p\|\FF)$. Ignoring for a moment subtleties that have to do with the definition of $\liminf$, we can proceed by contradiction and assume that $\rel{D_{\max}^\e}{p^{\otimes n}}{\FF_n} \lesssim n \lambda$ for some $\lambda < D^\infty(p\|\FF)$. This means that we can find a probability distribution $p_n$ that is $\e$-close to $p^{\otimes n}$, i.e.\ $p_n \approx_\e p^{\otimes n}$, and such that $p_n \leq 2^{n\lambda} q_n$ for some $q_n\in \FF_n$. Without loss of generality, due to Axioms~1 and~5 we can assume that both $p_n$ and $q_n$ are permutationally symmetric, i.e.\ they take on the same value on all sequences of a given type. This allows us to think of these objects alternatively as probability distributions on the set of types $\mathcal{T}_n$ instead of $\XX^n$. 

By using asymptotic continuity, it is not difficult to show that when $\eta \in (0,1)$ is really small ($\eta\to 0$) we have $\rel{D_{\max}^\eta}{p^{\otimes n}}{\FF_n} \gtrsim n D^\infty(p\|\FF)$. However, the key issue here is that this fact alone does not allow us to extrapolate the same statement when the small $\eta$ (later, $\eta\to 0$) is replaced by the much larger (but fixed) $\e \in (0,1)$. We will now explain how to overcome precisely this obstacle, lower bounding directly $\rel{D_{\max}^\e}{p^{\otimes n}}{\FF_n}$ with $\rel{D_{\max}^\eta}{p^{\otimes n}}{\FF_n}$, up to asymptotically vanishing remainder terms. 

In order to establish an upper bound on $\rel{D_{\max}^\eta}{p^{\otimes n}}{\FF_n}$, which is a minimisation over smoothed probability distributions $p'_n \approx_\eta p^{\otimes n}$, we need to construct a suitable ansatz for $p'_n$. There is a natural way to do so, and it is to take as $p'_n$ the typical part of $p^{\otimes n}$. In the type space, $p'_n$ will be approximately equal to $p^{\otimes n}$ on types that are close to $p$, and vanish anywhere else. If for all $\xi>0$ we can show that
\bb
p'_n \leqt{?} 2^{n(\lambda+\xi)} q'_n\, ,\quad q'_n\in \FF_n\, ,
\label{intuitive_description_classical_eq1}
\ee
then we will obtain that
\bb
n D^\infty(p\|\FF) \lesssim \rel{D_{\max}^\eta}{p^{\otimes n}}{\FF_n} \leq n(\lambda+\xi)\, ,
\ee
which will give $D^\infty(p\|\FF)\leq \lambda$, in contradiction with the assumption that $\lambda < D^\infty(p\|\FF)$, once we divide by $n$, take the limit $n\to\infty$, \emph{and then} send $\eta,\xi \to 0$. Therefore, proving~\eqref{intuitive_description_classical_eq1} would conclude the argument.

A second observation is that since $p^{\otimes n}$ is concentrated on types $\approx p$, and $p^{\otimes n} \approx_\e p_n$, the probability distribution $p_n$ needs to have at least a total weight $\gtrsim 1-\e$ on the set of types $\approx p$. We do not know how this weight is distributed on those types, but it has to be there. We depicted this situation in Figure~\ref{blurring_fig}(a), in which some of the weight of $p_n$ is scattered on the set of types $\approx p$.

Our key idea is to construct a stochastic map $B_{n,m}$ ($m$ is a parameter whose meaning will be explained soon), called the \emph{blurring map}, that adds a bit of noise to the input. This can be done rather simply by a three-step procedure:
\begin{enumerate}[(i)]
\item append to the input sequence $m$ symbols of each species (there are $|\XX|$ species in total);
\item apply a uniformly random permutation;
\item discard $|\XX| m$ symbols, so as to go back to a sequence of length $n$.
\end{enumerate}
The above three steps describe the action of the blurring map $B_{n,m}$ on any input sequence. Applying this on $p_n$ results in a `smeared', or `blurred', version of $p_n$, denoted $\widetilde{p}_n \coloneqq B_{n,m}(p_n)$ and depicted in Figure~\ref{blurring_fig}(b). The way to think of the action of $B_{n,m}$ is that if the initial distribution has some weight on a certain type class, then blurring makes that weight `spill over' to all type classes that are close to the initial one.

In our setting, the new probability distribution $\widetilde{p}_n$ will have two key properties:
\begin{enumerate}[(a)]
\item $\widetilde{p}_n$ has approximately the same max-relative entropy of resource as $p_n$, in the sense that
\bb
\widetilde{p}_n\leq 2^{(\lambda+ \delta') n} q'_n
\label{intuitive_description_classical_eq3}
\ee
for some arbitrarily small but fixed $\delta'$ and some other $q'_n\in \FF_n$. To see this one needs to observe that from $p_n \leq 2^{n\lambda} q_n$ it follows that $\widetilde{p}_n = B_{n,m}(p_n) \leq 2^{n\lambda} B_{n,m}(q_n)$; the claim will be proved once we show that $B_{n,m}(q_n) \leq 2^{\delta' n} q'_n$ for some $q'_n\in \FF_n$, i.e.\ $D_{\max}(B_{n,m}(q_n)\|\FF_n) \leq \delta' n$. 

But this is rather easy: of the three steps (i)--(iii) that describe the action of $B_{n,m}$, only~(i) is not a free operation, according to Axioms~3 and~5; also, because of Axioms~2 and~4, step~(i) can only increase the max-relative entropy of resource by $m |\XX| \log(1/c)$, as each symbol that gets added carries with it a max-relative entropy of resource of at most $\log(1/c)$. Hence, taking $\delta' n = m |\XX| \log(1/c)$, i.e.\ $m\approx 2\delta n$ with $\delta$ a rescaled version of $\delta'$, achieves the desired result. (Here, the factor of $2$ is included for notational convenience.)

\item While $p_n$ had \emph{some} weight $\gtrsim 1-\e$ on the set of types $\approx p$, possibly unevenly distributed, $\widetilde{p}_n$ will have a possibly slighter smaller but, crucially, roughly \emph{uniformly} distributed weight on the set of types $\approx p$.

In fact, consider a type $\mathcal{T}_n \ni t \approx p$, in the sense that $\|p-t\|_\infty \leq \delta$. What is the probability of ending up in $t$ by blurring $p_n$? Since the total weight of the types around $p$ according to $p_n$ is at least $1-\e$, the total probability of ending up in $t$ is at least about $1-\e$ times the minimal transition probability $s\to t$ induced by blurring, where $s$ is an arbitrary type close to $p$, so that $\|p-s\|_\infty\leq \delta$. By looking at the action of the blurring map $B_{n,m}$, it is not difficult to realise that this transition probability is given by the multivariate hypergeometric distribution $H_{N\!,\, v_s;\, n}(t)$ (see~\eqref{multivariate_hypergeometric}), where $N\coloneqq n+|\XX|m$, and the initial type $v_s$ satisfies that
\bb
N v_s(x) &= \big(\text{$\#$ symbols $x$ in the sequence after step~(i)}\big) \\
&= ns(x) + m\, .
\ee
Note that $\|s-t\|_\infty \leq 2\delta$ by the triangle inequality, so that
\bb
N v_s(x) \geq nt(x) - 2\delta n + m \approx nt(x)\, ,
\ee
because $m \approx 2\delta n$. In other words, there are enough symbols of each species to make the transition $s\to t$ via blurring physically possible. This means that the conditions in Lemma~\ref{hypergeometric_inequality_lemma} are met, and using that result we can estimate the probability of ending up at type $t$ starting from type $s$ as $\gtrsim 2^{-n g\left(\frac{N}{n} - 1\right)} = 2^{-n g\left(2\delta |\XX|\right)}$, where $g$ is defined by~\eqref{g}. Importantly, the coefficient $g\left(2\delta |\XX|\right)$ appearing at the exponential vanishes for $\delta \to 0$. This whole argument should convince the reader that $\widetilde{p}_n$ will have weight $\gtrsim (1-\e)\, 2^{-n g\left(2\delta |\XX|\right)}$ on \emph{each and every} type close to $p$. 
\end{enumerate}

The last claim in~(b) implies that the typical part $p'_n$ of $p^{\otimes n}$ satisfies
\bb
p'_n \leq \frac{1}{1-\e}\, 2^{n g\left(2\delta |\XX|\right)}\, \widetilde{p}_n \leq \frac{1}{1-\e}\, 2^{n \left(g\left(2\delta |\XX|\right) + \lambda + \delta' \right)} q'_n\, 
\ee
where in the last step we used also~\eqref{intuitive_description_classical_eq3}. Comparing this with~\eqref{intuitive_description_classical_eq1}, we are basically done: in fact, 
\bb
\xi \coloneqq \delta' + g\left(2\delta |\XX|\right) = \mathrm{const}\cdot \delta + g\left(2\delta |\XX|\right) \to 0
\ee
as $\delta \to 0$.

\begin{rem} \label{what_BP_axioms_do_rem}
The above high-level explanation contains already enough information to appreciate why all of the Brand\~{a}o--Plenio axioms are needed in our approach. The convexity assumption in Axiom~1 is used implicitly already when invoking asymptotic continuity. Axioms~1, 3, and~5 are used to justify why steps~(ii) and~(iii) in the definition of the blurring map are free operations, while Axioms~2 and~4 are needed to establish that step~(i) adds little extra resource to the input. Entirely analogous considerations apply to the quantum solution presented in Section~\ref{sec_quantum_solution}.
\end{rem}

\begin{center}
\begin{figure}[h!t] \centering
\begin{subfigure}{\textwidth}
\includegraphics[scale=.207]{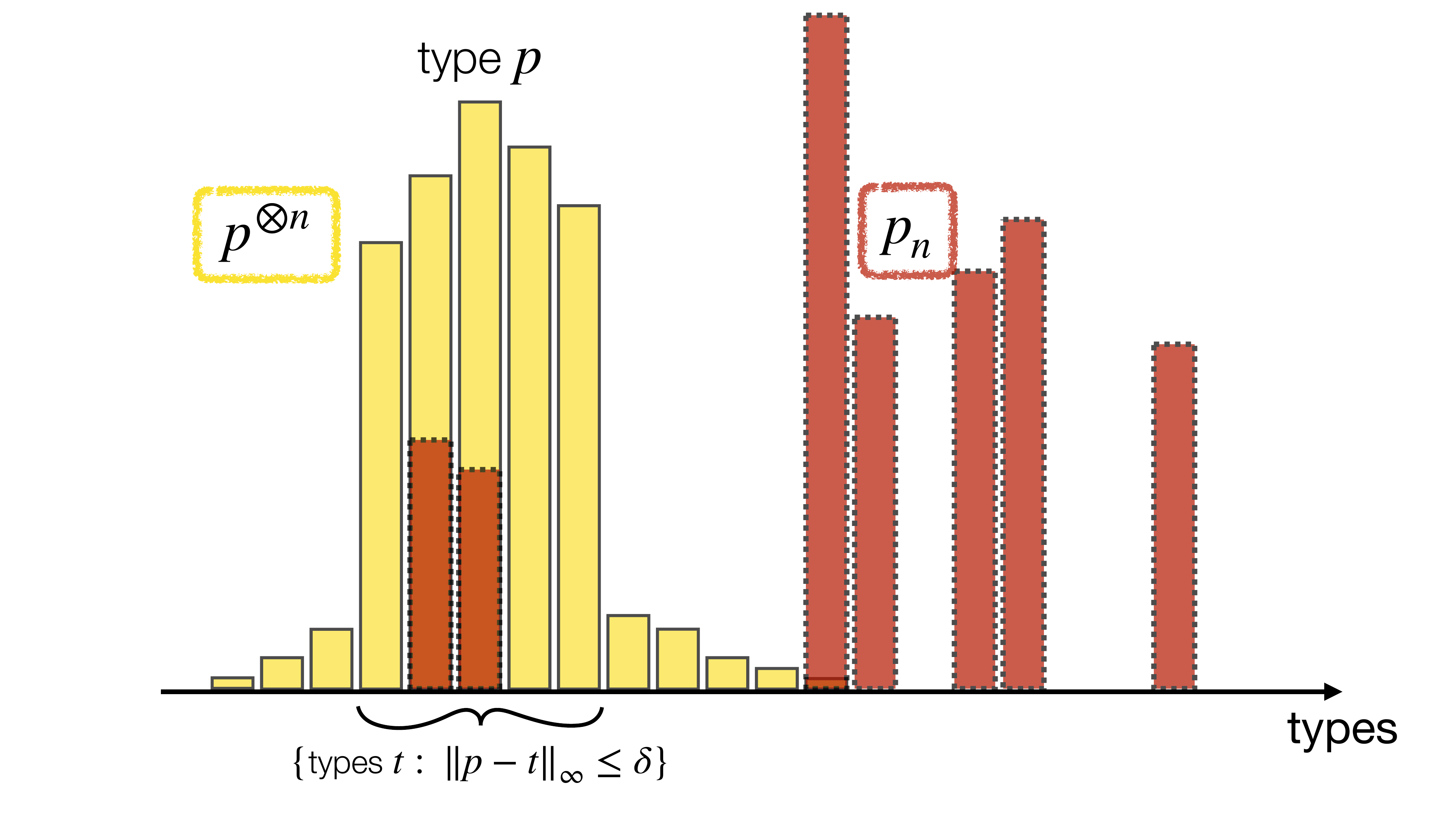}
\caption{}
\end{subfigure}
\par\bigskip
\begin{subfigure}{\textwidth}
\includegraphics[scale=.207]{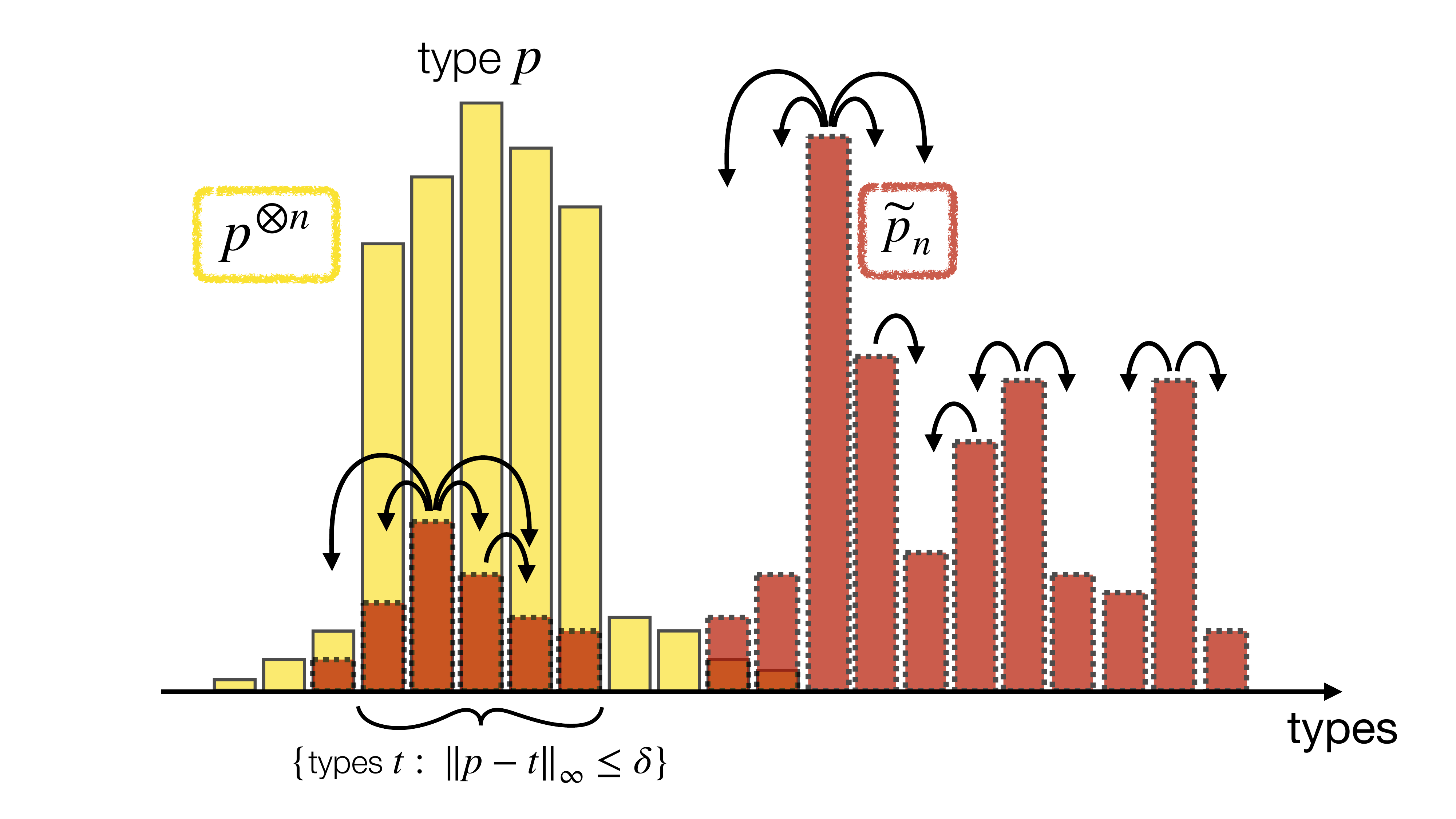}
\caption{}
\end{subfigure}
\caption{\justifying (a)~The probability distributions $p^{\otimes n}$ and $p_n$ depicted in type space. Note that $p^{\otimes n}$ is concentrated on types close to $p$. As per the discussion in Section~\ref{subsec_intuitive_description_classical}, $p^{\otimes n}$ and $p_n$ are at some distance $\e \in (0,1)$ away. In this case, $\e$ is rather close to $1$ (but strictly smaller), as most of the weight of $p^{\otimes n}$ and $p_n$ is distributed on complementary regions of the type space.
(b)~The blurring map acting on $p_n$ and yielding at the output a probability distribution $\widetilde{p}_n$. Blurring is described by steps~(i)--(iii) in Section~\ref{subsec_intuitive_description_classical}. By adding a few symbols of each type, mixing, and discarding some symbols randomly, it blurs, or smears, $p_n$, reducing the height of its peaks and making some of the associated weight spill over to nearby type classes. Importantly, as a consequence of this process $\widetilde{p}_n$ ends up having some weight on \emph{all} types close to $p$.}
\label{blurring_fig}
\end{figure}
\end{center}

\subsection{The classical blurring lemma} \label{subsec_classical_blurring_lemma}

Before we delve into the proof of Theorem~\ref{classical_GSL_thm}, it is useful to isolate and formalise the properties explained in step~(b) of the above discussion in Section~\ref{subsec_intuitive_description_classical}. To this end, here we introduce and prove the classical \emph{blurring lemma}, a fundamental statement that constitutes the linchpin of our proof of Theorem~\ref{classical_GSL_thm}. Remarkably, the exact same lemma is also used in the recent work~\cite{generalised-Sanov} to establish a complementary result in entanglement testing, the \emph{generalised quantum Sanov theorem}. Note that the classical blurring lemma will not suffice by itself to prove the generalised \emph{quantum} Stein's lemma (Theorem~\ref{GQSL_thm}), which is the main result of this paper; however, it will form the conceptual basis on which we will forge our quantum version of the blurring lemma (Lemma~\ref{quantum_blurring_lemma}).

For any pair of positive integers $n,m\in \N^+$ and, as usual, a fixed alphabet $\XX$, we define the \deff{classical blurring map} as a linear map $B_{n,m}^{\mathrm{cl}}:\R^{\XX^n} \to \R^{\XX^n}$ that transforms any input probability distribution by adding $m$ symbols of each kind $x\in \XX$, shuffling the resulting sequence, and discarding $m |\XX|$ symbols. In this way, if the input sequence is of length $n$, then the same is true of the output sequence. We can formalise the action of $B_{n,m}^{\mathrm{cl}}$ as
\bb
B_{n,m}^{\mathrm{cl}}(\cdot) \coloneqq \tr_{m} \mathcal{S}_{n+|\XX| m} \left( (\cdot) \otimes \bigotimes \nolimits_x \delta_x^{\otimes m} \right) ,
\label{classical_blurring}
\ee
where $\delta_x$ denotes the deterministic probability distribution concentrated on $x$ (i.e.\ such that $\delta_x(y) = 1$ if $y=x$, and $\delta_x(y)=0$ otherwise). We used the trace notation to prepare the ground for the extension to quantum: in this classical context, $\tr_m$ simply means `discard the last $m$ symbols'. Note that the output of the blurring map is always permutationally symmetric.

\begin{lemma}[(One-shot classical blurring lemma)] \label{blurring_lemma}
Let $p_n, q_n \in \PP\big(\XX^n\big)$ be two $n$-copy probability distributions, with $p_n$ permutationally symmetric. For some $\delta,\eta>0$, let $p_n$ be $(1-\eta)$-concentrated on the $\delta$-ball of $n$-types around a single-copy probability distribution $s\in \PP(\XX)$, in the sense that
\bb
p_n\left(\bigcup\nolimits_{t\in \mathcal{T}_n:\ \|s - t\|_\infty \leq \delta} T_{n,t} \right) \geq 1 - \eta\, ,
\ee
where $\|s - t\|_\infty = \max_{x\in \XX} \left|s(x) - t(x)\right|$. Then for $m=\ceil{2\delta n}$ it holds that
\bb
\rel{D_{\max}^{\eta}}{p_n}{B_{n,m}^{\mathrm{cl}}(q_n)} \leq &\log \frac{1}{q_n\left(\bigcup\nolimits_{t\in \mathcal{T}_n:\ \|s - t\|_\infty \leq \delta} T_{n,t} \right)} + n g\!\left( \left(2\delta + \tfrac1n\right) |\XX| \right) ,
\label{blurring_lemma_eq}
\ee
where the blurring map $B_{n,m}^{\mathrm{cl}}$ is defined by~\eqref{classical_blurring}, and the function $g$ is given by~\eqref{g}.
\end{lemma}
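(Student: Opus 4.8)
The plan is to produce an explicit $\eta$-smoothing of $p_n$ whose max-relative entropy against $B_{n,m}^{\mathrm{cl}}(q_n)$ already matches the right-hand side of~\eqref{blurring_lemma_eq}. Write $\mathcal{G} \coloneqq \{t\in\mathcal{T}_n:\ \|s-t\|_\infty\le\delta\}$ and $G\coloneqq\bigcup_{t\in\mathcal{G}}T_{n,t}$, and let $p_n'$ be the (sub-normalised) restriction of $p_n$ to $G$, i.e.\ $p_n'(x^n)=p_n(x^n)$ for $x^n\in G$ and $p_n'(x^n)=0$ otherwise. The concentration hypothesis gives $\frac12\|p_n-p_n'\|_1=\frac12\,p_n(G^c)\le\frac\eta2$, so $p_n'$ lies in the $\eta$-ball around $p_n$ and hence $\rel{D_{\max}^{\eta}}{p_n}{B_{n,m}^{\mathrm{cl}}(q_n)}\le D_{\max}\!\big(p_n'\,\big\|\,B_{n,m}^{\mathrm{cl}}(q_n)\big)$. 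It therefore suffices to show $p_n'\le 2^{\lambda}B_{n,m}^{\mathrm{cl}}(q_n)$ with $\lambda$ equal to the right-hand side of~\eqref{blurring_lemma_eq}. Since $p_n'$ (a truncation of the symmetric $p_n$ to a union of type classes) and $B_{n,m}^{\mathrm{cl}}(q_n)$ (symmetrised by construction) are both constant on type classes, this pointwise inequality is equivalent to the inequality between type-class totals, and I will argue at the level of types throughout.

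First I would make the blurring dynamics explicit on types. Appending $m$ symbols of each of the $|\XX|$ species turns a sequence of type $s'\in\mathcal{T}_n$ into one of length $N\coloneqq n+|\XX|m$ and type $v_{s'}$ with $Nv_{s'}(x)=ns'(x)+m$; symmetrising makes it uniform over $T_{N,v_{s'}}$, and discarding the surplus $|\XX|m$ symbols keeps a uniformly random $n$-subset. The type of the retained $n$ symbols is thus distributed exactly according to the multivariate hypergeometric law $H_{N,v_{s'};\,n}(\cdot)$ of~\eqref{multivariate_hypergeometric}, uniformly within each target type class. By linearity,
\[
B_{n,m}^{\mathrm{cl}}(q_n)(T_{n,t})=\sum_{s'\in\mathcal{T}_n}q_n(T_{n,s'})\,H_{N,v_{s'};\,n}(t).
\]

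The crux is a uniform lower bound on this quantity for good targets $t\in\mathcal{G}$, obtained by retaining only good sources. For $s',t\in\mathcal{G}$ the triangle inequality gives $\|s'-t\|_\infty\le\|s'-s\|_\infty+\|s-t\|_\infty\le2\delta$, whence for every $x$,
\[
nt(x)-Nv_{s'}(x)=n\big(t(x)-s'(x)\big)-m\le 2\delta n-\lceil 2\delta n\rceil\le 0,
\]
so the feasibility condition $nt\preceq Nv_{s'}$ of Lemma~\ref{hypergeometric_inequality_lemma} holds; here the choice $m=\lceil 2\delta n\rceil$ is exactly what is needed. That lemma yields $H_{N,v_{s'};\,n}(t)\ge 2^{-n\,g(N/n-1)}$, and since $N/n-1=|\XX|m/n\le(2\delta+\tfrac1n)|\XX|$ while $g$ is non-decreasing, every good-to-good transition probability is at least $2^{-n\,g((2\delta+1/n)|\XX|)}$. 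Discarding the non-good sources in the displayed sum then gives, for all $t\in\mathcal{G}$,
\[
B_{n,m}^{\mathrm{cl}}(q_n)(T_{n,t})\ge 2^{-n\,g((2\delta+1/n)|\XX|)}\sum_{s'\in\mathcal{G}}q_n(T_{n,s'})=2^{-n\,g((2\delta+1/n)|\XX|)}\,q_n(G).
\]

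Finally I would assemble the bound. For $t\in\mathcal{G}$ one has $p_n'(T_{n,t})=p_n(T_{n,t})\le1$, so dividing by the lower bound above gives $p_n'(T_{n,t})/B_{n,m}^{\mathrm{cl}}(q_n)(T_{n,t})\le q_n(G)^{-1}2^{n\,g((2\delta+1/n)|\XX|)}=2^{\lambda}$ with $\lambda=\log\frac{1}{q_n(G)}+n\,g\big((2\delta+\tfrac1n)|\XX|\big)$; for $t\notin\mathcal{G}$ the left-hand side vanishes. Hence $p_n'\le2^{\lambda}B_{n,m}^{\mathrm{cl}}(q_n)$, i.e.\ $D_{\max}(p_n'\|B_{n,m}^{\mathrm{cl}}(q_n))\le\lambda$, which together with the first paragraph establishes~\eqref{blurring_lemma_eq}. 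The main obstacle is the identification in the second paragraph together with the feasibility check in the third: one must read off the blurring action as the multivariate hypergeometric correctly, and verify that $m=\lceil2\delta n\rceil$ makes \emph{every} good-to-good transition simultaneously feasible, so that the very loose estimate of Lemma~\ref{hypergeometric_inequality_lemma} can be applied uniformly; the remaining bookkeeping (the bound $p_n(T_{n,t})\le1$ and the common type-class cardinalities) is routine.
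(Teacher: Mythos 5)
Your proof is correct and follows essentially the same route as the paper's: truncate $p_n$ to the good type classes, observe that the blurred weight of a type class is a mixture of multivariate hypergeometric probabilities, verify via the triangle inequality that $m=\lceil 2\delta n\rceil$ makes every good-to-good transition feasible, and apply Lemma~\ref{hypergeometric_inequality_lemma} after restricting to good sources. The only (immaterial) difference is that you keep the truncation sub-normalised, whereas the paper renormalises it to a probability distribution; either way the truncated distribution stays within trace distance $\eta$ of $p_n$ and its type-class weights are at most $1$, so the same exponent emerges.
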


\begin{note}
If $q_n\big(\bigcup\nolimits_{t\in \mathcal{T}_n:\ \|s - t\|_\infty \leq \delta} T_{n,t} \big) = 0$ then~\eqref{blurring_lemma_eq} holds in a trivial way, provided that one adopts the convention that $\log 1/0 = \infty$.
\end{note}

\begin{proof}
Call $\eta' \coloneqq 1 - p_n\big(\bigcup_{t\in \mathcal{T}_n:\ \|s - t\|_\infty \leq \delta} T_{n,t} \big)$, so that $\eta'\leq \eta$. Let $p'_n$ be the probability distribution obtained from $p_n$ by cutting the `tails' that lie on types $t$ with $\|s-t\|_\infty > \delta$. In formula,
\bb
p'_n(x^n) = \left\{ \begin{array}{ll} \frac{p_n(x^n)}{1-\eta'} & \text{ if $\|t_{x^n} - s\|_\infty \leq \delta$,} \\[1.5ex] 0 & \text{ otherwise,} \end{array} \right.
\ee
where $t_{x^n}$ denotes the type of $x^n$. Note that
\bb
\frac12 \,\big\|p_n - p'_n\big\|_1 = \eta' \leq \eta\, ,
\ee
hence
\bb
\rel{D_{\max}^{\eta}}{p_n}{B_{n,m}^{\mathrm{cl}}(q_n)} \leq \rel{D_{\max}}{p'_n}{B_{n,m}^{\mathrm{cl}}(q_n)}\, .
\label{blurring_lemma_proof_first_estimate}
\ee
Our task now is to estimate the right-hand side. Since both $p'_n$ and $B_{n,m}^{\mathrm{cl}}(q_n)$ are permutationally invariant, for any sequence $x^n$ we have that $p'_n(x^n)$ and $\big(B_{n,m}^{\mathrm{cl}}(q_n)\big)(x^n)$ depend only on the type of $x^n$, denoted by $t_{x^n}$. Hence,
\bb
\rel{D_{\max}}{p'_n}{B_{n,m}^{\mathrm{cl}}(q_n)} &= \log \sup_{x^n\in \XX^n,\ \|t_{x^n} - s\|_\infty\leq \delta} \frac{p'_n(x^n)}{\big(B_{n,m}^{\mathrm{cl}}(q_n)\big)(x^n)} \\
&= \log \sup_{x^n\in \XX^n,\ \|t_{x^n} - s\|_\infty\leq \delta} \frac{p'_n\big(T_{n,\,t_{x^n}}\big) \big/ \big|T_{n,\,t_{x^n}}\big|}{\big(B_{n,m}^{\mathrm{cl}}(q_n)\big)\big(T_{n,\,t_{x^n}}\big) \big/ \big|T_{n,\,t_{x^n}}\big|} \\
&= \log \sup_{t_n \in \mathcal{T}_n,\ \|t_n - s\|_\infty\leq \delta} \frac{p'_n\big(T_{n,t_n}\big)}{\big(B_{n,m}^{\mathrm{cl}}(q_n)\big)\big(T_{n,t_n}\big)} \\
&\leq - \log \inf_{t_n\in \mathcal{T}_n,\ \|t_n - s\|_\infty\leq \delta} \big(B_{n,m}^{\mathrm{cl}}(q_n)\big)\big(T_{n,t_n}\big) .
\label{blurring_lemma_proof_second_estimate}
\ee
In other words, in order to guarantee that the left-hand side be small we need to make sure that $B_{n,m}^{\mathrm{cl}}(q_n)$ ends up having some sizeable weight on all the type classes that are close to $s$. To do this, we will observe that if $q_n$ has some non-zero weight on the types close to $s$, then blurring makes that weight spill over to \emph{all} types close to $s$. 

To formalise the above intuition we need to ask ourselves how to estimate $\big(B_{n,m}^{\mathrm{cl}}(q_n)\big)(T_{n,t})$ for some type $t\in \mathcal{T}_n$ with 
\bb
\|s-t\|_\infty\leq \delta\, .
\label{blurring_lemma_proof_closeness}
\ee
First of all, decompose $q_n = \sum_{x^n} q_n(x^n)\,\delta_{x^n}$ as the sum of the deterministic probability distributions 
\bb
\delta_{x^n}(y^n) = \left\{ \begin{array}{ll} 1 & \quad \text{$y^n=x^n$,} \\[.4ex] 0 & \quad \text{otherwise.} \end{array} \right. 
\ee
Up to summing, we only need to estimate $\big(B_{n,m}^{\mathrm{cl}}(\delta_{x^n})\big)(T_{n,t})$ for a fixed sequence $x^n$. Looking at the action of the blurring map, given by~\eqref{classical_blurring}, it is not difficult to realise that $B_{n,m}^{\mathrm{cl}}(\delta_{x^n})$ depends only on the \emph{type} of $x^n$, denoted by $u \coloneqq t_{x^n}$, rather than on the whole sequence. 

A little thought reveals that a way to represent the action of $B_{n,m}^{\mathrm{cl}}(\delta_{x^n})$ is to mix in an urn $n u(x) + m$ marbles of different colours $x\in \XX$ and to draw $n$ of them at random. In this representation, the value of $\big(B_{n,m}^{\mathrm{cl}}(\delta_{x^n})\big)(T_{n,t})$ is simply the probability of extracting $n t(x)$ marbles of each colour $x\in \XX$. To calculate it we can use the multivariate hypergeometric distribution in~\eqref{multivariate_hypergeometric}. Introducing the type
\bb
v_u(x) \coloneqq \frac{n u(x) + m}{n+m |\XX|} = \frac{n u(x) + \ceil{2\delta n}}{n+\ceil{2\delta n}|\XX|}
\ee
we obtain that
\bb
\big(B_{n,m}^{\mathrm{cl}}(\delta_{x^n})\big)(T_{n,t}) = \hyp{n+m|\XX|\,}{v_u}{n}(t)\, ,
\ee
and therefore
\bb
\big(B_{n,m}^{\mathrm{cl}}(q_n)\big)(T_{n,t}) &= \sum_{x^n} q_n(x^n)\, \big(B_{n,m}^{\mathrm{cl}}(\delta_{x^n})\big)(T_{n,t}) \\
&= \sum_{u\in \mathcal{T}_n} \,\sum_{x^n \in T_{n,u}} q_n(x^n)\, \hyp{n+m|\XX|\,}{v_u}{n}(t) \\
&= \sum_{u\in \mathcal{T}_n} q_n(T_{n,u})\, \hyp{n+m|\XX|\,}{v_u}{n}(t)\, . 
\ee
We can therefore estimate
\bb
\big(B_{n,m}^{\mathrm{cl}}(q_n)\big)(T_{n,t}) &\geqt{(i)} \!\sum_{u\in \mathcal{T}_n:\, \|s-u\|_\infty\leq \delta} q_n(T_{n,u})\, H_{n+m|\XX|,\, v_u;\, n}(t) \\
&\geqt{(ii)}\ 2^{-n g\big( \frac{m |\XX|}{n} \big)} \sum_{u\in \mathcal{T}_n:\, \|s-u\|_\infty\leq \delta} q_n(T_{n,u}) \\
&\geqt{(iii)}\ 2^{-n g\big( \big(2\delta + \frac1n\big) |\XX| \big)}\, q_n\left(\bigcup\nolimits_{u\in \mathcal{T}_n:\ \|s - u\|_\infty \leq \delta} T_{n,u} \right) .
\label{blurring_lemma_proof_third_estimate}
\ee
Here, (i)~follows by simply restricting the sum, while in~(ii) we used Lemma~\ref{hypergeometric_inequality_lemma}, which is applicable because for all $x\in \XX$
\bb
\big(n+m|\XX|\big)\, v_u(x) &= n u(x) + m = nu(x) + \ceil{2\delta n} \geq n \left(t(x) - 2\delta\right) + \ceil{2\delta n} \geq n t(x)\, ,
\ee
where the first inequality follows because
\bb
\|u-t\|_\infty \leq \|s - u\|_\infty + \|s - t\|_\infty \leq 2\delta
\ee
due to~\eqref{blurring_lemma_proof_closeness}. Finally, in~(iii) we used that $m=\ceil{2\delta n} \leq 2\delta n+1$, observing that the function $g$ is monotonically increasing.

Now we can really conclude. Using first~\eqref{blurring_lemma_proof_first_estimate}, then~\eqref{blurring_lemma_proof_second_estimate}, and finally~\eqref{blurring_lemma_proof_third_estimate}, yields
\bb
\rel{D_{\max}^{\eta}}{p_n}{B_{n,m}^{\mathrm{cl}}(q_n)} &\leq \rel{D_{\max}}{p'_n}{B_{n,m}^{\mathrm{cl}}(q_n)} \\
&\leq - \log \inf_{t\in \mathcal{T}_n,\ \|t - s\|_\infty\leq \delta} \big(B_{n,m}^{\mathrm{cl}}(q_n)\big)\big(T_{n,t}\big) \\
&\leq -\log q_n\!\left(\bigcup\nolimits_{u\in \mathcal{T}_n:\ \|s - u\|_\infty \leq \delta} T_{n,u} \right) + n g\!\left( \left(2\delta \!+\! \tfrac1n\right)\! |\XX| \right) ,
\ee
thereby concluding the proof.
\end{proof}

\subsection{The classical generalised Stein's lemma: hypothesis testing beyond the i.i.d.\ setting}
\label{subsec_classical_proof}

Armed with the classical blurring lemma, we will now present a full proof of the classical generalised Stein's lemma (Theorem~\ref{classical_GSL_thm}). Before we do that, we recall a useful result due to Sanov, and sometimes known as \emph{Sanov's theorem}~\cite{Sanov1957} (see also~\cite[Problem~11(a), p.~43]{CSISZAR-KOERNER}).

\begin{lemma}[(Sanov)] \label{Sanov_lemma}
Let $\XX$ be a finite alphabet, $\pazocal{A}\subseteq \PP(\XX)$ any set of probability distributions on $\XX$, and $p\in \PP(\XX)$. Then for all $n\in \N^+$ it holds that
\bb
p^{\otimes n} \left(\bigcup\nolimits_{t_n \in \mathcal{T}_n \cap \pazocal{A}} T_{n,\,t_n}\right) \leq (n+1)^{|\XX|-1}\, 2^{-n D(\pazocal{A}\|p)}\, .
\ee 
\end{lemma}

\begin{proof}[Proof of Theorem~\ref{classical_GSL_thm}]
Let $\eta,\e\in (0,1)$ be such that $\eta+\e < 1$, as in the statement of the theorem, and let $n\in \N^+$ be a positive integer. By construction (see~\eqref{smoothed_D_max}), there exists a probability distribution $p'_n\in\PP(\XX^n)$ such that
\bb
\frac12 \left\|p'_n - p^{\otimes n}\right\|_1\leq \e\, ,\qquad D_{\max}(p'_n\|\FF_n) = \rel{D_{\max}^\e}{p^{\otimes n}}{\FF_n}\, .
\label{classical_GSL_thm_proof_eq1}
\ee
Up to applying a uniformly random permutation, which never increases $D_{\max}(\cdot\|\FF_n)$ due to Axioms~1 and~5 in Section~\ref{sec_intro}, we can assume without loss of generality that $p'_n$ is permutationally symmetric.

As a preliminary calculation, note that
\bb
1 - p^{\otimes n}\left( \bigcup\nolimits_{t\in \mathcal{T}_n:\, \|p-t\|_\infty \leq \delta_n} T_{n,t} \right) &= p^{\otimes n}\left( \bigcup\nolimits_{t\in \mathcal{T}_n:\, \|p-t\|_\infty > \delta_n} T_{n,t} \right) \\
&\leqt{(i)} (n+1)^{|\XX|-1}\, 2^{- n \inf_{t\in \mathcal{T}_n:\, \|p-t\|_\infty > \delta_n} D(t\|p)} \\
&\leqt{(ii)} (n+1)^{|\XX|-1}\, e^{- 2n \delta_n^2} \\
&\eqt{(iii)} \,\eta\, .
\ee
Here, the inequality in~(i) follows from Sanov's theorem (Lemma~\ref{Sanov_lemma}), that in~(ii) descends from Pinsker's inequality~\cite{Pinsker, Csiszar1967, Kullback1967} via the calculation
\bb
D(t\|p) \geq \frac{\log e}{2}\, \|p-t\|_1^2 \geq 2 (\log e)\, \|p-t\|_\infty^2\, ,
\ee
and in~(iii) we employed~\eqref{delta_n}. As a consequence,
\bb
p'_n \left( \bigcup\nolimits_{t\in \mathcal{T}_n:\, \|p-t\|_\infty \leq \delta_n} T_{n,t} \right) \geq 1 - \e - \eta\, .
\ee
We can now apply Lemma~\ref{blurring_lemma} with $p_n \mapsto p^{\otimes n}$, $q_n\mapsto p'_n$, $s\mapsto p$, and $\delta \mapsto \delta_n$, obtaining
\bb
&\rel{D_{\max}^{\eta}}{p^{\otimes n}\!}{\!B_{n,m}^{\mathrm{cl}}(p'_n)} \leq \log \frac{1}{1\!-\!\e\!-\!\eta} + n g\!\left( \left(2\delta_n \!+\! \tfrac1n\right)\! |\XX| \right) ,
\ee
for $m = \ceil{2n\delta_n}$. Using the easily verified triangle inequality $D_{\max}(p\|q) \leq D_{\max}(p\|r) + D_{\max}(r\|q)$ for the max-relative entropy, we can now write that
\bb
\rel{D_{\max}^\eta}{p^{\otimes n}}{\FF_n} &\leq \rel{D_{\max}^{\eta}}{p^{\otimes n}}{B_{n,m}^{\mathrm{cl}}(p'_n)} + \rel{D_{\max}}{B_{n,m}^{\mathrm{cl}}(p'_n)}{\FF_n} \\
&\leq\, \log \frac{1}{1-\e-\eta} + n g\!\left( \left(2\delta_n + \tfrac1n\right) |\XX| \right) + \rel{D_{\max}}{B_{n,m}^{\mathrm{cl}}(p'_n)}{\FF_n} \\
&\leqt{(iv)}\, \log \frac{1}{1-\e-\eta} + n g\!\left( \left(2\delta_n + \tfrac1n\right) |\XX| \right) + \rel{D_{\max}}{p'_n}{\FF_n} + m |\XX| \log\tfrac{1}{c} \\
&\leqt{(v)}\, \log \frac{1}{1-\e-\eta} + n g\!\left( \left(2\delta_n + \tfrac1n\right) |\XX| \right) + \rel{D_{\max}^\e}{p^{\otimes n}}{\FF_n}  + \left(2n\delta_n + 1\right) |\XX| \log\tfrac{1}{c}\, .
\ee
Here, in~(iv) we observed that the classical blurring map can only increase the max-relative entropy of resource by at most $m|\XX| \log(1/c)$, because it adds at most $m$ copies of the probability distributions $\delta_x$, for all $x\in \XX$, and $\delta_x \leq \frac1c q_0$, where $q_0$ is the probability distribution with full support --- called $\sigma_0$ in the quantum case --- whose existence is guaranteed by Axiom~2. We are going to repeat this step in the quantum case, so we refer the reader to the forthcoming Lemma~\ref{D_max_elementary_lemma} for a more detailed justification of it. In~(v) we instead exploited~\eqref{classical_GSL_thm_proof_eq1} and remembered that $m=\ceil{2n\delta_n}$.
\end{proof}

\section{The quantum solution} \label{sec_quantum_solution}

This section is devoted to the full proof of the generalised quantum Stein's lemma (Theorem~\ref{GQSL_thm}). The linchpin of our approach will be, once again, a suitably quantised version of the classical blurring lemma, the forthcoming Lemma~\ref{quantum_blurring_lemma}. We will first show how to solve the generalised quantum Stein's lemma using Lemma~\ref{quantum_blurring_lemma}, and then devote the rest of the paper (Sections~\ref{subsec_alternative_expressions_blurring}--\ref{subsec_proof_quantum_blurring_lemma}) to the proof this latter result.

\subsection{An informal description of the argument} \label{subsec_informal_quantum}

The first part of the argument proceeds as in the classical case (Section~\ref{subsec_intuitive_description_classical}). For some $\e\in (0,1)$, due to Remark~\ref{reformulations_rem} we only have to prove that $\liminf_{n\to\infty} \frac1n\, \rel{D_{\max}^\e}{\rho^{\otimes n}}{\FF_n} \geq D^\infty(\rho\|\FF)$. As in the classical case, we proceed by contradiction and assume that $\rel{D_{\max}^\e}{\rho^{\otimes n}}{\FF_n} \lesssim n \lambda$ for some $\lambda < D^\infty(p\|\FF)$, so that, for all sufficiently large $n$, we can write that $\rho_n \leq 2^{n\lambda} \sigma_n$ for some approximating state $\rho_n \approx_\e \rho^{\otimes n}$ and some free state $\sigma_n$. Another similarity with the classical case is that both $\rho_n$ and $\sigma_n$ can be assumed to be permutationally invariant without loss of generality, due to Axioms~1 and~5.

In the classical case the argument continued with the application of the classical blurring map to $\rho_n$. We can try to replicate the same procedure here, constructing a quantum blurring map whose action is composed of the following three steps:
\begin{enumerate}[(i)]
\item append to the input state $\floor{\delta n}$ copies of $\rho$;
\item apply a uniformly random permutation over the $n+\floor{\delta n}$ systems;
\item discard $\floor{\delta n}$ systems, so as to go back to an $n$-copy system.
\end{enumerate}
We denote by $\widetilde{\rho}_n$ the state obtained by applying the above quantum blurring procedure to $\rho_n$. Since among the steps above only~(i) adds some resource to the input state, it is not difficult to prove that also $\widetilde{\rho}_n$ has a max-relative entropy of resource bounded by $n\lambda$ up to a small correction, in the sense that $\widetilde{\rho}_n \leq 2^{n(\lambda + \delta')} \sigma'_n$, where $\delta'$ is a constant multiple of $\delta$, and $\sigma'_n$ is another free state (in general different from $\sigma_n$).

Asymptotic continuity, which works in the quantum just as well as in the classical case, implies that when $\eta \in (0,1)$ is really small ($\eta\to 0$) we will have $\rel{D_{\max}^\eta}{\rho^{\otimes n}}{\FF_n} \gtrsim n D^\infty(\rho\|\FF)$. Once again, however, the problem is to connect a statement like this for $\eta \to 0$ with the same statement when $\eta$ is replaced by a much larger $\e\in (0,1)$. The quantum blurring lemma, like the classical one, will do precisely that.

Informally, it should tell us that $\widetilde{\rho}_n$ dominates $\rho^{\otimes n}$ \emph{approximately}, in the sense that there exists some $\rho'_n \approx_\eta \rho^{\otimes n}$, with $\eta$ small, such that $\rho'_n \leq 2^{n\zeta} \widetilde{\rho}_n$ for some other small parameter $\zeta$. If we could show that, then we would get that
\begin{align}
D^\infty(\rho\|\FF) &\lesssim \frac1n\, \rel{D_{\max}^\eta}{\rho^{\otimes n}}{\FF_n} \nonumber \\
&\leq \frac1n\, \rel{D_{\max}^\eta}{\rho^{\otimes n}}{\widetilde{\rho}_n} + \frac1n\, \rel{D_{\max}}{\widetilde{\rho}_n}{\FF_n} \\
&\leq \zeta + \lambda + \delta'\, . \nonumber
\end{align}
Taking $\delta'$ and $\zeta$ sufficiently small, we would reach a contradiction with the assumption that $\lambda < D^\infty(\rho\|\FF)$, and we would be done.

A beautiful lemma by Datta and Renner~\cite{Datta2009} entails that the existence of some $\rho'_n \approx_\eta \rho^{\otimes n}$ such that $\rho'_n \leq 2^{n\zeta} \widetilde{\rho}_n$ is equivalent to the inequality
\bb
\Tr \left( \rho^{\otimes n} - 2^{n\zeta} \widetilde{\rho}_n \right)_+ \leqt{?} \eta'\, ,
\label{informal_key_inequality}
\ee
where $X_+$, defined by~\eqref{positive_part}, denotes the positive part of an operator $X$ (where we keep only the positive eigenvalues), and $\eta'$ is another small parameter related to $\eta$. To prove the above statement, we first show that it suffices to solve the case where both $\rho$ and $\rho_n$ are pure states, with $\rho_n$ permutationally symmetric --- and hence supported on the symmetric space. This might appear counter-intuitive, but it really just follows from the existence of symmetric purifications of any permutationally symmetric state, together with the data processing inequality for the function $\Tr X_+$.

In fact, to simplify the notation we can set $\rho = \ketbra{0}$ equal to the projector onto the first vector of the computational basis. Since we have never chosen a basis so far, this is by no means a loss of generality. But it does suggest an idea. To describe it, it is convenient to consider the case where the local systems are single-qubit systems (i.e.\ $d=2$). The symmetric space $\mathrm{Sym}^n(\C^2)$, on which $\rho_n$ is supported, can be naturally embedded into the Fock space $\ell^2(\N)$ spanned by the Fock states $\ket{0}$, $\ket{1}$, $\ket{2}$, etc. In quantum physics, such a space models a quantum harmonic oscillator, or a single mode of light of definite frequency and polarisation. For this reason, it is also called a bosonic mode. The embedding is easy to understand: the Fock number counts the number of $1$'s appearing in the computational basis representation of a vector in $\mathrm{Sym}^n(\C^2)$. For example, $\ket{0^n}\in \mathrm{Sym}^n(\C^2)$ corresponds to the vacuum state $\ket{0}$, the symmetrised `one-excitation state' $\frac{1}{\sqrt{n}}\left(\ket{100\ldots 0} + \ket{010\ldots 0} + \ldots + \ket{0\ldots 01}\right)$ corresponds to $\ket{1}$, and so forth.

The second technical innovation we introduce is a procedure to lift the above problem to the bosonic space. We will prove that the quantum blurring map corresponds, roughly speaking, to a \emph{pure loss bosonic channel}, one of the most studied objects in bosonic quantum information theory. This is not entirely surprising, given that the action of blurring is to scatter some excitations into an environment, thereby losing them. However, looking at this from a distance, we found it quite surprising --- and, frankly, somewhat gratifying --- that this solution of the generalised quantum Stein's lemma hides at its core a pure loss channel. 

At any rate, most of the work needed to establish the quantum solution lies in proving that in the limit $n\to\infty$ the finite-dimensional problem~\eqref{informal_key_inequality} is mapped to a relation that is \emph{roughly} (but not precisely) equivalent to
\bb
\lim_{M\to\infty} \Tr \left(\ketbra{0} - M \EE_{1-\delta}(\omega) \right)_+ \eqt{?} 0\, ,
\label{informal_key_inequality_2}
\ee
where $\omega$ is an arbitrary state on a single bosonic mode, and $\EE_\lambda$ is the pure loss channel with transmissivity $\lambda$. (Note that the diverging coefficient $2^{n\zeta}$ in~\eqref{informal_key_inequality} has been replaced by an external limit $M\to\infty$ in~\eqref{informal_key_inequality_2}.) Some simple linear algebra reveals that~\eqref{informal_key_inequality_2} amounts to saying that $\ket{0}$ belongs to the support of $\EE_{1-\delta}(\omega)$, for all $\omega$. But this is clearly false: by choosing as $\omega$ a coherent state (see the forthcoming~\eqref{coherent_state}), due to~\eqref{pure_loss_action_on_coherent} we get at the output another coherent state, and in particular a pure state that is not the vacuum. However, not all hope is lost, because coherent states are really the \emph{only} states that remain pure after the action of the pure loss channel --- most states, on the contrary, will become heavily mixed. To fix this last problem it suffices to choose a slightly modified version of the blurring map in which $\delta$, the fraction of systems we mix in, is a \emph{random variable}, e.g.\ uniformly distributed in $[0,\Delta]$, for some small $\Delta$. Eq.~\eqref{informal_key_inequality_2} then is transformed into
\bb
\lim_{M\to\infty} \Tr \left(\ketbra{0} - M \int_0^\Delta \frac{\dd\delta}{\Delta}\ \EE_{1-\delta}(\omega) \right)_+ \eqt{?} 0\, ,
\label{informal_key_inequality_3}
\ee
As it turns out, the vacuum state $\ket{0}$ does belong to the support of $\int_0^\Delta \frac{\dd\delta}{\Delta}\ \EE_{1-\delta}(\omega)$, for all input states $\omega$. This will allow us to prove a statement analogous to~\eqref{informal_key_inequality_3} and thereby conclude the proof.

\subsection{The quantum blurring lemma} \label{subsec_quantum_blurring_lemma}

We will now discuss how to quantise the classical blurring map~\eqref{classical_blurring} that played a key role in the classical solution of the generalised Stein's lemma. For some positive integer $d\in \N$, some state $\rho\in \D\big(\C^d\big)$ in dimension $d$, and some $\delta\in (0,\frac12]$, consider the \deff{$\boldsymbol{\rho}$-dependent blurring map}
\bb
\widebar{B}_{n,\delta}^\rho &: \LL\big((\C^d)^{\otimes n} \big) \to \LL\big((\C^d)^{\otimes n} \big)\, , \\
\widebar{B}_{n,\delta}^\rho (X) &\coloneqq \Tr_{\floor{\delta n}} \mathcal{S}_{n+\floor{\delta n}}\big(X \otimes \rho^{\otimes \floor{\delta n}} \big)\, ,
\label{quantum_blurring}
\ee
where for some $N$-partite quantum system $A^N$ the symmetrisation map is defined by
\bb
\mathcal{S}_N(X) \coloneqq \frac{1}{N!} \sum_{\pi\in S_N} U_\pi^{\vphantom{\dag}} X U_\pi^{\dag}\, ,
\label{symmetrisation}
\ee
with $U_\pi$ being the unitary that implements the permutation $\pi$ over the $N$ copies of $A$, and $S_N$ denoting the symmetric group. Also, the operation $\Tr_{\floor{\delta n}}$ appearing in~\eqref{quantum_blurring} denotes the partial trace over $\floor{\delta n}$ of the $n+\floor{\delta n}$ copies of the space $\C^d$ --- which ones is irrelevant, as the state under examination is permutationally invariant. The action of the map in~\eqref{quantum_blurring} is similar to that of the classical blurring map in~\eqref{classical_blurring}: in both cases, we add some noise by tracing away randomly chosen sub-systems of the input state and shuffling in a small number of systems in a reference state.

In~\eqref{quantum_blurring} we chose to blur by adding copies of $\rho$. This is not really important, as we could equivalently choose to add copies of any other state with support no smaller than that of $\rho$. In fact, without loss of generality we could add copies of the maximally mixed state $\id/d$ on $\C^d$. However, blurring with $\rho$ simplifies the analysis considerably, especially when $\rho$ is pure. Since we will later see that we can always anyway assume that $\rho$ be pure without loss of generality, it is not only instructive but also useful to look at this case more closely. In fact, choosing wisely the basis we work with in $\C^d$, we can assume that $\rho = \ketbra{0}$ coincides with the projector onto the first vector of that basis. In the pure state case we are going to make also another simplification. Denoting with $\Pi_n$ the projector onto the symmetric subspace $\mathrm{Sym}^n(\C^d)$ of $(\C^d)^{\otimes n}$ and assuming that the input is also supported on the same space, we can also construct the alternative \deff{quantum blurring map}
\bb
&B_{n,\delta} (X): \LL\big(\mathrm{Sym}^n\big(\C^d\big)\big) \to \LL\big(\mathrm{Sym}^n\big(\C^d\big)\big)\, , \\[4pt]
&B_{n,\delta} (X) \coloneqq \Pi_n\, \widebar{B}^{\,\ket{0}\!\bra{0}}_{n,\delta} (X)\, \Pi_n = \Pi_n \left(\Tr_{\floor{\delta n}} \mathcal{S}_{n+\floor{\delta n}}\left(X \otimes \ketbra{0}^{\otimes \floor{\delta n}} \right) \right) \Pi_n\, .
\label{blurring}
\ee
(With a slight but very convenient abuse of notation, throughout this paper we will consider $\mathrm{Sym}^n(\C^d)$ alternatively as a subspace of $(\C^d)^{\otimes n}$ or as a space on its own.) Throughout most of Section~\ref{subsec_alternative_expressions_blurring}--\ref{subsec_convergence}, when we will refer to the quantum blurring map we will be talking about~\eqref{blurring} rather than~\eqref{quantum_blurring}. Note that unlike $\widebar{B}_{n,\delta}^\rho$, which is a proper quantum channel, due to the presence of the projector $\Pi_n$ the map $B_{n,\delta}$ is only a \emph{sub-}channel.

The key technical result that underpins our proof of the generalised quantum Stein's lemma is the following quantum version of the classical blurring lemma (Lemma~\ref{blurring_lemma}).

\begin{lemma}[(Asymptotic quantum blurring lemma)] \label{quantum_blurring_lemma}
Let $\rho\in \D\big(\C^d\big)$ be a finite-dimensional state, and for some infinite set $I\subseteq \N$ let $(\rho_n)_{n\in I}$ be a sequence of permutationally symmetric $n$-copy states $\rho_n = \mathcal{S}_n(\rho_n) \in \D\big( (\C^d)^{\otimes n}\big)$ (see~\eqref{symmetrisation} for a definition of $\mathcal{S}_n$) such that
\bb
\limsup_{n\in I} \frac12\, \big\| \rho_n - \rho^{\otimes n}\big\|_1 \leq \e
\ee
for some $\e\in (0,1)$. Then there exists an infinite subset $I'\subseteq I$ such that, for all $\Delta\in (0,\frac12]$,
\bb
\lim_{M\to\infty} \limsup_{n\in I'} \Tr \left( \rho^{\otimes n} - M \int_0^{\Delta} \frac{\dd \delta}{\Delta}\ \widebar{B}_{n,\delta}^\rho(\rho_n) \right)_+ = 0\, ,
\ee
where $\Tr X_+$ denotes the trace of the positive part of $X$ (see~\eqref{positive_part} for a definition).
\end{lemma}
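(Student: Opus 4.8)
The plan is to follow the four-step strategy sketched in Section~\ref{subsec_informal_quantum}: reduce to pure states, second-quantise, pass to a bosonic limit by weak* compactness, and finally solve a support problem for the pure loss channel.

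\emph{Reduction to pure states.} First I would purify. Since $\rho_n$ is permutationally symmetric it admits a permutationally symmetric purification $\ket{\psi_n}\in (\C^{d}\otimes\C^{d'})^{\otimes n}$ with $d'$ large enough, and $\rho$ has a purification $\ket\phi$, so that $\rho^{\otimes n}$ is purified by $\ket\phi^{\otimes n}$. Because the blurring map \eqref{quantum_blurring} appends copies of the seed, symmetrises, and traces out systems --- all of which commute with tracing out the purifying factors, as these are permuted identically --- one has $\widebar B_{n,\delta}^{\rho}(\rho_n)=\Tr_{\mathrm{purif}}\widebar B_{n,\delta}^{\ketbra\phi}(\ketbra{\psi_n})$ and $\rho^{\otimes n}=\Tr_{\mathrm{purif}}\ketbra\phi^{\otimes n}$. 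Since the partial trace is a channel, the data processing inequality for $\Tr(\cdot)_+$ (Lemma~\ref{variational_program_trace_X_plus_lemma}(c)) shows that the target quantity does not increase when passing to the purified level. Hence it suffices to treat the case in which $\rho$ and all $\rho_n$ are pure; after purifying, relabelling the enlarged single-copy space as $\C^d$ and choosing a basis, I set $\rho=\ketbra 0$, so that blurring adds copies of the vacuum and the pure states $\ket{\psi_n}$ live in $\mathrm{Sym}^n(\C^d)$, where the symmetric-subspace form \eqref{blurring} of the blurring map applies.

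\emph{Second quantisation.} Next I would embed $\mathrm{Sym}^n(\C^d)$ into a $(d-1)$-mode bosonic Fock space, sending the type basis $\ket{n,t_n}$ to the Fock vector recording how many of the $n$ systems occupy each non-reference basis direction, with $\ket{0^n}$ mapped to the multimode vacuum. Using the partial-trace and overlap formulas of Lemma~\ref{n_k_calculations_lemma} --- whose combinatorial content is governed by the (multivariate) hypergeometric weights controlled in Lemma~\ref{hypergeometric_inequality_lemma} --- I would compute the matrix elements of $B_{n,\delta}$ in the type/Fock basis and show that, on any fixed finite-photon-number sector, they converge as $n\to\infty$ to those of the pure loss channel $\EE_{1-\delta}$ given in \eqref{pure_loss_action_on_Fock}, the transmissivity $1-\delta$ arising from the surviving fraction $\approx n/(n+\floor{\delta n})$ of the systems. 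This establishes the remarkable identification of blurring with attenuation, yielding strong convergence $B_{n,\delta}\to\EE_{1-\delta}$ of channels.

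\emph{Bosonic limit via compactness.} The embedded symmetric states form a trace-norm-bounded sequence, so by weak* sequential compactness (Lemma~\ref{weak_star_seq_compactness_lemma}) I extract an infinite $I'\subseteq I$ along which they converge weak* to some bosonic state $\omega$. Combining this with the strong convergence of the blurring maps, the averaged blurred states $\int_0^\Delta\frac{\dd\delta}{\Delta}\widebar B_{n,\delta}^\rho(\rho_n)$ converge to $\int_0^\Delta\frac{\dd\delta}{\Delta}\EE_{1-\delta}(\omega)$, and the finite-dimensional target $\Tr\big(\rho^{\otimes n}-M\int_0^\Delta\frac{\dd\delta}{\Delta}\widebar B_{n,\delta}^\rho(\rho_n)\big)_+$ is controlled, along $n\in I'$, by the bosonic quantity $\Tr\big(\ketbra 0-M\int_0^\Delta\frac{\dd\delta}{\Delta}\EE_{1-\delta}(\omega)\big)_+$. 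This is the step that is only roughly, not exactly, an equivalence, and where lower semicontinuity of $\Tr(\cdot)_+$ under weak* limits and a careful truncation of the photon number must be invoked.

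\emph{The support problem and the main obstacle.} It then remains to prove the bosonic statement \eqref{informal_key_inequality_2}, which, since $\lim_{M\to\infty}\Tr(\ketbra 0-MS)_+=0$ holds exactly when $\ket 0\in\supp(S)$, reduces to showing $\ket 0\in\supp\big(\int_0^\Delta\frac{\dd\delta}{\Delta}\EE_{1-\delta}(\omega)\big)$ for every state $\omega$. Because the integrand is positive and continuous in $\delta$, the support of the average equals $\overline{\Span}\bigcup_{\delta\in[0,\Delta]}\supp\EE_{1-\delta}(\omega)$; for a single $\delta$ this can miss $\ket 0$ (if $\omega$ is coherent then $\EE_{1-\delta}(\omega)$ is again a pure coherent state whose one-dimensional support avoids the vacuum), which is exactly why one transmissivity is insufficient. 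Averaging over a continuum of transmissivities repairs this: a vector orthogonal to all $\supp\EE_{1-\delta}(\omega)$, $\delta\in[0,\Delta]$, forces an analytic function of $\delta$ to vanish on an interval and hence identically, which via the analyticity of coherent-state overlaps (equivalently, of the Fock-basis expression \eqref{pure_loss_action_on_Fock}) forces the vector to be zero; thus the span is all of Fock space and contains $\ket 0$. I expect the genuine difficulty to lie in the second and third steps --- making the convergence of blurring to the pure loss channel quantitative and transferring the positive-part trace across the $n\to\infty$ limit with every error term under control --- which is precisely the material deferred to Sections~\ref{subsec_alternative_expressions_blurring}--\ref{subsec_proof_quantum_blurring_lemma}.
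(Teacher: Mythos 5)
Your outline follows the same architecture as the paper's proof --- purification, second quantisation onto Fock space, weak* compactness, identification of the limit with an attenuation channel, and an analyticity argument for the vacuum-support property --- but it has a genuine gap at the single most delicate step, the one you label ``Bosonic limit via compactness''. Weak* convergence $\widetilde{\phi}_n \tendsn{w^*}\omega$ together with strong (pointwise-in-trace-norm) convergence of the channels does \emph{not} imply that $\Lambda_n(\widetilde{\phi}_n)$ converges to $\Lambda(\omega)$ in any topology strong enough to control $\Tr(\ketbra{0}^{\otimes(d-1)}-M\,\cdot\,)_+$ from above; and the ``lower semicontinuity of $\Tr(\cdot)_+$ under weak* limits'' you invoke points in the wrong direction, since what is needed is an upper bound on $\limsup_n \Tr(\ketbra{0}^{\otimes(d-1)}-M\Lambda_n(\widetilde\phi_n))_+$ in terms of the limiting object, not a lower bound on a liminf. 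The paper closes exactly this gap with Proposition~\ref{blurring_deficient_operators_prop} and Corollary~\ref{damping_weakly_vanishing_seq_cor}: the blurring map exponentially damps matrix elements supported on high photon numbers (via the estimate on $d_r(t_n)$ in Lemma~\ref{d_r_bound_lemma}), so that $\|\widetilde{B}_{n,\delta}(X_n)\|_1\to 0$ whenever $X_n$ is bounded and weak*-null. This collective-compactness property is what upgrades $M\|\Lambda_n(\omega-\widetilde\phi_n)\|_1$ to a vanishing error term in the three-term decomposition via sub-additivity of $\Tr(\cdot)_+$; your ``careful truncation of the photon number'' controls the truncated block $P_N X_n P_N$ but gives you no handle on the tail after blurring without the damping estimate, so as written the argument does not go through.

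Two further points. First, the limit of the lifted blurring map is not the pure loss channel $\EE_{1-\delta}$: the correct limit (Corollary~\ref{lifted_blurring_strong_convergence_cor}) is $\big(\EE_{\lambda(\delta)}\circ\D_{\mu(\delta)}\big)^{\otimes(d-1)}$ with $\lambda(\delta)=1/(1+\delta(1+\delta))$ and $\mu(\delta)=\sqrt{1+\delta(1+\delta)}/(1+\delta)<1$, which is a strict \emph{sub}-channel; a trace-preserving limit would in fact be inconsistent with the damping property above, so this is not merely a bookkeeping error. Second, your support argument overreaches: the claim that $\supp\big(\int_0^\Delta\frac{\dd\delta}{\Delta}\,\EE_{\lambda(\delta)}\circ\D_{\mu(\delta)}(\omega)\big)$ is all of Fock space is false (take $\omega=\ketbra{0}$, whose image is $\ketbra{0}$ for every $\delta$). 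The correct and sufficient statement, which is what Lemma~\ref{vacuum_in_support_lemma} proves with the same analyticity idea you describe, is only that every vector orthogonal to that support has vanishing vacuum amplitude, whence $\ket{0}^{\otimes m}$ lies in the support. Your reduction to pure states is essentially the paper's, though you should note that the purifications must be chosen (as in Lemma~\ref{perm_inv_purifications_lemma}) so that the closeness hypothesis survives at the purified level with some $\e'=\sqrt{1-(1-\e)^2}<1$.
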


The proof of the above result is deferred to Section~\ref{subsec_proof_quantum_blurring_lemma}. Before we move on, it is instructive to note the main difference between Lemma~\ref{blurring_lemma}, which is a one-shot result, and Lemma~\ref{quantum_blurring_lemma}, which is instead asymptotic. This is not a coincidence: while the $\rho$-dependent blurring map~\eqref{quantum_blurring} is very similar, from a conceptual standpoint, to its classical counterpart~\eqref{classical_blurring}, the availability of the formalism of types makes the analysis of its action much easier in the classical case, even in the regime of finite $n$. Without types, the quantum proof must rely on the aforementioned bosonic lifting procedure, which allows us to regain control of blurring in the asymptotic limit where $n\to\infty$. The intrinsically asymptotic nature of this argument is such that we have not yet been able to obtain a one-shot version of the quantum blurring lemma. This should be possible, however, and as an immediate implication one would get a one-shot control over the behaviour of the hypothesis testing relative entropy associated with resource testing.

\subsection{Proof of the generalised quantum Stein's lemma using quantum blurring} 
\label{subsec_proof_GQSL}

Here we present a proof of Theorem~\ref{GQSL_thm} that assumes the quantum blurring lemma (Lemma~\ref{quantum_blurring_lemma}). This latter result will be proved in Section~\ref{subsec_proof_quantum_blurring_lemma}. We start with two well-known preliminary results.

\begin{lemma} \label{D_max_elementary_lemma}
Let $\HH$ be a finite-dimensional Hilbert space, and let $(\FF_n)_n$ be a sequence of sets of states $\FF_n\subseteq \D\big(\HH^{\otimes n}\big)$ that obeys Axioms~2 and~4 in Section~\ref{sec_main_result}. Then the max-relative entropy of resource is sub-additive, meaning that
\bb
&\rel{D_{\max}}{\rho_n\!\otimes\! \rho_m\!}{\!\FF_{n+m}} \leq \rel{D_{\max}}{\rho_n\!}{\!\FF_n} + \rel{D_{\max}}{\rho_m\!}{\!\FF_m}
\label{subadditivity_D_max}
\ee
for all $\rho_n \in \D\big(\HH^{\otimes n}\big)$ and $\rho_m\in \D\big(\HH^{\otimes m}\big)$, and obeys the universal upper bound
\bb
\rel{D_{\max}}{\rho_n}{\FF_n} \leq n \log\frac1c \qquad \forall\ \rho_n \in \D\big(\HH^{\otimes n}\big)\, ,
\label{universal_upper_bound_D_max}
\ee
where $c>0$ is the constant whose existence is guaranteed by Axiom~2.
\end{lemma}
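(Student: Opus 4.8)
The plan is to derive both statements directly from the variational definition of the max-relative entropy in~\eqref{D_max}, using nothing beyond the multiplicativity of operator inequalities under tensor products together with Axioms~2 and~4. Neither claim needs any of the deeper blurring machinery; both are purely structural, which is why they are flagged as well-known preliminaries.

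For the sub-additivity~\eqref{subadditivity_D_max}, I would set $\lambda_n \coloneqq D_{\max}(\rho_n\|\FF_n)$ and $\lambda_m \coloneqq D_{\max}(\rho_m\|\FF_m)$ and choose optimisers $\sigma_n\in\FF_n$, $\sigma_m\in\FF_m$ realising $\rho_n\leq 2^{\lambda_n}\sigma_n$ and $\rho_m\leq 2^{\lambda_m}\sigma_m$. Axiom~4 guarantees $\sigma_n\otimes\sigma_m\in\FF_{n+m}$, so the only thing left to check is that the two operator inequalities combine multiplicatively. This is handled by inserting an intermediate term,
\[
\rho_n\otimes\rho_m \;\leq\; 2^{\lambda_n}\,\sigma_n\otimes\rho_m \;\leq\; 2^{\lambda_n+\lambda_m}\,\sigma_n\otimes\sigma_m\,,
\]
where the first inequality uses $(2^{\lambda_n}\sigma_n-\rho_n)\otimes\rho_m\geq 0$ and the second uses $\sigma_n\otimes(2^{\lambda_m}\sigma_m-\rho_m)\geq 0$, both being instances of the elementary fact that the tensor product of two positive semi-definite operators is positive semi-definite. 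Feeding $\sigma_n\otimes\sigma_m$ back into the definition of $D_{\max}(\,\cdot\,\|\FF_{n+m})$ then yields~\eqref{subadditivity_D_max}.

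For the universal bound~\eqref{universal_upper_bound_D_max}, I would invoke Axiom~2 to obtain a full-rank free state $\sigma_0\in\FF_1$ with $\sigma_0\geq c\,\id$, and Axiom~4 (iterated $n$ times) to conclude $\sigma_0^{\otimes n}\in\FF_n$. Since $\sigma_0\geq c\,\id$ gives $\sigma_0^{\otimes n}\geq c^n\,\id$, and every density operator satisfies $\rho_n\leq\id$, one obtains the chain $\rho_n\leq\id\leq c^{-n}\,\sigma_0^{\otimes n}=2^{\,n\log(1/c)}\,\sigma_0^{\otimes n}$, where the base-$2$ convention makes $2^{\,n\log(1/c)}=c^{-n}$. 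Hence $D_{\max}\big(\rho_n\,\big\|\,\sigma_0^{\otimes n}\big)\leq n\log\frac1c$, and taking the infimum over $\FF_n$ delivers~\eqref{universal_upper_bound_D_max}.

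I do not expect any genuine obstacle here, as the argument is entirely routine. The only points that require care are bookkeeping ones: keeping the base-$2$ logarithm convention consistent so that $2^{\,n\log(1/c)}=c^{-n}$, and noting that $\sigma_0^{\otimes n}\in\FF_n$ arises from iterating Axiom~4. This same closure under tensor products is precisely what, via Fekete's lemma, makes the regularised limit in~\eqref{regularised_relent_resource} well defined, so the lemma fits naturally as a foundational ingredient of the subsequent proof.
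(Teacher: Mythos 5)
Your proposal is correct and follows essentially the same route as the paper's own proof: tensoring the two operator inequalities and invoking Axiom~4 for sub-additivity, and the chain $\rho_n\leq\id\leq c^{-n}\sigma_0^{\otimes n}$ for the universal bound. You merely spell out the intermediate steps (the two-stage tensor comparison and the base-$2$ bookkeeping) that the paper leaves implicit.
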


\begin{proof}
If $\rho_n \leq 2^{\lambda} \sigma_n$ and $\rho_m \leq 2^{\mu} \sigma_m$ for some $\sigma_n\in \FF_n$ and $\sigma_m\in \FF_m$, then clearly $\rho_n\otimes \rho_m \leq 2^{\lambda+\mu} \sigma_n\otimes \sigma_m$, which implies~\eqref{subadditivity_D_max} once one observes that $\sigma_n\otimes \sigma_m\in \FF_{n+m}$ via Axiom~4, and then minimises over $\lambda$ and $\mu$. Eq.~\eqref{universal_upper_bound_D_max} follows instead from the simple operator upper bound $\rho_n \leq \id \leq c^{-n} \sigma_0^{\otimes n}$, with $\sigma_0$ being the state given by Axiom~2.
\end{proof}

The following is a slight rephrasing of a result by Winter~\cite[Lemma~7]{tightuniform}, in turn a sharpening of the main result of~\cite{Donald1999} and of~\cite[Proposition~3.23]{MatthiasPhD}. We do not report its proof here.

\begin{lemma}[(Asymptotic continuity of the relative entropy of resource~\cite{Donald1999, MatthiasPhD, tightuniform})] \label{AC_lemma}
Let $\FF\subseteq \D(\HH)$ be a closed convex set of finite-dimensional states that contains a full-rank state $\sigma_0\in \FF$ with $\sigma_0 \geq c\id > 0$. Then, for all $\rho,\rho'\in \D(\HH)$ with $\frac12\|\rho - \rho'\|_1\leq \e$, it holds that
\bb
\big| D(\rho\|\FF) - D(\rho'\|\FF) \big| \leq \e \log\tfrac1c + g(\e)\, ,
\ee
where $g$ is the function defined by~\eqref{g}.
\end{lemma}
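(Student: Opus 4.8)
The plan is to treat this as an Alicki--Fannes--Winter type uniform continuity bound for the resource relative entropy $E_\FF(\cdot) \coloneqq D(\cdot\,\|\,\FF) = \min_{\sigma\in\FF} D(\cdot\,\|\,\sigma)$, which is finite everywhere because $\FF$ contains the full-rank state $\sigma_0\geq c\id$. By the symmetry $\rho\leftrightarrow\rho'$ it suffices to prove the one-sided estimate $E_\FF(\rho')\leq E_\FF(\rho) + \e'\log(1/c) + g(\e')$ with $\e'\coloneqq \frac12\|\rho-\rho'\|_1\leq \e$ (the case $\e'=0$ being trivial), after which monotonicity of $x\mapsto x\log(1/c)$ and of $g$ lets us replace $\e'$ by $\e$. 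The starting point is the standard coupling: writing $\Delta_\pm \coloneqq (\rho-\rho')_\pm$, so that $\Tr\Delta_+ = \Tr\Delta_- = \e'$, I set $\omega\coloneqq \Delta_+/\e'$ and $\tau\coloneqq \Delta_-/\e'$, and I form the single state
\[ \xi \coloneqq \frac{\rho+\Delta_-}{1+\e'} = \frac{\rho'+\Delta_+}{1+\e'}, \]
which admits the two convex decompositions $\xi = \tfrac{1}{1+\e'}\rho + \tfrac{\e'}{1+\e'}\tau = \tfrac{1}{1+\e'}\rho' + \tfrac{\e'}{1+\e'}\omega$.

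The argument then rests on two structural properties of $E_\FF$. The first is convexity, $E_\FF\big(\sum_i p_i\rho_i\big)\leq \sum_i p_i E_\FF(\rho_i)$, which follows at once from the joint convexity of the Umegaki relative entropy together with the convexity of $\FF$ (one simply mixes the individual optimisers). The second, and the real crux, is an almost-concavity bound
\[ E_\FF\Big(\sum_i p_i \rho_i\Big) \geq \sum_i p_i E_\FF(\rho_i) - H(\{p_i\}), \]
where $H$ denotes the Shannon entropy. I would derive this from the mixing identity $\sum_i p_i D(\rho_i\|\sigma) = D(\bar\rho\|\sigma) + \sum_i p_i D(\rho_i\|\bar\rho)$, valid for $\bar\rho = \sum_i p_i\rho_i$ and any $\sigma$: evaluating it at the optimiser $\sigma^*$ of $\bar\rho$, using $E_\FF(\rho_i)\leq D(\rho_i\|\sigma^*)$, and bounding the Holevo-type term $\sum_i p_i D(\rho_i\|\bar\rho) = H(\bar\rho) - \sum_i p_i H(\rho_i) \leq H(\{p_i\})$. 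The support inclusions $\supp\rho_i\subseteq\supp\bar\rho\subseteq\supp\sigma^*$ keep every term finite.

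To conclude, I would apply convexity to the $\{\rho,\tau\}$ decomposition of $\xi$ and almost-concavity to the $\{\rho',\omega\}$ decomposition, obtaining
\[ \frac{E_\FF(\rho')}{1+\e'} + \frac{\e'\, E_\FF(\omega)}{1+\e'} - h_2\!\Big(\tfrac{\e'}{1+\e'}\Big) \leq E_\FF(\xi) \leq \frac{E_\FF(\rho)}{1+\e'} + \frac{\e'\, E_\FF(\tau)}{1+\e'}, \]
with $h_2$ the binary entropy. The nuisance terms are controlled by the full-rank element: $E_\FF(\omega)\geq 0$ trivially, while $E_\FF(\tau)\leq D(\tau\|\sigma_0)\leq \log(1/c)$, since $\sigma_0\geq c\id$ forces $-\log\sigma_0\leq \log(1/c)\,\id$ by operator monotonicity. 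Multiplying through by $1+\e'$ and rearranging gives $E_\FF(\rho')\leq E_\FF(\rho) + \e'\log(1/c) + (1+\e')\,h_2\big(\tfrac{\e'}{1+\e'}\big)$, and recognising $(1+\e')\,h_2\big(\tfrac{\e'}{1+\e'}\big) = g(\e')$ (the same identity $g(x)=(1+x)h_2(\tfrac{1}{1+x})$ used in the proof of Lemma~\ref{hypergeometric_inequality_lemma}) yields the one-sided bound; symmetrising finishes the proof.

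The main obstacle I anticipate is the almost-concavity step: this is where the specific variational structure of $E_\FF$ genuinely enters, and it relies on the mixing identity and the Holevo bound rather than on any soft convexity argument. Everything else --- the coupling, the symmetrisation, and the final passage from $\e'$ to $\e$ --- is routine once this ingredient is in place, and finiteness never becomes an issue thanks to the full-rank state guaranteed by Axiom~2.
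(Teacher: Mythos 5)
The paper does not actually prove this lemma --- it explicitly states that it is ``a slight rephrasing of a result by Winter'' and defers to the cited references --- but your argument is precisely the standard Alicki--Fannes--Winter proof from that source (the $\Delta_\pm$ coupling, convexity plus almost-concavity of $D(\cdot\|\FF)$ via the mixing identity, and the bound $E_\FF(\tau)\leq\log(1/c)$ from the full-rank free state), and it is correct, including the identification $(1+\e')\,h_2\big(\tfrac{\e'}{1+\e'}\big)=g(\e')$ and the monotonicity step passing from $\e'$ to $\e$.
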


We are now ready to present the proof of our main result.

\begin{proof}[Proof of Theorem~\ref{GQSL_thm} (generalised quantum Stein's lemma)]
Fix some $\e\in (0,1)$. By the discussion in Remark~\ref{reformulations_rem}, it suffices to prove~\eqref{D_max_reformulation}. Assume by contradiction that 
\bb
\liminf_{n\to\infty} \frac1n\, \rel{D_{\max}^\e}{\rho^{\otimes n}}{\FF_n} < D^\infty(\rho\|\FF)\, . 
\ee
Then there exists an infinite set $I\subseteq \N$ and two numbers $\lambda,\lambda'$ such that
\bb
\lim_{n\in I} \frac1n\, \rel{D_{\max}^\e}{\rho^{\otimes n}}{\FF_n} = \lambda' < \lambda < D^\infty(\rho\|\FF)\, .
\label{GQSL_proof_eq2}
\ee
This means that for all sufficiently large $n\in I$ we can find a state $\rho_n \in \D\big(\HH^{\otimes n}\big)$ such that
\bb
\frac12\left\|\rho_n - \rho^{\otimes n}\right\|_1 \leq \e\, ,\qquad \rel{D_{\max}}{\rho_n}{\FF_n} \leq n\lambda\, .
\label{GQSL_proof_eq3}
\ee
Up to applying the symmetrisation operator~\eqref{symmetrisation}, it is clear that we can assume without loss of generality that $\rho_n$ is permutationally symmetric. By the quantum blurring lemma (Lemma~\ref{quantum_blurring_lemma}), there exists some infinite subset $I'\subseteq I$ such that
\bb
\lim_{M\to\infty} \limsup_{n\in I'} \Tr \left( \rho^{\otimes n} - M \widetilde{\rho}_n \right)_+ = 0
\label{GQSL_proof_eq4}
\ee
for all $\Delta \in (0,\frac12]$, where the blurred states $\widetilde{\rho}_n$ are defined by
\bb
\widetilde{\rho}_n \coloneqq \int_0^\Delta \frac{\dd\delta}{\Delta}\ \widebar{B}_{n,\delta}^{\rho}(\rho_n)\, ,
\ee
with $\widebar{B}_{n,\delta}^{\rho}$ being given by~\eqref{quantum_blurring}. Our first goal is to estimate the max-relative entropy of resource of the states $\widetilde{\rho}_n$. We start by observing that, for all $\delta \in [0,\Delta]$,
\bb
D_{\max} \Big(\widebar{B}_{n,\delta}^{\rho}(\rho_n)\, \Big\|\, \FF_n\Big) &\leqt{(i)} \rel{D_{\max}}{\rho_n \otimes \rho^{\otimes \floor{\delta n}}}{\FF_n} \\
&\leqt{(ii)} \rel{D_{\max}}{\rho_n}{\FF_n} + \floor{\delta n} \log \frac1c \\
&\leqt{(iii)} n \left( \lambda + \delta \log\tfrac1c \right) \\
&\leq n \left( \lambda + \Delta \log\tfrac1c \right)\, .
\label{GQSL_proof_eq5}
\ee
Here, in~(i) we noticed that the max-relative entropy of resource is monotonically non-increasing under any quantum channel that maps free states to free states, due to the data processing inequality for $D_{\max}$~\cite{Datta08}. Both the partial trace and the symmetrisation operation satisfy this assumption, due to Axiom~3 and to Axioms~1 and~5, respectively. In~(ii) we applied the above Lemma~\ref{D_max_elementary_lemma}, while (iii)~follows from~\eqref{GQSL_proof_eq3}. 

Eq.~\eqref{GQSL_proof_eq5} tells us that for all $\delta\in [0,\Delta]$ we can find some $\sigma_{n,\delta} \in \FF_n$ such that
\bb
\widebar{B}_{n,\delta}^{\rho}(\rho_n) \leq 2^{n \left(\lambda + \Delta \log\tfrac1c\right)} \sigma_{n,\delta}\, .
\ee
Integrating over $\delta$ and using the convexity of $\FF_n$, one obtains that
\bb
\widetilde{\rho}_n = \int_0^\Delta \frac{\dd\delta}{\Delta}\ \widebar{B}_{n,\delta}^{\rho}(\rho_n) \leq 2^{n \left(\lambda + \Delta \log(1/c)\right)}\,  \widetilde{\sigma}_{n}\, ,\qquad \widetilde{\sigma}_{n} \coloneqq \int_0^\Delta \frac{\dd\delta}{\Delta}\ \sigma_{n,\delta} \in \FF_n\, ,
\ee
that is,
\bb
\rel{D_{\max}}{\widetilde{\rho}_n}{\FF_n} \leq n \left(\lambda + \Delta \log\tfrac1c \right) .
\label{GQSL_proof_eq8}
\ee

Now, fix two arbitrarily small parameters $\eta,\zeta\in (0,1)$. Since for any fixed $M>0$
\bb
\limsup_{n\in I'} \Tr \left( \rho^{\otimes n} - 2^{n \zeta} \widetilde{\rho}_n \right)_+ \leq \limsup_{n\in I'}  \Tr \left( \rho^{\otimes n} - M \widetilde{\rho}_n \right)_+
\ee
due to Lemma~\ref{variational_program_trace_X_plus_lemma}(a), taking the limit $M\to\infty$ on the right-hand side and leveraging~\eqref{GQSL_proof_eq4} we conclude that
\bb
\lim_{n\in I'} \Tr \left( \rho^{\otimes n} - 2^{n \zeta} \widetilde{\rho}_n \right)_+ = 0\, ,
\ee
entailing that 
\bb
\Tr \left( \rho^{\otimes n} - 2^{n \zeta} \widetilde{\rho}_n \right)_+ \leq \eta
\ee
for all sufficiently large $n\in I'$. Looking at~\eqref{Datta_Leditzky}, it is clear that this is equivalent to the inequality
\bb
\rel{\widetilde{D}^\eta_{\max}}{\rho^{\otimes n}}{\widetilde{\rho}_n} \leq n\zeta\, .
\label{GQSL_proof_eq15}
\ee
We can now write that
\bb
\rel{\widetilde{D}^\eta_{\max}}{\rho^{\otimes n}}{\FF_n} &\leqt{(iv)} \rel{\widetilde{D}^\eta_{\max}}{\rho^{\otimes n}}{\widetilde{\rho}_n} + \rel{D_{\max}}{\widetilde{\rho}_n}{\FF_n} \\
&\leqt{(v)} n \left(\lambda + \zeta + \Delta \log\tfrac1c\right)
\ee
for all sufficiently large $n\in I'$. Here, in~(iv) we employed the following easily verified `triangle inequality' for the Datta--Leditzky smoothed max-relative entropy~\eqref{Datta_Leditzky}: for all triples of states $\rho,\sigma,\omega$ on the same system and all $\e\in [0,1]$,
\bb
\widetilde{D}^\e_{\max}(\rho\|\sigma) \leq \widetilde{D}^\e_{\max}(\rho\|\omega) + D_{\max}(\omega\|\sigma)\, .
\ee
Continuing, (v)~follows from~\eqref{GQSL_proof_eq8} and~\eqref{GQSL_proof_eq15}.

Hence
\bb
\limsup_{n\in I'} \frac1n\, \rel{\widetilde{D}^\eta_{\max}}{\rho^{\otimes n}}{\FF_n} \leq \lambda + \zeta + \Delta \log\tfrac1c\, .
\ee
Since $\eta$, $\zeta$, and $\Delta$ are arbitrary, we can now take them to zero, obtaining that\footnote{Note that $I'$ does not depend on either of these parameters.}
\bb
\lim_{\eta\to 0^+} \limsup_{n\in I'} \frac1n\, \rel{\widetilde{D}^\eta_{\max}}{\rho^{\otimes n}}{\FF_n} \leq \lambda < D^\infty(\rho\|\FF)\, .
\label{GQSL_proof_eq19}
\ee
It is now a routine matter to derive a contradiction from the above inequality. Using the relations~\eqref{Datta_Renner_lemma}, it is not difficult to show that~\eqref{GQSL_proof_eq19} is equivalent to stating that
\bb
\lim_{\eta\to 0^+} \limsup_{n\in I'} \frac1n\, \rel{D^\eta_{\max}}{\rho^{\otimes n}}{\FF_n} \leq \lambda < D^\infty(\rho\|\FF)\, .
\label{GQSL_proof_eq20}
\ee
This can be shown to be in contradiction with~\eqref{GQSL_proof_eq2}, due to the asymptotic continuity of the relative entropy of resource. In fact,
\bb
\rel{D^\eta_{\max}}{\rho^{\otimes n}}{\FF_n}\ &=\ \min_{\rho'_n:\ \frac12 \|\rho'_n - \rho^{\otimes n}\|_1 \leq \eta} \rel{D_{\max}}{\rho'_n}{\FF_n} \\
&\geqt{(vi)}\ \min_{\rho'_n:\ \frac12 \|\rho'_n - \rho^{\otimes n}\|_1 \leq \eta} \rel{D}{\rho'_n}{\FF_n} \\
&\geqt{(vii)}\ \rel{D}{\rho^{\otimes n}}{\FF_n} - \eta \log\tfrac{1}{c^n} - g(\eta)\, ,
\label{GQSL_proof_eq21}
\ee
where (vi)~follows from~\eqref{relation_D_D_max}, while in~(vii) we employed Lemma~\ref{AC_lemma} with $\FF \mapsto \FF_n$, $\rho\mapsto \rho^{\otimes n}$, $\rho' \mapsto \rho'_n$, $\e\mapsto \eta$, $\sigma_0 \mapsto \sigma_0^{\otimes n}$, and $c\mapsto c^n$ (here, $\sigma_0\in \FF_1$ and $c>0$ are given by Axiom~2). Now, dividing both sides of~\eqref{GQSL_proof_eq21} by $n$ and taking the limit in $n\in I'$ yields
\bb
\limsup_{n\in I'} \frac1n\, \rel{D^\eta_{\max}}{\rho^{\otimes n}}{\FF_n} &\geq \limsup_{n\in I'} \frac1n\, \rel{D}{\rho^{\otimes n}}{\FF_n} - \eta \log\tfrac1c \\
&= \lim_{n\to\infty} \frac1n\, \rel{D}{\rho^{\otimes n}}{\FF_n} - \eta \log\tfrac1c \\
&= D^\infty(\rho\|\FF) - \eta \log\tfrac1c\, ,
\label{GQSL_proof_eq22}
\ee
which in contradiction with~\eqref{GQSL_proof_eq2} once one takes the limit $\eta\to 0^+$ and uses~\eqref{GQSL_proof_eq20}. 
\end{proof}

\subsection{Technical preliminaries} \label{subsec_technical_preliminaries}

The remainder of Section~\ref{sec_quantum_solution} is devoted to proving the quantum blurring lemma (Lemma~\ref{quantum_blurring_lemma}). Before we delve into the proof, we introduce some technical tools that will be used in the subsequent discussion.

\subsubsection{Type basis of the symmetric space}

The symmetric space $\mathrm{Sym}^n\big(\C^d\big)$, i.e.\ the span of all vectors on $\big(\C^d\big)^{\otimes n}$ that are invariant under all permutations of the tensor factors, admits the canonical basis~\cite[Section~1]{Harrow-church}
\bb
\ket{n,t_n} \coloneqq \binom{n}{nt_n}^{-1/2} \sum_{x^n\in T_{n,t_n}} \ket{x^n}\, ,
\label{ket_n_k}
\ee
where $t_n\in \mathcal{T}_n$ ranges over all $n$-types, $T_{n,t_n}$ is the type class associated with $t_n$ (see~\eqref{type_class}), and we used the notation in~\eqref{size_type_class} for multinomials. Some elementary calculations involving these vectors are presented in the following lemma.

\begin{lemma} \label{n_k_calculations_lemma}
For all $n,r\in \N^+$ with $n\geq r$ and all sequences $x^r$ of type $w_r\in \mathcal{T}_r$, the partial overlap\footnote{That is, the overlap calculated only on the first $r$ copies of $\C^d$.} between $\ket{x^r}$ or $\ket{r,w_r}$ and a vector of the basis~\eqref{ket_n_k} is given by
\bb
\braket{x^r|n,t_n} &= \binom{r}{rw_r}^{-1/2} \braket{r,w_r|n,t_n} \\
&= \sqrt{\frac{\binom{n-r}{nt_n - rw_r}}{\binom{n}{nt_n}}}\, \Ket{n-r,\,\tfrac{1}{n-r}(nt_n - rw_r)}\, ,
\label{n_k_overlap}
\ee
where it is understood that the rightmost side is zero unless $nt_n \succeq rw_r$, i.e.\ $nt_n(x) \geq rw_r(x)$ for all $x\in \XX$.
\end{lemma}

\begin{proof}
In order to prove the equality between the leftmost and the rightmost side of~\eqref{n_k_overlap} in the non-trivial case where $nt_n\succeq rw_r$, we need to establish that
\bb
\bra{x^r} \left(\sumno_{y^n\in T_{n,t_n}} \ket{y^n}\right) = \sum_{z^{n-r}\, \in\, T_{n,\,\frac{1}{n-r}(nt_n -rw_r)}} \ket{z^{n-r}}\, .
\ee
This is rather obvious, and it follows from the fact that the only sequences $y^n$ that contribute to the sum on the left-hand side are those in which the first $r$ symbols form the sequence $x^r$. The remaining $n-r$ symbols have type $\frac{1}{n-r}(nt_n-rw_r)$, and moreover the sequence they form is uniformly distributed on $T_{n,\,\frac{1}{n-r}(nt_n -rw_r)}$. That $\braket{x^r|n,t_n} = \binom{r}{rw_r}^{-1/2} \braket{r,w_r|n,t_n}$ holds follows directly by substituting the explicit expression of $\ket{r,w_r}$ as in~\eqref{ket_n_k}.
\end{proof}

\subsubsection{Continuous-variable systems} \label{subsubsec_CV}

In what follows, we will review the basic formalism of continuous-variable quantum systems. For a more detailed presentation with additional details and derivations, we refer the reader to the review~\cite{Braunstein-review}, as well as to the monograph~\cite{BUCCO}. A system composed of $m\in \N^+$ bosonic modes is modelled by the Hilbert space
\bb
\HH_m \coloneqq \big(\ell^2(\N)\big)^{\otimes m}
\label{H_m}
\ee
spanned by the \deff{Fock states} $\ket{k_1,\ldots, k_m} = \ket{k}$, where $k\in \N^m$ ranges over all non-negative integer vectors of length $m$. The Fock state corresponding to $k=0$, which we will also denote by $\ket{0}^{\otimes m}$, is called the \deff{vacuum state}. On a single bosonic mode ($m=1$), the \deff{annihilation and creation operators}, denoted $a$ and $a^\dag$, respectively, are defined by the action $a\ket{k} = \sqrt{k}\ket{k-1}$ and $a^\dag\ket{k} = \sqrt{k+1}\ket{k+1}$ on the dense subspace of vectors with finite expansion in Fock basis. The \deff{number operator} $a^\dag a$ thus satisfies that $a^\dag a\ket{k} = k \ket{k}$ for all $k\in \N$. The eigenvectors of the annihilation operator are the \deff{coherent states}, defined as
\bb
\ket{\alpha} \coloneqq e^{-|\alpha|^2/2} \sum_{k=0}^\infty \frac{\alpha^k}{\sqrt{k!}}\, \ket{k}\, ,
\label{coherent_state}
\ee
where $\alpha\in \C$ is the corresponding eigenvalue: indeed, it is straightforward to verify that $a\ket{\alpha} = \alpha \ket{\alpha}$.

The physical process by which photons, or excitations in Fock space, are scattered into a large environment in the vacuum state is modelled by a \deff{pure loss channel} (also sometimes called `quantum-limited attenuator'), whose action is defined by the Kraus representation~\cite[Eq.~(4.6)]{Ivan2011}
\bb
\EE_\lambda (X) &= \sum_{\ell=0}^\infty \frac{1}{\ell!} \left(\frac{1}{\lambda} - 1\right)^\ell a^\ell \lambda^{\frac{a^\dag a}{2}} X \lambda^{\frac{a^\dag a}{2}} (a^\dag)^\ell ,
\label{pure_loss_Kraus}
\ee
where $\lambda\in (0,1)$ is called the \deff{transmissivity} of the channel. Note that $a^\ell \lambda^{\frac{a^\dag a}{2}}$ is effectively a bounded operator for all integers $\ell$. The pure loss channel acts in a particularly simple way on coherent states: for all $\lambda\in (0,1)$ and all $\alpha\in \C$, 
\bb
\EE_\lambda(\ketbra{\alpha}) = \ketbra{\sqrt\lambda\,\alpha}\, .
\label{pure_loss_action_on_coherent}
\ee
The action on Fock states, instead, can be described as follows: 
\bb
\EE_\lambda(\ketbraa{h}{k}) = \lambda^{\frac{h+k}{2}} \sum_{\ell=0}^{\infty} \sqrt{\binom{h}{\ell} \binom{k}{\ell}} \left(\frac{1}{\lambda} - 1\right)^{\ell} \ketbraa{h-\ell}{k-\ell}\, .
\label{pure_loss_action_on_Fock}
\ee
Note that the sum contains only finitely many (to be precise, $\min\{h,k\}+1$) non-zero terms.

\subsubsection{Operator topologies} \label{subsubsec_operator_topologies}

Given a separable Hilbert space $\HH$, a natural pair of Banach spaces associated with it are the space of trace class operators $\TT(\HH)$ equipped with the trace norm $\|\cdot\|_1$, which we have already encountered, and the space of compact operators $\K(\HH)$ equipped with the operator norm $\|X\|_\infty \coloneqq \sup_{\ket{\psi}\in \HH\setminus\{0\}} \|X \ket{\psi}\| \big/ \|\ket{\psi}\|$. The former space can be thought of as the dual of the latter, in formula $\TT(\HH) = \K(\HH)^*$. This means that besides its native \deff{trace norm topology}, $\TT(\HH)$ can also be equipped with a \deff{weak* topology} induced by this duality. According to this latter topology, a sequence\footnote{The weak* topology is not metrisable in general, unless $\HH$ is finite dimensional. However, the Banach--Alaoglu theorem implies that it is metrisable on bounded subsets (see, e.g., \cite[Corollary~2.6.20]{MEGGINSON}. Since here will only consider bounded sequences, we can avoid introducing nets and deal only with sequences.} $(X_n)_n$ of operators $X_n\in \TT(\HH)$ converges to some $X\in \TT(\HH)$, and we write $X_n\tendsn{w*} X$, if
\bb
\Tr X_n K \tendsn{} \Tr XK\qquad \forall\ K\in \K(\HH)\, .
\label{weak_star_convergence}
\ee
See~\cite[Chapter~2]{MEGGINSON} for a review of the properties of the weak* topology on dual Banach spaces. Applications of this topology in quantum information theory have flourished recently~\cite{Furrer2011, achievability, continuity-relent-resource}. An important observation of which we will make use here is that due to the Banach--Alaoglu theorem and because of the separability of $\TT(\HH)$ (in turn a consequence of the separability of $\HH$), every trace-norm-bounded and weak*-closed set $X\subset \TT(\HH)$ is (sequentially) weak* compact. We record this observation below.

\begin{lemma} \label{weak_star_seq_compactness_lemma}
Let $(X_n)_n$ be a trace-norm-bounded sequence of operators $X_n\in \TT(\HH)$, where $\HH$ is a separable Hilbert space. Then one can find a weak*-converging subsequence $\big(X_{n_k}\big)_{n_k}$, i.e. $X_{n_k} \tendsk{w^*} X$ for some $X\in \TT(\HH)$. If $X_n\geq 0$ and $\Tr X_n\leq 1$ for all $n$, then similarly $X\geq 0$ and $\Tr X\leq 1$.
\end{lemma}

We also need to briefly discuss a useful notion of topology on the set of quantum channels. Given a sequence $(\Lambda_n)_n$ of quantum channels $\Lambda_n: \TT(\HH) \to \TT(\HH')$ with the same input and output, we say that $\Lambda_n$ converges to another channel $\Lambda: \TT(\HH) \to \TT(\HH')$ with respect to the \deff{strong convergence topology}, and we write $\Lambda_n \tendsn{s} \Lambda$, if
\bb
\lim_{n\to\infty} \left\| \Lambda_n(\rho) - \Lambda(\rho) \right\|_1 = 0\qquad \forall\ \rho\in \D(\HH)\, .
\label{strong_convergence_channels}
\ee

\subsection{Alternative expressions of the blurring map} \label{subsec_alternative_expressions_blurring}

We now present some alternative and more insightful expressions for the action of the blurring map.

\begin{lemma} \label{blurring_Gammas_lemma}
For all $n\in \N^+$ and $\delta\in (0,\frac12]$, it holds that
\begin{align}
B_{n,\delta}(X) =&\ \sum_{r=0}^{\floor{\delta n}} H(n+\floor{\delta n},\floor{\delta n};\, n,r)\ \Gamma_{n,r}(X) \, \label{blurring_Gammas} \\
\Gamma_{n,r}(X) \coloneqq&\ \Pi_n \left( \ketbra{0}^{\otimes r} \otimes \Tr_r X \right) \Pi_n\, , \label{Gamma}
\end{align}
where the blurring map $B_{n,\delta}$ is defined by~\eqref{blurring}, and the hypergeometric distribution is given by~\eqref{hypergeometric}.
\end{lemma}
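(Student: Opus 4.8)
The plan is to unfold the definition~\eqref{blurring} of $B_{n,\delta}$ and compute the combined action of the symmetrisation $\mathcal{S}_{n+m}$ and the partial trace $\Tr_m$ explicitly, where I abbreviate $m \coloneqq \floor{\delta n}$ (note $m \le n$ since $\delta \le \tfrac12$, so the sum will indeed terminate at $r = m$). The first step is to simplify $\mathcal{S}_{n+m}\big(X \otimes \ketbra{0}^{\otimes m}\big)$. Since $X$ is supported on $\mathrm{Sym}^n(\C^d)$ and hence invariant under permutations of its $n$ systems, while $\ketbra{0}^{\otimes m}$ is invariant under permutations of its $m$ systems, the summand $U_\pi\big(X \otimes \ketbra{0}^{\otimes m}\big)U_\pi^\dag$ is constant on left cosets of the subgroup $S_n \times S_m \le S_{n+m}$. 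Consequently the average over $S_{n+m}$ collapses to a uniform average over the $\binom{n+m}{n}$ ways of choosing the $n$-element subset $S$ of positions on which $X$ is placed, the complementary $m$ systems carrying $\ketbra{0}$.

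Next I would trace out a fixed block of $m$ systems -- legitimate because the symmetrised operator is permutation invariant -- and organise the subsets by the integer $r \coloneqq$ (number of $X$-systems that land in the traced-out block). For a subset $S$ contributing a given value of $r$, tracing out its $r$ systems inside the block produces $\Tr_r X$ on the surviving $n-r$ of the $X$-systems (well defined, independently of which $r$ are traced, by symmetry of $X$), while the $m-r$ vacuum systems inside the block are traced away to unit scalars; the kept operator is thus $\Tr_r X$ on $n-r$ systems tensored with $\ketbra{0}^{\otimes r}$ on the remaining $r$ kept systems, at some placement determined by $S$. The number of subsets $S$ yielding this value of $r$ is $\binom{m}{r}\binom{n}{r}$ (choose which $r$ of the $m$ block positions hold $X$, and which $n-r$ of the remaining $n$ positions hold $X$).

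The crucial simplification is that, after conjugating by $\Pi_n$, the placement becomes irrelevant: for any permutation $\sigma \in S_n$ one has $\Pi_n U_\sigma = U_\sigma \Pi_n = \Pi_n$, whence $\Pi_n\, U_\sigma M U_\sigma^\dag\, \Pi_n = \Pi_n M \Pi_n$ for every $M$. Hence each of the $\binom{m}{r}\binom{n}{r}$ subsets contributes exactly $\Pi_n\big(\ketbra{0}^{\otimes r} \otimes \Tr_r X\big)\Pi_n = \Gamma_{n,r}(X)$. Collecting terms gives
\[
B_{n,\delta}(X) = \frac{1}{\binom{n+m}{n}} \sum_{r=0}^{m} \binom{m}{r}\binom{n}{r}\, \Gamma_{n,r}(X),
\]
and recognising $\binom{n}{r} = \binom{n}{n-r}$ shows that the coefficient equals $\binom{m}{r}\binom{n}{n-r}\big/\binom{n+m}{n} = H(n+m, m;\, n, r)$ by the definition~\eqref{hypergeometric}, which is the claim.

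I expect the only genuinely delicate point to be the bookkeeping of the partial trace: one must check that splitting a symmetric $X$ across traced-out and retained systems yields precisely $\Tr_r X$, and that the $\Pi_n$-conjugation really does erase the dependence on the placement of $\Tr_r X$ relative to the vacuum factors. Everything else is elementary coset counting together with the identification of the hypergeometric coefficient.
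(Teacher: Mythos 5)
Your proof is correct and is precisely the computation the paper treats as immediate (its proof of this lemma is just ``Clear by construction''): collapsing the symmetrisation to an average over the $\binom{n+m}{n}$ placements of the $X$-systems, grouping by the number $r$ landing in the traced-out block, and using $\Pi_n U_\sigma = \Pi_n$ to erase the placement dependence, with the coefficient $\binom{m}{r}\binom{n}{n-r}/\binom{n+m}{n}$ correctly identified as $H(n+m,m;\,n,r)$. No gaps.
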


\begin{proof}
It suffices to remember the description of the action of the blurring map presented in Section~\ref{subsec_informal_quantum}. Calling $r$ the number of copies of the added state $\ketbra{0}$ that end up in one of the first $n$ systems --- those that are not traced out --- a little thought shows that the law of $r$ is hypergeometric by construction, i.e.\ $r\sim H(n+\floor{\delta n},\floor{\delta n};\, n,\cdot)$. For a fixed $r$, the action of $B_{n,\delta}$ is effectively represented by the map $\Gamma_{n,r}$ in~\eqref{Gamma}. This proves the claim.
\end{proof}

An important technical step is to derive a Kraus representation for the maps $\Gamma_{n,r}$ appearing in~\eqref{Gamma}. We achieve this via the lemma below. 

\begin{lemma} \label{Gamma_Kraus_lemma}
For all $n\in \N^+$, $\delta\in (0,\frac12]$, and $r\in \{0,1,\ldots, \floor{\delta n}\}$, 
\bb
\Gamma_{n,r} (X) =&\ \sum_{w_r \in \mathcal{T}_r} M_{r,w_r}^{\vphantom{\dag}} X M_{r,w_r}^\dag\, , \\
M_{r,w_r} \coloneqq& \sum_{\substack{t_n \in \mathcal{T}_n,\\ nt_n \succeq rw_r}} \sqrt{\frac{\binom{n-r}{nt_n - rw_r}^2 \binom{r}{rw_r}}{\binom{n}{nt_n - r w_r + re_0} \binom{n}{nt_n}}}\ \Ketbraa{n,\, t_n - \tfrac{r}{n}\, w_r + \tfrac{r}{n}\,e_0}{n,\, t_n\phantom{\big|}}\, ,
\ee
where $\ket{n,t_n}$ is defined by~\eqref{ket_n_k}.
\end{lemma}

\begin{proof}
Let $X$ be an arbitrary operator supported on the symmetric space $\mathrm{Sym}^n(\C^d)$. Then
\begin{align}
\sum_{w_r\in\mathcal{T}_r} \braket{r,w_r|X|r,w_r} &\eqt{(i)} \sum_{w_r\in\mathcal{T}_r} \binom{r}{rw_r}^{-1} \!\! \sum_{x^r\!,\,y^r\in T_{r,w_r}} \!\!\braket{x^r|X|y^r} \nonumber \\
&\eqt{(ii)} \sum_{w_r\in\mathcal{T}_r} \binom{r}{rw_r}^{-1} \!\! \sum_{x^r\!,\,y^r\in T_{r,w_r}} \!\!\braket{x^r|X U_{\pi(x^r\!,\,y^r)} |x^r} \nonumber \\
&\eqt{(iii)} \sum_{w_r\in\mathcal{T}_r} \binom{r}{rw_r}^{-1} \!\! \sum_{x^r\!,\,y^r\in T_{r,w_r}} \!\!\braket{x^r|X|x^r} \\
&= \sum_{w_r\in\mathcal{T}_r} \sum_{x^r\in T_{r,w_r}} \!\!\braket{x^r|X|x^r} \nonumber \\
&= \sum_{x^r} \braket{x^r|X|x^r} \nonumber \\
&= \Tr_r X\, . \nonumber
\end{align}
The justification of the above derivation is as follows: in~(i) we substituted~\eqref{ket_n_k}; in~(ii) we noticed that, since $x^r$ and $y^r$ have the same type, there must exist a permutation $\pi(x^r,y^r)$ such that the unitary implementing it, denoted by $U_{\pi(x^r,y^r)}$, will satisfy $\ket{y^r} = U_{\pi(x^r,y^r)} \ket{x^r}$; and in~(iii) we observed that $X U_{\pi(x^r,y^r)} = X$ because $X$ is supported on the symmetric space. 

Plugging this expression into the definition of $\Gamma_{n,r}$ in~\eqref{Gamma} yields
\bb
\Gamma_{n,r}(X) &= \Pi_n \!\left( \ketbra{0}^{\otimes r} \!\!\otimes \Tr_r \!X \right)\! \Pi_n =\! \sum_{w_r\in\mathcal{T}_r} M_{r,w_r} X M_{r,w_r}^\dag ,
\ee
where
\bb
M_{r,w_r} \coloneqq&\ \Pi_n \left( \ketbraa{0^r}{r,w_r} \otimes \id^{\otimes (n-r)} \right) \Pi_n \\
\eqt{(iv)}&\ \sum_{s_n,t_n\in \mathcal{T}_n} \braket{n,s_n|0^r} \braket{r,w_r|n,t_n} \ketbraa{n,s_n}{n,t_n} \\
\eqt{(v)}&\ \sum_{t_n\in \mathcal{T}_n} \sqrt{\frac{\binom{n-r}{nt_n - r w_r}^2 \binom{r}{rw_r}}{\binom{n}{nt_n - rw_r + r e_0} \binom{n}{nt_n}}}\ \Ketbraa{n,\, t_n - \tfrac{r}{n}\, w_r + \tfrac{r}{n}\, e_0)}{n,\,t_n\phantom{\big|}} ,
\ee
as claimed. Here, in~(iv) we used the decomposition $\Pi_n = \sum_{t_n\in\mathcal{T}_n} \ketbra{n,t_n}$ twice, while in~(v) we considered an arbitrary (fixed) sequence $x^r \in T_{r,w_r}$ and wrote 
\bb
\braket{n,s_n|0^r} \braket{r,w_r|n,t_n} &\eqt{(vi)} \sqrt{\frac{\binom{n-r}{ns_n - re_0} \binom{n-r}{nt_n - r w_r} \binom{r}{rw_r}}{\binom{n}{ns_n} \binom{n}{nt_n}}}\ \Braket{n-r,\, \tfrac{1}{n-r}(ns_n - r e_0) \,\big|\, n-r,\, \tfrac{1}{n-r}(nt_n - r w_r)} \\
&\eqt{(vii)} \sqrt{\frac{\binom{n-r}{nt_n - r w_r}^2 \binom{r}{rw_r}}{\binom{n}{nt_n - rw_r + r e_0} \binom{n}{nt_n}}}\ \delta_{ns_n - r e_0,\, nt_n - rw_r}\, ,
\ee
where in~(vi) we employed Lemma~\ref{n_k_calculations_lemma} twice, and in~(vii) we denoted by $\delta_{a,b}$ the Kronecker delta on integer vectors $a,b\in \N^\XX$. This completes the proof.
\end{proof}

We now observe that
\bb
\sum_{w_r \in \mathcal{T}_r} M_{r,w_r}^\dag M_{r,w_r}^{\vphantom{\dag}} &= \sum_{t_n\in \mathcal{T}_n} d_r(t_n) \ketbra{n,t_n} \eqqcolon D_r\, ,
\label{D_r}
\ee 
where
\bb
d_r(t_n) \coloneqq \sum_{\mathcal{T}_r \ni w_r \preceq \frac{n}{r} t_n} \frac{\binom{n-r}{nt_n - rw_r}^2 \binom{r}{rw_r}}{\binom{n}{nt_n - r w_r + re_0} \binom{n}{nt_n}}\, .
\label{d_r(t_n)}
\ee
The matrix $D_{r}$ is strictly positive definite, because for all $t_n\in \mathcal{T}_n$ we can always choose some $w_r\in \mathcal{T}_r$ such that $rw_r \preceq nt_n$. Thus, the map
\bb
\Theta_{n,r} (X) &\coloneqq \sum_{w_r\in \mathcal{T}_r} N_{r,w_r}^{\vphantom{\dag}} X N_{r,w_r}^\dag\, , \\
N_{r,w_r} &\coloneqq M_{r,w_r} D_{r}^{-1/2}
\label{Theta}
\ee
is a quantum channel, simply because
\bb
\sum_{w_r\in \mathcal{T}_r} N_{r,w_r}^\dag N_{r,w_r}^{\vphantom{\dag}} = D_r^{-1/2} \left( \sum_{w_r\in \mathcal{T}_r} M_{r,w_r}^\dag M_{r,w_r}^{\vphantom{\dag}}\right) D_r^{-1/2} = \id\, . 
\label{Theta_is_quantum_channel}
\ee
Clearly, the maps $\Gamma_{n,r}$ and $\Theta_{n,r}$ are related by the identity
\bb
\Gamma_{n,r}(X) = \Theta_{n,r}\Big( D_r^{1/2} X D_r^{1/2} \Big) .
\label{Gamma_in_terms_of_Theta}
\ee
Putting all together, from Lemma~\ref{blurring_Gammas_lemma} we obtain the following expression for the action of $B_{n,\delta}$.

\begin{cor} \label{blurring_Theta_cor}
For all $n\in \N^+$ and all $\delta\in (0,\frac12]$, the action of the blurring map~\eqref{blurring} can be expressed as
\begin{align}
B_{n,\delta}(X) &= \sum_{r=0}^{\floor{\delta n}} H(n+\!\floor{\delta n}, \floor{\delta n};\, n,r)\ \Theta_{n,r}\Big( \!D_r^{1/2} X D_r^{1/2}\! \Big) , \label{blurring_Theta}
\end{align}
where 
the quantum channel $\Theta_{n,r}$ is defined by~\eqref{Theta}.
\end{cor}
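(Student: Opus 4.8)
The plan is to obtain the stated identity by a one-line substitution, composing the two decompositions that are already in place. Lemma~\ref{blurring_Gammas_lemma} writes the blurring map as a hypergeometric mixture of the maps $\Gamma_{n,r}$, and the factorisation~\eqref{Gamma_in_terms_of_Theta} rewrites each $\Gamma_{n,r}$ through the channel $\Theta_{n,r}$ as $\Gamma_{n,r}(X) = \Theta_{n,r}\big(D_r^{1/2} X D_r^{1/2}\big)$. First I would recall the former,
\[
B_{n,\delta}(X) = \sum_{r=0}^{\floor{\delta n}} H(n+\floor{\delta n},\floor{\delta n};\, n,r)\ \Gamma_{n,r}(X)\, ,
\]
and then simply insert the latter into each summand. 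Since the hypergeometric coefficients $H(n+\floor{\delta n},\floor{\delta n};\, n,r)$ do not depend on $X$, the substitution is term by term and immediately produces~\eqref{blurring_Theta}. No further manipulation is required.

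All the substantive content has in fact been discharged upstream, so the corollary is essentially a repackaging. The Kraus form of $\Gamma_{n,r}$ in Lemma~\ref{Gamma_Kraus_lemma} supplies the operators $M_{r,w_r}$; the computation~\eqref{D_r}--\eqref{d_r(t_n)} shows that $\sum_{w_r} M_{r,w_r}^\dag M_{r,w_r}^{\vphantom{\dag}} = D_r$ is diagonal in the type basis $\{\ket{n,t_n}\}_{t_n}$; and the remark following~\eqref{d_r(t_n)}—that for every $t_n\in\mathcal{T}_n$ one may choose $w_r\in\mathcal{T}_r$ with $rw_r \preceq nt_n$—forces $d_r(t_n)>0$ for all $t_n$, hence $D_r>0$ and $D_r^{-1/2}$ well defined. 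This is precisely what makes $\Theta_{n,r}$ a genuine channel, through $\sum_{w_r} N_{r,w_r}^\dag N_{r,w_r}^{\vphantom{\dag}} = \id$ as in~\eqref{Theta_is_quantum_channel}, and what licenses the factorisation~\eqref{Gamma_in_terms_of_Theta} that I plug in.

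Consequently I do not expect any real obstacle in the corollary itself: it is purely formal once Lemmas~\ref{blurring_Gammas_lemma} and~\ref{Gamma_Kraus_lemma} are granted. The single point that genuinely needs care—and which has already been settled—is the strict positive-definiteness of $D_r$, since without the invertibility of $D_r$ neither $\Theta_{n,r}$ nor the factorisation would even be defined. With that in hand, the proof reduces to the observation that the rewriting of each $\Gamma_{n,r}$ is valid, which is exactly why the author's proof of Lemma~\ref{blurring_Gammas_lemma} can afford to read ``Clear by construction'' and the corollary itself needs no separate argument.
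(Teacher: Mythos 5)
Your proposal is correct and matches the paper exactly: the corollary is obtained by substituting the factorisation $\Gamma_{n,r}(X) = \Theta_{n,r}\big(D_r^{1/2} X D_r^{1/2}\big)$ from~\eqref{Gamma_in_terms_of_Theta} term by term into the decomposition of Lemma~\ref{blurring_Gammas_lemma}, with the only substantive prerequisite being the strict positivity of $D_r$, which the paper establishes just before defining $\Theta_{n,r}$. Nothing further is needed.
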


For what follows, it will be crucial to have a precise estimate of the values of the entries of the matrix $D_r$, i.e.\ of the sum appearing in~\eqref{d_r(t_n)}. We provide this estimate below.

\begin{lemma} \label{d_r_bound_lemma}
Let $n,r$ be positive integers such that 
\bb
\eta \leq \frac{r}{n} \leq 1 - \eta
\label{bounds_ratio_r_n}
\ee
for some $\eta \in (0,\frac12]$. Let $t_n\in \mathcal{T}_n$ be an $n$-type with the property that
\bb
n t_n(x) \geq N
\ee
for some $x\neq 0$ and some non-negative integer $N \leq n$. Then the positive real number $d_r(t_n)$ defined by~\eqref{d_r(t_n)} satisfies the inequality
\bb
d_r(t_n) \leq 3\, e^{- N \eta^2 /2}\, .
\ee
\end{lemma}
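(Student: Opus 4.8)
The plan is to collapse the combinatorial sum defining $d_r(t_n)$ in~\eqref{d_r(t_n)} into the probability generating function of a single hypergeometric variable, and then to bound that generating function by an elementary Chernoff-type estimate. First I would recognise the multivariate hypergeometric distribution~\eqref{multivariate_hypergeometric} hiding inside~\eqref{d_r(t_n)}: writing $H_{n,t_n;\,r}(w_r) = \binom{r}{rw_r}\binom{n-r}{nt_n-rw_r}\big/\binom{n}{nt_n}$, the summand factors as $H_{n,t_n;\,r}(w_r)$ times the ratio $\binom{n-r}{nt_n-rw_r}\big/\binom{n}{nt_n-rw_r+re_0}$. That ratio is precisely the squared overlap $|\!\braket{n,\nu|\phi}\!|^2$ produced when one replaces $r$ of the systems by $\ket 0$'s and re-symmetrises (this is the content of $\Gamma_{n,r}$, and since $\Theta_{n,r}$ is trace-preserving one has $d_r(t_n)=\Tr\Gamma_{n,r}(\ketbra{n,t_n})$); a one-line cancellation of all factorials with index $x\neq 0$ collapses it to $\binom{m+r}{r}\big/\binom{n}{r}$, where $m\coloneqq nt_n(0)-rw_r(0)$ is the number of sites carrying the symbol $0$ that survive the removal of $r$ systems. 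Since the summand depends on $w_r$ only through $m$, summing the hypergeometric weights over all $w_r$ with a given $rw_r(0)$ yields $d_r(t_n)=\E_m\big[\binom{m+r}{r}\big/\binom{n}{r}\big]$, the expectation being over $m$ drawn from the bivariate hypergeometric law governing how many $0$'s are left behind.

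The key simplification is then to read $\binom{m+r}{r}\big/\binom{n}{r}$ as the probability $\pr_R[R\subseteq S]$ that an \emph{independent} uniformly random $r$-subset $R\subseteq\{1,\dots,n\}$ lands inside a fixed set $S$ of size $m+r$. Choosing $S=D\cup(\bar D\cap W)$, where $D$ is the random $r$-subset of removed systems, $\bar D$ its complement, and $W$ the set of the $nt_n(0)$ sites carrying $0$, turns $d_r(t_n)$ into $\pr_{D,R}[R\subseteq D\cup(\bar D\cap W)]$. A moment's thought shows that $R\subseteq D\cup(\bar D\cap W)$ is equivalent to $R\cap\bar W\subseteq D$, i.e.\ every \emph{non}-$0$ site hit by $R$ must also lie in $D$. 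Conditioning on $\ell\coloneqq|R\cap\bar W|$, the number of non-$0$ sites hit by $R$, and averaging over $D$, gives $d_r(t_n)=\E_\ell\big[\prod_{i=0}^{\ell-1}\frac{r-i}{n-i}\big]$, where $\ell$ is hypergeometric with $n-nt_n(0)\geq N$ special (non-$0$) items out of $n$ and $r$ draws.

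It remains only to bound this expectation. Since $\frac{r-i}{n-i}\leq\frac rn$ for every $i\geq 0$, I would use $\prod_{i=0}^{\ell-1}\frac{r-i}{n-i}\leq(r/n)^{\ell}$, so that $d_r(t_n)\leq\E\big[(r/n)^{\ell}\big]$ is just the hypergeometric generating function evaluated at $r/n<1$. Because sampling without replacement is negatively associated (equivalently, by Hoeffding's comparison of sampling with and without replacement), this generating function is dominated by the binomial one, giving $\E[(r/n)^{\ell}]\leq\big(1-\tfrac rn(1-\tfrac rn)\big)^{\,n-nt_n(0)}\leq\exp\!\big(-(n-nt_n(0))\,\tfrac rn(1-\tfrac rn)\big)$ via $1-y\leq e^{-y}$. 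Inserting $n-nt_n(0)\geq N$ together with $\tfrac rn\geq\eta$ and $1-\tfrac rn\geq\eta$ from~\eqref{bounds_ratio_r_n} yields $d_r(t_n)\leq e^{-N\eta^2}\leq 3\,e^{-N\eta^2/2}$, which is in fact stronger than the claimed bound.

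The conceptually delicate step is the middle one: rewriting the opaque factorial ratio as the transparent two-subset event $R\cap\bar W\subseteq D$, since it is this identity that exposes the hypergeometric generating function and makes the entire estimate fall out. Once that reduction is in place the concentration bound is routine, and the slack in the stated constant (the factor $3$ and the halved exponent) leaves ample room even if one prefers to control the lower tail of $\ell$ by the cruder hypergeometric bound~\eqref{first_tail_bound_hypergeometric} rather than by the generating-function comparison.
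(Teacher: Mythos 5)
Your proof is correct, and after the shared opening move it takes a genuinely different route from the paper's. Both arguments begin by recognising the multivariate hypergeometric weight $H_{n,t_n;\,r}(w_r)$ inside~\eqref{d_r(t_n)} and peeling off the ratio $\binom{n-r}{nt_n-rw_r}\big/\binom{n}{nt_n-rw_r+re_0}$. The paper then \emph{bounds} this ratio from above by $(1-r/n)^{\,n-nt_n(0)+rw_r(0)-r}$, retains only the single coordinate $x$ with the largest count, and controls the resulting univariate sum by splitting it with the concentration inequality~\eqref{first_tail_bound_hypergeometric}, which is where the $2e^{-\cdot}+(1-\eta)^{\cdot}\leq 3e^{-N\eta^2/2}$ structure of the stated constant comes from. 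You instead evaluate the ratio \emph{exactly} as $\binom{m+r}{r}\big/\binom{n}{r}$ with $m=nt_n(0)-rw_r(0)$ (the factorial cancellation checks out), recast $d_r(t_n)$ as the probability of the event $R\cap\bar W\subseteq D$ for two independent uniform $r$-subsets, and reduce to the probability generating function $\E[(r/n)^{\ell}]$ of a univariate hypergeometric with $K=n-nt_n(0)\geq N$ special items. This buys two things: the exponent is governed by the \emph{total} non-zero mass $\sum_{x\neq 0}nt_n(x)$ rather than a single coordinate, and the Hoeffding with/without-replacement comparison yields the clean bound $e^{-N\eta^2}$, strictly sharper than the lemma's $3e^{-N\eta^2/2}$. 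The one external ingredient you invoke --- convex domination of sampling without replacement by sampling with replacement --- is standard but not established in the paper; you correctly note that falling back on~\eqref{first_tail_bound_hypergeometric} (the paper's own tool) would still close the argument within the stated constants, so there is no gap. Since every downstream use of this lemma only needs an upper bound of the given form, your stronger estimate is a drop-in replacement.
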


\begin{proof}
Without loss of generality, assume that $\max_{x\neq 0} t_n(x) = t_n(1) \geq N/n$. Let $w_r\in \mathcal{T}_r$ be a type such that $w_r \preceq \frac{n}{r} t_n$. Then
\begin{align}
\frac{\binom{n-r}{nt_n - rw_r}^2 \binom{r}{rw_r}}{\binom{n}{nt_n - r w_r + re_0} \binom{n}{nt_n}} &= \frac{\binom{n-r}{nt_n - rw_r} \binom{r}{rw_r}}{\binom{n}{nt_n}} \frac{\binom{n-r}{nt_n - rw_r}}{\binom{n}{nt_n - r w_r + re_0}} \nonumber \\
&\eqt{(i)} H_{n,t_n;\, r}(w_r)\, \frac{\binom{n-r}{nt_n - rw_r}}{\binom{n}{nt_n - r w_r + re_0}} \nonumber \\
&= H_{n,t_n;\, r}(w_r)\, \frac{(n-r)!}{n!} \prod_x \frac{\big((nt_n - r w_r + re_0)(x)\big)!}{\big((nt_n - r w_r)(x)\big)!} \nonumber \\
&\eqt{(ii)} H_{n,t_n;\, r}(w_r)\, \frac{(n-r)!}{n!}\, \frac{\big((nt_n - r w_r)(0) + r\big)!}{\big((nt_n - r w_r)(0)\big)!} \label{d_r_bound_proof_eq1} \\
&\eqt{(iii)} H_{n,t_n;\, r}(w_r)\, \prod_{j=0}^{n - n t_n(0) + rw_r(0) - r - 1} \frac{n-r-j}{n-j} \nonumber \\
&\leqt{(iv)} H_{n,t_n;\, r}(w_r) \left(1-\frac{r}{n}\right)^{n - nt_n(0) + rw_r(0) - r} \nonumber \\
&\leqt{(v)} H_{n,t_n;\, r}(w_r) \left(1-\eta\right)^{n - nt_n(0) + rw_r(0) - r} . \nonumber
\end{align}
Here: in~(i) we recalled the expression~\eqref{multivariate_hypergeometric} of the multivariate hypergeometric distribution; in~(ii) we noticed that all factors in the product are equal to one except for that corresponding to $x=0$; in~(iii) we observed that $n \geq nt_n(0) - rw_r(0) + r$, because
\bb
n - \left(nt_n(0) - rw_r(0) + r\right) &= n(1-t_n(0)) - r (1-w_r(0)) = \sum_{x\neq 0} \left(nt_n(x) - rw_r(x)\right) \geq 0
\ee
due to the fact that types are probability distribution and hence sum to one; in~(iv) we leveraged the fact that $\frac{a-j}{b-j} \leq \frac{a}{b}$ for all triples of integers $a,b,j$ such that $j\leq a\leq b$; and finally in~(v) we inserted the inequality $1-r/n \leq 1-\eta$, which follows from~\eqref{bounds_ratio_r_n}. 

Summing over all types $w_r$ that appear in the expression~\eqref{d_r(t_n)}, we obtain that
\bb
d_r(t_n) &\leq \sum_{\mathcal{T}_r \ni w_r \preceq \frac{n}{r} t_n} \!\!\! H_{n,t_n;\, r}(w_r) \left(1-\eta\right)^{n - nt_n(0) + rw_r(0) - r} \\
&= \sum_{\mathcal{T}_r \ni w_r \preceq \frac{n}{r} t_n} \!\!\! H_{n,t_n;\, r}(w_r) \left(1-\eta\right)^{\sum_{x\neq 0} (nt_n(x) - rw_r(x))}\! .
\label{d_r_bound_proof_eq2} 
\ee
Now, since the sum in~\eqref{d_r_bound_proof_eq2} features only types $w_r\in \mathcal{T}_r$ such that $rw_r(x) \leq nt_n(x)$ for all $x\in \XX$, we have
\begin{align}
d_r(t_n) \nonumber &\leq\ \sum_{\mathcal{T}_r \ni w_r \preceq \frac{n}{r} t_n} H_{n,t_n;\, r}(w_r) \left(1-\eta \right)^{\max_{x\neq 0} (nt_n(x) - rw_r(x))} \nonumber \\
&\leq\ \sum_{\mathcal{T}_r \ni w_r \preceq \frac{n}{r} t_n} H_{n,t_n;\, r}(w_r) \left(1-\eta\right)^{nt_n(1) - rw_r(1)} \nonumber \\
&\eqt{(vi)}\ \sum_{k=0}^{nt_n(1)} H\big(n, nt_n(1);\, r,k\big) \left(1-\eta\right)^{nt_n(1) - k} \nonumber \\
&\ \eqt{(vii)}\ \sum_{k=0}^{nt_n(1)} H\big(n, r;\, nt_n(1),k\big) \left(1-\eta\right)^{nt_n(1) - k} \nonumber \\
&\leqt{(viii)}\ 2\, e^{-2n t_n(1)\, u^2} + \sum_{\substack{k\in \{0,\ldots, r\}:\\ \big|\frac{k}{nt_n(1)} - \frac{r}{n}\big| \leq u}} \!\!\! H\big(n, r;\, nt_n(1),k\big) \left(1-\eta\right)^{nt_n(1) - k} \label{d_r_bound_proof_eq4} \\
&\leqt{(ix)}\ 2\, e^{- 2nt_n(1)\, u^2} + \left(1-\eta\right)^{nt_n(1) \left(1-\frac{r}{n} - u\right)} \nonumber \\
&\leqt{(x)}\ 2\, e^{- nt_n(1)\, \eta^2/2} + \left(1-\eta\right)^{n t_n(1) \eta/2} \nonumber \\
&\leqt{(xi)}\ 3\, e^{- nt_n(1)\, \eta^2/2} \nonumber \\
&\leq\ 3\, e^{- N \eta^2/2}\, . \nonumber
\end{align}
Here: in~(vi) we observed that if $w_r\sim H_{n,t_n;\, r}$ then the random variable $rw_r(1)$ is distributed as a univariate hypergeometric distribution, i.e.\ $rw_r(1) \sim H(n, nt_n(1);\, r,\cdot)$; in~(vii) we employed the duality relation~\eqref{duality_hypergeometric}; in~(viii) we used the concentration bound~\eqref{first_tail_bound_hypergeometric}, for some $u>0$ to be fixed shortly; to deduce~(ix), we simply wrote
\bb
nt_n(1) - k = n t_n(1) \left(1-\frac{k}{n t_n(1)} \right) \geq nt_n(1) \left(1 - \frac{r}{n} - u \right) ;
\ee
continuing, in~(x) we fixed $u=\eta/2$, so that $1-\frac{r}{n} - u \geq \eta - \eta/2 = \eta/2$; finally, in~(xi) we noticed that $(1-\eta)^\eta = e^{\eta \ln(1-\eta)} \leq e^{-\eta^2}$.
\end{proof}

\subsection{Controlling the norm at the output of the blurring map} \label{subsec_control_norm}

Here we will show how to give some precise estimates on the trace norm at the output of the quantum blurring map when the input operator has a large zero sub-matrix in the top left corner. This is the goal of Proposition~\ref{blurring_deficient_operators_prop}. This result will be key to proving that a suitably `second-quantised' version of the quantum blurring map, which acts on an infinite-dimensional bosonic system, is continuous with respect to the \emph{weak* topology} on the input space (Corollary~\ref{damping_weakly_vanishing_seq_cor}).

Before we can state and prove Proposition~\ref{blurring_deficient_operators_prop}, we need the following general lemma. The setting involves a Hermitian operator $T$ that is assumed to be a (non-strict) contraction, i.e.\ $\|T\|_\infty = 1$; if $T$ behaves instead like a \emph{strict} contraction on the orthogonal complement to one of its invariant subspaces $V$, then the lemma tells us that it also acts as a strict contraction when acting from the left and from the right on any other operator $Z$ with zero restriction to $V$. This explains the name: the `tail' in question is the difference between $Z$ and its restriction to $V$ (which might involve off-diagonal terms between $V$ and $V^\perp$).

\begin{lemma}[(Tail contraction)] \label{tail_contraction_lemma}
Let $T = T^\dag$ be an $N\times N$ Hermitian matrix, and let $V \subseteq \C^N$ be one of its invariant subspaces, so that $TV\subseteq V$. Assume that $\|T\|_\infty = 1$ and that
\bb
\big\| (\id - P_V) T\big\|_\infty \leq \mu < 1\, ,
\ee
where $P_V$ is the orthogonal projector onto $V$. If an $N\times N$ matrix $Z$ satisfies that $P_V Z P_V = 0$, then
\bb
\|TZT\|_1 \leq \left(1 - (1-\mu)^2\right) \|Z\|_1 \leq 2\mu \|Z\|_1\, .
\ee
\end{lemma}

\begin{proof}
Since $V$ is an invariant subspace for $T$, we have $[T,P_V]=0$. Hence, for an arbitrary real number $x\in \R$ we can write
\begin{align}
\|TZT\|_1 &= \|TZT - x^2 T P_V Z P_V T\|_1 \nonumber \\
&=  \|TZT - x^2 P_V TZT P_V \|_1 \nonumber \\
&= \frac12 \,\Big\| (\id - x P_V) T Z T (\id + x P_V) + (\id + x P_V) TZT (\id - x P_V) \Big\|_1 \label{tail_contraction_lemma_proof_eq1} \\
&\leq \big\| (\id - x P_V)T \big\|_\infty \big\|(\id + x P_V)T\big\|_\infty \|Z\|_1 \nonumber \\
&= \max\Big\{ (1\!-\!x)\, \big\|P_V T\big\|_\infty,\, \big\| (\id\!-\!P_V)T\big\|_\infty \Big\} \max\Big\{ (1\!+\!x)\, \big\|P_V T\big\|_\infty,\, \big\| (\id\!-\!P_V)T\big\|_\infty \Big\}\, \|Z\|_1 . \nonumber 
\end{align}
Note that
\bb
1 &= \|T\|_\infty \\
&= \max \big|\spec\,(T)\big| \\
&= \max \big|\spec\big(P_VT\big) \cup \spec\big( (\id-P_V) T\big)\big| \\
&= \max\big\{ \big|\spec\big(P_VT\big)\big|,\, \big|\spec\big( (\id-P_V) T\big)\big| \big\} .
\ee
Since $\max \big|\spec\big( (\id-P_V) T\big)\big| = \big\| (\id - P_V) T\big\|_\infty \leq \mu < 1$, it follows that $\max\big|\spec(P_VT)\big| = \|P_V T\|_\infty = 1$. Thus, the relation~\eqref{tail_contraction_lemma_proof_eq1} can be rewritten as
\bb
\|TZT\|_1 \leq \max\big\{1-x,\, \big\| (\id-P_V)T\big\|_\infty\big\} \max\big\{1+x,\, \big\| (\id-P_V)T\big\|_\infty\big\}\, \|Z\|_1\, .
\ee
Setting $x \coloneqq 1-\mu$ yields the claim.
\end{proof}

\begin{prop} \label{blurring_deficient_operators_prop}
For two integers $n\geq N\geq 1$, let $X$ be an operator on $\mathrm{Sym}^n\big(\C^d\big)$ such that $\braket{n,t_n|X|n,s_n} = 0$ whenever $t_n, s_n\in \mathcal{T}_n$ satisfy that $\max_{x\neq 0} nt_n(x) \leq N$ and $\max_{x\neq 0} ns_n(x) \leq N$. Then, for all $\delta\in (0,\frac12]$ it holds that
\bb
\big\| B_{n,\delta}(X) \big\|_1 \leq 2\left( e^{-\floor{\delta n}/18} + \sqrt3\, e^{-N \delta^2/16} \right) \|X\|_1\, ,
\label{blurring_deficient_operators}
\ee
where the blurring map is defined by~\eqref{blurring}.
\end{prop}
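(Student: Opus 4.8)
The plan is to estimate the trace norm directly from the representation of the blurring map given by Corollary~\ref{blurring_Theta_cor}, namely $B_{n,\delta}(X) = \sum_{r=0}^{\floor{\delta n}} H(n+\floor{\delta n},\floor{\delta n};\,n,r)\,\Theta_{n,r}\big(D_r^{1/2} X D_r^{1/2}\big)$. Since each $\Theta_{n,r}$ is a genuine quantum channel (see~\eqref{Theta_is_quantum_channel}) and is therefore a trace-norm contraction, and since the hypergeometric coefficients form a probability distribution in $r$, the triangle inequality reduces everything to controlling $\big\|D_r^{1/2} X D_r^{1/2}\big\|_1$ for each fixed $r$ and then resumming against the weights $H(n+\floor{\delta n},\floor{\delta n};\,n,r)$. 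Note also that $D_r\leq \id$, since $\Gamma_{n,r}$ is trace non-increasing on positive operators (the compression by $\Pi_n$ can only lower the trace), so $\big\|D_r^{1/2}\big\|_\infty=1$, the maximum being attained at the all-$\ket{0}$ type, which gives $d_r=1$; in particular $B_{n,\delta}$ itself is trace-norm contractive. Consequently, if $\sqrt3\,e^{-N\delta^2/16}\geq 1$ the right-hand side of~\eqref{blurring_deficient_operators} already exceeds $2\|X\|_1\geq \|B_{n,\delta}(X)\|_1$ and there is nothing to prove, so I may assume $\mu\coloneqq \sqrt3\,e^{-N\delta^2/16}<1$.

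The crucial observation is that the hypothesis on $X$ says precisely that $P_V X P_V=0$, where $P_V$ is the orthogonal projector onto the ``low-excitation'' subspace $V\coloneqq \Span\big\{\ket{n,t_n}:\ \max_{x\neq 0} nt_n(x)\leq N\big\}$. Because $D_r=\sum_{t_n} d_r(t_n)\ketbra{n,t_n}$ is diagonal in the type basis (see~\eqref{D_r}), $V$ is an invariant subspace of $T\coloneqq D_r^{1/2}$, and $\big\|(\id-P_V)D_r^{1/2}\big\|_\infty=\max\big\{\sqrt{d_r(t_n)}:\ \max_{x\neq 0} nt_n(x)> N\big\}$. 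This is exactly the configuration of Lemma~\ref{tail_filtering_lemma}, applied with $Z=X$ and $T=D_r^{1/2}$ (whose normalisation $\|T\|_\infty=1$ matches the lemma's hypothesis): as soon as we produce a bound $\big\|(\id-P_V)D_r^{1/2}\big\|_\infty\leq \mu<1$, we obtain $\big\|D_r^{1/2} X D_r^{1/2}\big\|_1\leq 2\mu\|X\|_1$.

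I would then split the sum over $r$ at the threshold $r=\delta n/2$, taking $\eta\coloneqq \delta/2$. In the ``good'' range $\delta n/2\leq r\leq \floor{\delta n}$ one has $\eta\leq r/n\leq \delta\leq \tfrac12\leq 1-\eta$, so Lemma~\ref{d_r_bound_lemma} applies to every type $t_n$ with $\max_{x\neq 0}nt_n(x)> N$ (hence $nt_n(x)\geq N$ for some $x\neq 0$), yielding $d_r(t_n)\leq 3\,e^{-N\eta^2/2}=3\,e^{-N\delta^2/8}$ and therefore $\big\|(\id-P_V)D_r^{1/2}\big\|_\infty\leq \sqrt3\,e^{-N\delta^2/16}=\mu$. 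Lemma~\ref{tail_filtering_lemma} then gives $\big\|D_r^{1/2} X D_r^{1/2}\big\|_1\leq 2\sqrt3\,e^{-N\delta^2/16}\|X\|_1$ for each such $r$; summing the (sub-stochastic) hypergeometric weights over the good range produces the second term $2\sqrt3\,e^{-N\delta^2/16}\|X\|_1$. In the ``bad'' range $r<\delta n/2$ I use the crude estimate $\big\|D_r^{1/2} X D_r^{1/2}\big\|_1\leq \|X\|_1$ and bound the total weight $\sum_{r<\delta n/2}H(n+\floor{\delta n},\floor{\delta n};\,n,r)$ by a lower-tail estimate.

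The main technical point, and the place where the constants must be matched with care, is this lower-tail estimate. The mean of $r$ under $H(n+\floor{\delta n},\floor{\delta n};\,n,\cdot)$ equals $\tfrac{n\floor{\delta n}}{n+\floor{\delta n}}\approx \tfrac{\delta n}{1+\delta}$, which for $\delta\in(0,\tfrac12]$ lies above $\delta n/2$ by a margin of at least $\approx \delta n/6$. Since the number of draws $n$ is at least half the urn size $n+\floor{\delta n}$ (because $\delta\leq \tfrac12$, so $n\geq \floor{\delta n}$), the tighter hypergeometric tail bound~\eqref{second_tail_bound_hypergeometric} is available, with deviation $u\approx \delta/6$ and denominator $(n+\floor{\delta n})-n=\floor{\delta n}\leq \delta n$; this gives $\sum_{r<\delta n/2}H(\cdots)\leq 2\,e^{-2n^2 u^2/\floor{\delta n}}\leq 2\,e^{-n\delta/18}$, whence the first term $2\,e^{-n\delta/18}\|X\|_1$. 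Adding the two contributions yields~\eqref{blurring_deficient_operators}. The only remaining subtlety is that the floor $\floor{\delta n}$ must be tracked through the mean and the margin of the hypergeometric; this is routine, and the loose constants $\tfrac{1}{18}$ and $\tfrac{1}{16}$ are chosen precisely to absorb it.
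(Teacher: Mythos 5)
Your proposal is correct and follows essentially the same route as the paper's proof: truncate the hypergeometric sum via the tail bound~\eqref{second_tail_bound_hypergeometric} with $u\approx\delta/6$, then control the surviving terms by applying Lemma~\ref{tail_filtering_lemma} to $T=D_r^{1/2}$ on the low-excitation subspace together with Lemma~\ref{d_r_bound_lemma} at $\eta=\delta/2$. The only cosmetic difference is that you truncate one-sidedly at $r=\delta n/2$ and use $r\leq\floor{\delta n}$ directly to verify the upper half of~\eqref{bounds_ratio_r_n}, whereas the paper formally truncates on both sides at $\ceil{3\delta n/2}-1$; both yield the same constants.
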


\begin{proof}
We start by recalling the tail bound~\eqref{second_tail_bound_hypergeometric}, which becomes in this case
\bb
\sum_{\substack{r\in \{0,\ldots,n\}:\\ \left|\frac{r}{n} - \frac{\floor{\delta n}}{n+\floor{\delta n}}\right|\, \geq\, u_n}} \!\!\!\!\! H\left(n+\floor{\delta n},\floor{\delta n};\, n,r\right) \leq 2\,e^{-2n^2u_n^2/\floor{\delta n}} 
\ee
for all $u_n>0$. To simplify our expressions, we can set 
\bb
u_n \coloneqq \frac{\floor{\delta n}}{n+\floor{\delta n}} - \frac{\floor{\delta n}}{2n} \geq \frac{\floor{\delta n}}{6n}\, , 
\ee
where the last inequality follows from our assumption that $0<\delta\leq 1/2$; since for this choice 
\bb
\frac{\floor{\delta n}}{n+\floor{\delta n}} - u_n = \frac{\floor{\delta n}}{2n} \geq \frac1n \floor{\frac{\delta n}{2}}
\ee
and
\bb
\frac{\floor{\delta n}}{n+\floor{\delta n}} + u_n \leq \frac32 \delta\,, 
\ee
we have
\bb
\sum_{r = 0}^{\floor{\delta n/2}} H(n+\floor{\delta n}, \floor{\delta n};\, n,r) + \sum_{r = \ceil{3\delta n/2}}^n H(n+\floor{\delta n},\floor{\delta n};\, n,r) \leq 2\, e^{-\floor{\delta n}/18} .
\ee
Therefore, using the expression in Lemma~\ref{blurring_Gammas_lemma} and remembering that $\Gamma_{n,r}$ is a sub-channel and hence a trace-norm-contractive map (see~\eqref{contractivity_trace_norm}), we have
\bb
\left\| B_{n,\delta}(X) - \sum_{r=\floor{\delta n/2}+1}^{\ceil{3\delta n/2}-1} H(n+\floor{\delta n},\floor{\delta n};\, n,r)\ \Gamma_{n,r}(X) \right\|_1 \leq 2\,e^{-\floor{\delta n}/18}\, \|X\|_1
\label{blurring_truncated}
\ee
for all input operators $X$. Due to~\eqref{Gamma_in_terms_of_Theta} (see also Corollary~\ref{blurring_Theta_cor}), from this we deduce that
\bb
\left\|B_{n,\delta}(X)\right\|_1 &\leq 2\,e^{-\floor{\delta n}/18} \|X\|_1 + \sum_{r=\floor{\delta n/2}+1}^{\ceil{3\delta n/2}-1} \!\!\!\! H(n+\!\floor{\delta n},\floor{\delta n};\, n,r) \left\|\Theta_{n,r}\Big( \!D_r^{1/2} X D_r^{1/2}\! \Big)\right\|_1 \\[4pt]
&\leq 2\, e^{-\floor{\delta n}/18} \|X\|_1 + \sum_{r=\floor{\delta n/2}+1}^{\ceil{3\delta n/2}-1} \!\!\!\! H(n+\floor{\delta n},\floor{\delta n};\, n,r) \left\| D_r^{1/2} X D_r^{1/2} \right\|_1\, ,
\label{blurring_deficient_operators_proof_eq5}
\ee
where in the last inequality we used once again the contractivity of the trace norm under quantum channels (Eq.~\eqref{contractivity_trace_norm}). Next, for some $r$ in the above sum we will estimate $\big\| D_r^{1/2} X D_r^{1/2} \big\|_1$ by means of Lemma~\ref{tail_contraction_lemma}. To this end, we make the substitutions $T=D_r^{1/2}$ and $V = \Span\{\ket{n,t_n}\!:\, t_n\!\in\! \mathcal{T}_n,\, \max_{x\neq 0} nt_n(x) \!\leq\! N\}$. Note that $V$ is $D_r$-invariant because it is spanned by some eigenvectors of $D_r$, and $P_V X P_V =0$ by assumption. Furthermore,
\bb
\big\|D_r^{1/2}\big\|_\infty = \max_{t_n \in \mathcal{T}_n} \sqrt{d_r(t_n)} = 1\, ,
\ee
simply because $d_r(t_n) \in [0,1]$ for all $t_n\in\mathcal{T}_n$, as one deduces immediately from~\eqref{d_r_bound_proof_eq2}, and moreover $d_r(e_0) = 1$ by inspection, where $e_0$ is as usual the type concentrated on $x=0$. Finally, we estimate
\bb
\big\| (\id-P_V) D_r^{1/2}\big\|_\infty = \max_{t_n\in \mathcal{T}_n,\ \max_{x\neq 0} nt_n(x) > N} \sqrt{d_r(t_n)} \leq \sqrt{3}\, e^{- N \delta^2 /16}
\ee
using Lemma~\ref{d_r_bound_lemma} with $\eta=\delta/2$; note that~\eqref{bounds_ratio_r_n} is satisfied because
\bb
\frac{\delta}{2} \leq \frac{\floor{\delta n/2} \!+\! 1}{n} \leq \frac{r}{n} \leq \frac{\ceil{3\delta n/2} \!-\! 1}{n} \leq \frac{3\delta}{2} \leq 1 -\frac{\delta}{2}\, .
\ee
Setting $\mu = \sqrt{3}\, e^{-N \delta^2/16}$, Lemma~\ref{tail_contraction_lemma} applied on the right-hand side of~\eqref{blurring_deficient_operators_proof_eq5} gives us
\bb
\left\|B_{n,\delta}(X)\right\|_1 &\leq 2\,e^{-\floor{\delta n}/18} \|X\|_1 + 2\sqrt3\, e^{-N \delta^2/16} \|X\|_1\, ,
\ee
which completes the proof.
\end{proof}

It is useful to define also a modified version of the blurring map $B_{n,\delta}$, in which we embed $\mathrm{Sym}^n\big(\C^d\big)$ into the Fock space $\HH_{d-1} = \big(\ell^2(\N)\big)^{\otimes (d-1)}$ (see~\eqref{H_m}). Note that $\HH_{d-1}$ is spanned by the number vectors $\ket{k_1,\ldots,k_{d-1}} = \ket{k}$, where $k_i\in \N$ and $k\in \N^{d-1}$, which can be connected to the basis $\{\ket{n,t_n}\}_{t_n\in \mathcal{T}_n}$ given by~\eqref{ket_n_k} via the isometry $U_n: \mathrm{Sym}^n\big(\C^d\big)\to \HH_{d-1}$ defined by
\bb
U_n\ket{n,t_n} = \ket{nt_n(1),\, nt_n(2),\, \ldots, n t_n(d-1)} \quad \forall\, t_n \in \mathcal{T}_n\, .
\label{embedding}
\ee
We can then construct the `lifted' blurring map $\widetilde{B}_{n,\delta}:\T(\HH_{d-1}) \to \T(\HH_{d-1})$ defined by
\bb
\widetilde{B}_{n,\delta}(X) \coloneqq U_n^{\vphantom{\dag}} B_{n,\delta}\big(U_n^\dag X U_n^{\vphantom{\dag}} \big) U_n^\dag .
\label{lifted_blurring}
\ee
Note that $\widetilde{B}_{n,\delta}$, just like $B_{n,\delta}$, is a sub-channel. 

Now, Proposition~\ref{blurring_deficient_operators_prop} tells us that $B_{n,\delta}$ is highly contractive on operators $X$ that have zero restriction to the space of types with most of the weight concentrated on $x=0$. Upon lifting, this translates to the statement that $\widetilde{B}_{n,\delta}$ is highly contractive on operators $X$ with zero restriction to the space of vectors with bounded Fock number. A direct consequence is that if a sequence of operators vanishes in the limit with respect to the weak* topology, the lifted blurring map $\widetilde{B}_{n,\delta}$ will be increasingly trace norm contractive on that sequence. This is formalised as follows.

\begin{cor} \label{damping_weakly_vanishing_seq_cor}
Let $(X_n)_n$ be a sequence of trace class operators $X_n\in \T(\HH_{d-1})$ on the Hilbert space $\HH_{d-1}$ given by~\eqref{H_m}. Let $\delta \in (0,\frac12]$. If $\|X_n\|_1\leq C$ for some constant $C$ and all $n$, and moreover $X_n \tendsn{w^*} 0$ (see Section~\ref{subsubsec_operator_topologies}), then
\bb
\lim_{n\to\infty} \big\|\widetilde{B}_{n,\delta}(X_n) \big\|_1 = 0\, ,
\label{damping_weakly_vanishing_seq}
\ee
where $\widetilde{B}_{n,\delta}$ is defined by~\eqref{lifted_blurring}.
\end{cor}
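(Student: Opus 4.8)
The plan is to split each $X_n$ into a finite-rank ``low-occupation'' corner and a remainder: the corner is handled by the weak* convergence hypothesis, and the remainder by Proposition~\ref{blurring_deficient_operators_prop}. The two relevant parameters are then sent to infinity in the correct order.

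Concretely, for a fixed integer $N\geq 1$ I would let $P_N$ denote the orthogonal projector onto the span of the Fock vectors $\ket{k}=\ket{k_1,\ldots,k_{d-1}}$ with $\max_i k_i\leq N$. This is a finite-rank projector (of rank $(N+1)^{d-1}$), and under the identification~\eqref{embedding} its range corresponds exactly to the $\ket{n,t_n}$ with $\max_{x\neq 0}nt_n(x)\leq N$, since $k_i=nt_n(i)$. I then write $X_n=P_NX_nP_N+R_n$ with $R_n\coloneqq X_n-P_NX_nP_N$. For the remainder, $\braket{k|R_n|l}=0$ whenever both $\max_i k_i\leq N$ and $\max_i l_i\leq N$, so $U_n^\dag R_n U_n$ on $\mathrm{Sym}^n(\C^d)$ satisfies the hypothesis of Proposition~\ref{blurring_deficient_operators_prop}. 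Since $U_n$ is an isometry, conjugation by it preserves the trace norm, so for $n\geq N$
\begin{equation*}
\big\|\widetilde{B}_{n,\delta}(R_n)\big\|_1=\big\|B_{n,\delta}\big(U_n^\dag R_n U_n\big)\big\|_1\leq 2\big(e^{-n\delta/18}+\sqrt3\,e^{-N\delta^2/16}\big)\,\|R_n\|_1\leq 4C\big(e^{-n\delta/18}+\sqrt3\,e^{-N\delta^2/16}\big),
\end{equation*}
using $\|R_n\|_1\leq\|X_n\|_1+\|P_NX_nP_N\|_1\leq 2C$. For the corner $P_NX_nP_N$, I would use that $\widetilde{B}_{n,\delta}$ is a trace-norm contraction: it is conjugation by an isometry of the map $B_{n,\delta}$, which is a sub-channel (a convex combination of the sub-channels $\Gamma_{n,r}$, cf.\ Lemma~\ref{blurring_Gammas_lemma}) and hence completely positive and trace non-increasing, so $\|\widetilde{B}_{n,\delta}(P_NX_nP_N)\|_1\leq\|P_NX_nP_N\|_1$.

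The key analytic input is that $\|P_NX_nP_N\|_1\to 0$ as $n\to\infty$ for each \emph{fixed} $N$, and this is exactly where the weak* hypothesis enters. Each matrix element $\braket{k|X_n|l}=\Tr\big(X_n\,\ketbraa{l}{k}\big)$ tends to $0$ by~\eqref{weak_star_convergence}, since $\ketbraa{l}{k}$ is finite-rank, hence compact; because $P_NX_nP_N$ lives in the fixed finite-dimensional range of $P_N$, entrywise convergence forces trace-norm convergence there. Combining this with the remainder estimate, and using the triangle inequality $\|\widetilde{B}_{n,\delta}(X_n)\|_1\leq\|\widetilde{B}_{n,\delta}(P_NX_nP_N)\|_1+\|\widetilde{B}_{n,\delta}(R_n)\|_1$, gives
\begin{equation*}
\limsup_{n\to\infty}\big\|\widetilde{B}_{n,\delta}(X_n)\big\|_1\leq 4\sqrt3\,C\,e^{-N\delta^2/16},
\end{equation*}
and since $N$ was arbitrary, letting $N\to\infty$ yields the claim.

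The main subtlety to get right is the \emph{order of limits}: one must fix $N$ first, send $n\to\infty$ to kill simultaneously the $e^{-n\delta/18}$ term from Proposition~\ref{blurring_deficient_operators_prop} and the corner norm $\|P_NX_nP_N\|_1$ via weak* convergence, and only afterwards let $N\to\infty$ to extinguish the residual $e^{-N\delta^2/16}$. The other point requiring care is the interplay between the isometry $U_n$ and the corner decomposition, namely verifying that the block-vanishing condition of Proposition~\ref{blurring_deficient_operators_prop} is genuinely inherited by $U_n^\dag R_n U_n$, and that $\widetilde{B}_{n,\delta}$ is a bona fide trace-norm contraction so that the finite-rank corner contributes nothing in the limit.
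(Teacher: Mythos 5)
Your proposal is correct and follows essentially the same route as the paper: the same finite-occupation projector $P_N$, the same splitting $X_n = P_N X_n P_N + (X_n - P_N X_n P_N)$, weak* convergence killing the finite-rank corner, Proposition~\ref{blurring_deficient_operators_prop} controlling the remainder, and the same order of limits ($n\to\infty$ first, then $N\to\infty$), arriving at the identical intermediate bound $4\sqrt{3}\,C\,e^{-N\delta^2/16}$. No gaps.
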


\begin{proof}
Fix some $N \in \N^+$, and consider the projector
\bb
P_N\coloneqq \sum_{k_1,\ldots, k_{d-1}\in \{0,\ldots, N\}}\ketbra{k_1,\ldots, k_{d-1}} = \sum_{k\in \{0,\ldots,N\}^{d-1}} \ketbra{k}
\ee
onto the subspace of $\HH_{d-1}$ spanned by Fock states with Fock number bounded by $N$ on each mode. Since the support of $P_N$ is finite dimensional and $X_n \tendsn{w^*} 0$, it follows that
\bb
\big\|P_N X_n P_N\big\|_1 \tendsn{} 0\, . 
\label{damping_weakly_vanishing_seq_proof_eq1}
\ee
Therefore, for $n\geq N$,
\bb
\big\|\widetilde{B}_{n,\delta}(X_n) \big\|_1 &\leq \big\|\widetilde{B}_{n,\delta}\big(P_N X_n P_N\big) \big\|_1 + \big\|\widetilde{B}_{n,\delta}\big(X_n - P_N X_n P_N \big) \big\|_1 \\
&\leqt{(i)} \big\|P_N X_n P_N \big\|_1 + \big\|\widetilde{B}_{n,\delta}\big(X_n - P_N X_n P_N \big) \big\|_1 \\
&= \big\|P_N X_n P_N \big\|_1 + \big\| B_{n,\delta}\big(U_n^{\dag}(X_n - P_N X_n P_N) U_n^{\vphantom{\dag}} \big) \big\|_1 \\
&\leqt{(ii)} \big\|P_N X_n P_N \big\|_1 + 2\left( e^{-\floor{\delta n}/18} + \sqrt3\, e^{-N \delta^2/16} \right) \big\|X_n - P_N X_n P_N\big\|_1 \\
&\leqt{(iii)} \big\|P_N X_n P_N \big\|_1 + 4C\left( e^{-\floor{\delta n}/18} + \sqrt3\, e^{-N \delta^2/16} \right) .
\label{damping_weakly_vanishing_seq_proof_eq2}
\ee
Here: (i)~follows from the fact that $\widetilde{B}_{n,\delta}$ is a sub-channel; in~(ii) we used Proposition~\ref{blurring_deficient_operators_prop}, which is applicable because
\bb
\bra{n,t_n} U_n^{\dag}\left(X_n - P_N X_n P_N\right) U_n^{\vphantom{\dag}} \ket{n,s_n} = 0
\ee
for all $t_n,s_n\in \mathcal{T}_n$ such that $\max_{x\neq 0} nt_n(x) \leq N$ and $\max_{x\neq 0} ns_n(x)\leq N$; finally, in~(iii) we simply observed that 
\bb
\|X_n - P_NX_nP_N\|_1 \leq \|X_n\|_1 + \|P_N X_n P_N\|_1 \leq 2\|X_n\|_1 \leq 2C
\ee
by the triangle inequality. Taking the limit $n\to\infty$ in~\eqref{damping_weakly_vanishing_seq_proof_eq2} and using~\eqref{damping_weakly_vanishing_seq_proof_eq1} yields
\bb
\limsup_{n\to\infty} \big\|\widetilde{B}_{n,\delta}(X_n) \big\|_1 \leq 4\sqrt3\, C\, e^{-N \delta^2/16}\, .
\ee
Since $N\in \N^+$ was arbitrary, we can now take the limit $N\to\infty$, which completes the proof.
\end{proof}

\subsection{Convergence in second quantisation} \label{subsec_convergence}

The newly defined lifted blurring maps act on a continuous-variable system composed of $d-1$ bosonic modes. Since all maps act on the same space, we can now ask ourselves whether they converge to anything. We will see that this is indeed the case, and that the limiting map is a composition of well-known bosonic (sub-)channels (Corollary~\ref{lifted_blurring_strong_convergence_cor}). 

To arrive at that result, we need to prove an elementary lemma first.

\begin{lemma} \label{uniform_convergence_lemma}
Let $(v_n)_n$ be a sequence of positive numbers $v_n > \frac{1}{2n}$ such that $\lim_{n\to\infty} v_n = 0$, and let $\theta >0$ be another (fixed) positive number. Then, for all $k\in \N$,
\bb
\lim_{n\to\infty} \max_{N\in \N:\, \left|\frac{N}{n} - \theta\right|\, \leq\, v_n} \left| \frac{1}{n^k} \binom{N}{k} - \frac{\theta^k}{k!}\right| = 0\, .
\label{uniform_convergence}
\ee
\end{lemma}

\begin{proof}
Note that the condition $v_n > \frac{1}{2n}$ ensures that the set of which we are taking the maximum in~\eqref{uniform_convergence} is not empty. The claim follows immediately from the inequalities $\frac{(N-k)^k}{k!} \leq \binom{N}{k} \leq \frac{N^k}{k!}$.
\end{proof}

Before we state and prove the key lemma of this section, we need to introduce some notation. For some $p\in \N^{\XX\setminus \{0\}}$ and some integer $n\geq \sum_{x\neq 0} p(x)$, form the type $t_{n,p}\in \mathcal{T}_n$ defined by
\bb
t_{n,p}(x) \coloneqq \left\{ \begin{array}{ll} 1 - \frac1n \sum_{x\neq 0} p(x) & \text{ if $x=0$,} \\[1ex] p(x)/n & \text{ if $x\neq 0$,} \end{array} \right.
\label{type_t_n_p}
\ee
so that
\bb
U_n \ket{n, t_{n,p}} = \ket{p}\, , \qquad U_n^\dag \ket{p} = \ket{n,t_{n,p}}\, ,
\label{action_U_on_type_t_n_p}
\ee
where $U_n$ is the embedding defined by~\eqref{embedding}. 

\begin{lemma} \label{blurring_convergence_lemma}
Let $h,k,h',k'\in \N^{\XX\setminus \{0\}}$ be integer vectors of length $d-1$. Then, with the notation introduced in~\eqref{type_t_n_p}, for all $\delta\in (0,\frac12]$ it holds that
\bb
\lim_{n\to\infty} \bra{n,t_{n,h'}} B_{n,\delta}\big(\ketbraa{n,t_{n,h}}{n,t_{n,k}}\big) \ket{n,t_{n,k'}} &= \!\left\{ \begin{array}{ll} \!\!\prod\limits_{x\neq 0} \!\!\sqrt{\binom{h(x)}{\ell(x)}\!\binom{k(x)}{\ell(x)}} \frac{\delta^{\ell(x)}}{(1+\delta)^{(h+k-\ell)(x)}} & \text{if $h\!-\!h'=k\!-\!k'=\ell \geq 0$,} \\[4ex] 0 & \text{otherwise.} \end{array} \right.
\label{blurring_entrywise_convergence}
\ee
In particular, the lifted blurring map~\eqref{lifted_blurring} satisfies that
\bb
\lim_{n\to\infty} \widetilde{B}_{n,\delta}\big(\ketbraa{h}{k}\big) &= \sum_{\ell\in \N^{\XX\setminus \{0\}}} \prod_{x\neq 0} \sqrt{\binom{h(x)}{\ell(x)}\!\binom{k(x)}{\ell(x)}}\ \frac{\delta^{\ell(x)}}{(1+\delta)^{(h+k-\ell)(x)}} \ketbraa{h-\ell}{k-\ell}\, ,
\label{blurring_limit}
\ee
where the convergence is in trace norm, and the sum on the right-hand side has only finitely many non-zero terms.\footnote{Terms where $h-\ell$ has some negative entry, and hence, strictly speaking, $\ket{h-\ell}$ would not be well defined, do not appear, because in that case $\prod_{x\neq 0} \binom{h(x)}{\ell(x)} =0$.}
\end{lemma}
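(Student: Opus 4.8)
The plan is to reduce the trace-norm statement~\eqref{blurring_limit} to the entrywise statement~\eqref{blurring_entrywise_convergence}, and to establish the latter by inserting the decomposition of $B_{n,\delta}$ from Lemma~\ref{blurring_Gammas_lemma} together with the explicit matrix elements of $\Gamma_{n,r}$ computed inside the proof of Lemma~\ref{Gamma_Kraus_lemma}. I would first record two structural facts. By condition~(b) in that proof, $\bra{n,t'_n}\Gamma_{n,r}(\ketbraa{n,t_n}{n,s_n})\ket{n,s'_n}$ vanishes unless $nt'_n(x)\le nt_n(x)$ and $ns'_n(x)\le ns_n(x)$ for all $x\neq 0$. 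Hence, for \emph{every} $n$, the operator $\widetilde{B}_{n,\delta}(\ketbraa{h}{k})$ is supported on the \emph{fixed} finite-dimensional space $\Span\{\ketbraa{p}{q}:\, 0\preceq p\preceq h,\ 0\preceq q\preceq k\}$; the same conditions force the matrix element to vanish unless $h-h'=k-k'=:\ell\succeq 0$, which is exactly the dichotomy in~\eqref{blurring_entrywise_convergence}. Since the limiting operator on the right of~\eqref{blurring_limit} lives in the same finite-dimensional space, entrywise convergence on this fixed space is \emph{equivalent} to trace-norm convergence; it therefore suffices to prove~\eqref{blurring_entrywise_convergence}.

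Assume then $\ell=h-h'=k-k'\succeq 0$ and set $L\coloneqq\sum_{x\neq 0}\ell(x)$, $H\coloneqq\sum_{x\neq0}h(x)$, $K\coloneqq\sum_{x\neq0}k(x)$, $H'\coloneqq H-L$, $K'\coloneqq K-L$. Substituting the types $t_{n,h},t_{n,k},t_{n,h'},t_{n,k'}$ into the explicit value of $\bra{n,t'_n}\Gamma_{n,r}(\ketbraa{n,t_n}{n,s_n})\ket{n,s'_n}$ obtained in the proof of Lemma~\ref{Gamma_Kraus_lemma} (here the relevant $w_r$ has $rw_r(x)=\ell(x)$ for $x\neq0$ and $rw_r(0)=r-L$, so the terms vanish for $r<L$), the factorials over the indices $x\neq0$ collapse, after using $h'=h-\ell$ and $k'=k-\ell$, into precisely the prefactor $\prod_{x\neq0}\sqrt{\binom{h(x)}{\ell(x)}\binom{k(x)}{\ell(x)}}$ of~\eqref{blurring_entrywise_convergence}. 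What remains is a purely $(n,r)$-dependent factor, $r$-independent prefactor aside,
\begin{equation*}
c_{n,r}\coloneqq\frac{(r)_L\,(n-r)_{H'}\,(n-r)_{K'}}{\sqrt{(n)_{H}\,(n)_{K}\,(n)_{H'}\,(n)_{K'}}}\,,\qquad (m)_j\coloneqq\frac{m!}{(m-j)!}\,,
\end{equation*}
so that the full matrix element equals $\prod_{x\neq0}\sqrt{\binom{h(x)}{\ell(x)}\binom{k(x)}{\ell(x)}}\cdot\sum_{r} H(n+\floor{\delta n},\floor{\delta n};\, n,r)\,c_{n,r}$ by Lemma~\ref{blurring_Gammas_lemma}.

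It then remains to evaluate $\lim_{n\to\infty}\sum_{r} H(n+\floor{\delta n},\floor{\delta n};\, n,r)\,c_{n,r}$. On the ``bulk'' $\{r:\,|r/n-\tfrac{\delta}{1+\delta}|\le u\}$, where $r,\,n-r,\,n=\Theta(n)$, the elementary expansion $(m)_j=m^j(1+O_j(1/m))$ for fixed $j$ gives, uniformly over the bulk, $c_{n,r}=(r/n)^{L}(1-r/n)^{H+K-2L}(1+O(1/n))$, the powers of $n$ cancelling because $L+H'+K'=H+K-L=\tfrac12(H+K+H'+K')$. Since $\Gamma_{n,r}$ is a sub-channel, the crude bound $|c_{n,r}|\le 1$ holds for all $r$, so the tail $\{|r/n-\tfrac{\delta}{1+\delta}|>u\}$ contributes at most $2e^{-2nu^2/\delta}\to0$ by the hypergeometric tail bound~\eqref{second_tail_bound_hypergeometric} (applicable as $\delta\le\tfrac12$). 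On the bulk, concentration of the hypergeometric mass at $\phi_0\coloneqq\delta/(1+\delta)$ together with the continuity of $\phi\mapsto\phi^{L}(1-\phi)^{H+K-2L}$ yields, on sending $n\to\infty$ and then $u\to0$,
\begin{equation*}
\sum_{r} H(n+\floor{\delta n},\floor{\delta n};\, n,r)\,c_{n,r}\ \longrightarrow\ \phi_0^{\,L}(1-\phi_0)^{H+K-2L}=\frac{\delta^{L}}{(1+\delta)^{H+K-L}}=\prod_{x\neq0}\frac{\delta^{\ell(x)}}{(1+\delta)^{h(x)+k(x)-\ell(x)}}\,,
\end{equation*}
which combined with the prefactor is exactly~\eqref{blurring_entrywise_convergence}.

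The main obstacle is the asymptotic analysis of this last step: making the falling-factorial expansion of $c_{n,r}$ \emph{uniform} over the whole bulk, so that the $(1+O(1/n))$ error can be pulled out of the summation, and justifying the interchange of the limit $n\to\infty$ with the hypergeometric-weighted sum. This is precisely where the boundedness $|c_{n,r}|\le1$, the exponential tail bound~\eqref{second_tail_bound_hypergeometric}, and the continuity of the limiting power function do the work. Everything else — the vanishing/support dichotomy, the collapse of the $x\neq0$ factorials into the binomial prefactor, and the final finite-dimensional upgrade from entrywise to trace-norm convergence — is routine bookkeeping.
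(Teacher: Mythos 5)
Your proposal is correct and follows essentially the same route as the paper: insert the decomposition of Lemma~\ref{blurring_Gammas_lemma}, evaluate the matrix elements of $\Gamma_{n,r}$ via the computation in Lemma~\ref{Gamma_Kraus_lemma}, truncate the hypergeometric sum with a concentration bound, and extract the limit $(r/n)^{L}(1-r/n)^{H+K-2L}\to\delta^{L}/(1+\delta)^{H+K-L}$ from the asymptotics of the multinomial ratios, with the trace-norm upgrade being immediate because the output lives in a fixed finite-dimensional subspace. The only cosmetic differences are your use of a fixed window width $u$ sent to zero at the end (the paper takes $u_n\sim n^{-1/3}$) and the falling-factorial bookkeeping.
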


\begin{proof}
We start with~\eqref{blurring_entrywise_convergence}. From Lemma~\ref{Gamma_Kraus_lemma} it is clear that the statement holds when $h - h' \neq k - k'$, or $h - h' = k - k' \eqqcolon \ell$ but $\ell(x) < 0$ for some $x\neq 0$, because the sequence on the left-hand side of~\eqref{blurring_entrywise_convergence} is identically zero for all $n$, and similarly the right-hand side is also zero. The latter claim is obvious, so let us justify the former one. Using Lemmas~\ref{blurring_Gammas_lemma} and~\ref{Gamma_Kraus_lemma}, it is clear that the left-hand side of~\eqref{blurring_entrywise_convergence} vanishes unless there is an $r\in \{0,1,\ldots,\floor{\delta n}\}$ and some $w_r\in \mathcal{T}_r$ such that $\braket{n,t_{n,h'}|M_{r,w_r}|n,t_{n,h}}\neq 0$ and similarly $\braket{n,t_{n,k'}|M_{r,w_r}|n,t_{n,k}}\neq 0$. Defining the vector $\ell\in \N^{\XX\setminus \{0\}}$ by $\ell(x) = rw_r(x)$ for all $x\neq 0$, this implies immediately that $h\succeq \ell$ and $k\succeq \ell$, while at the same time $t_{n,h'} = t_{n,h} - \frac{r}{n} w_r + \frac{r}{n} e_0$ and $t_{n,k'} = t_{n,k} - \frac{r}{n} w_r + \frac{r}{n} e_0$. These latter two identities can be recast as $h' = h - \ell$ and $k' = k - \ell$, entailing that $h - h' = k - k' = \ell \geq 0$.

We will thus assume that $h - h' = k - k' \eqqcolon \ell \in \N^{\XX\setminus\{0\}}$. Using the tail bound~\eqref{first_tail_bound_hypergeometric} together with the expression~\eqref{blurring_Gammas} and the fact that each $\Gamma_{n,r}$ is a sub-channel, it is not difficult to see that there is a sequence of positive numbers $(v_n)_n$ such that $v_n\tendsn{} 0$ and
\bb
\Bigg\|\, B_{n,\delta}\big(\ketbraa{n,t_{n,h}}{n,t_{n,k}}\big) - \!\!\!\sum_{\substack{r\in \{0,\ldots,n\},\\ |\frac{r}{n} - \frac{\delta}{1+\delta}| \,\leq\, v_n}} H(n+\!\floor{\delta n},\floor{\delta n};\, n,r)\ \Gamma_{\!n,r}\big(\!\ketbraa{n,t_{n,h}}{n,t_{n,k}}\!\big)\, \Bigg\|_1 \tendsn{} 0\, .
\ee
To this end, it suffices to take $u = v_n \sim n^{-1/3}$ in~\eqref{first_tail_bound_hypergeometric}, so that the right-hand side vanishes as $n\to\infty$. 
(Note also that $\big| \frac{\floor{\delta n}}{n+\floor{\delta n}} - \frac{\delta}{1+\delta}\big| \leq \frac{1}{n-1}$.) Cf.~\eqref{blurring_truncated}, where we considered instead an asymptotically constant $u$. Thus,
\bb
&\lim_{n\to\infty} \bra{n, t_{n,h'}} B_{n,\delta}\big(\ketbraa{n,t_{n,h}}{n,t_{n,k}}\big) \ket{n,t_{n,k'}} \\
&\quad = \lim_{n\to\infty} \bra{n, t_{n,h-\ell}} B_{n,\delta}\big(\ketbraa{n,t_{n,h}}{n,t_{n,k}}\big) \ket{n,t_{n,k-\ell}} \\[4pt]
&\quad = \lim_{n\to\infty} \sum_{\substack{r\in \{0,\ldots,n\},\\ |\frac{r}{n} - \frac{\delta}{1+\delta}| \,\leq\, v_n}} H(n+\floor{\delta n},\floor{\delta n};\, n,r) \bra{n, t_{n,h-\ell}} \Gamma_{\!n,r}\big(\!\ketbraa{n,t_{n,h}}{n,t_{n,k}}\!\big)\! \ket{n,t_{n,k-\ell}} \\[4pt]
&\quad = \lim_{n\to\infty} \sum_{\substack{r\in \{0,\ldots,n\},\\ |\frac{r}{n} - \frac{\delta}{1+\delta}| \,\leq\, v_n}} H(n+\floor{\delta n},\floor{\delta n};\, n,r) \frac{\binom{n-r}{n t_{n,h} - rw_r} \binom{n-r}{nt_{n,k}-rw_r} \binom{r}{rw_r}}{\sqrt{\binom{n}{nt_{n,h} - rw_r + re_0}\! \binom{n}{nt_{n,k}- rw_r +r e_0}\!\binom{n}{nt_{n,h}}\!\binom{n}{nt_{n,k}}}}\, , \label{blurring_convergence_proof_eq2}
\ee
where we introduced a type $w_r\in \mathcal{T}_r$ defined by
\bb
rw_r(x) = \left\{ \begin{array}{ll} r - \sum_{x\neq 0} \ell(x) & \text{ if $x=0$,} \\[1ex] \ell(x) & \text{ if $x\neq 0$,} \end{array} \right.
\ee
so that $t_{n,h} - \frac{r}{n} w_r + \frac{r}{n} e_0 = t_{n,h'}$, and similarly $t_{n,k} - \frac{r}{n} w_r + \frac{r}{n} e_0 = t_{n,k'}$. (Note that $w_r$ is a type for sufficiently large $n$, because $r\to\infty$ when $n\to\infty$.)
Now, consider an arbitrary $n$ and some $r\in \{0,\ldots,n\}$ such that 
\bb
\left|\frac{r}{n} - \frac{\delta}{1+\delta}\right| \leq v_n\, .
\label{r_sublinear_range}
\ee
Writing $a_n \sim b_n$ when $a_n/b_n \tendsn{} 1$, for any $p\in \N^{\XX\setminus \{0\}}$ we can write the estimate
\bb
\binom{n}{n t_{n,p}} = \frac{n!}{\left(n-\sum_{x\neq 0} p(x)\right)! \prod_{x\neq 0} p(x)!} \sim \frac{n^{\sum_{x\neq 0} p(x)}}{\prod_{x\neq 0} p(x)!}\, .
\ee
Combining these estimates repeatedly, we obtain
\bb
\frac{\binom{n-r}{nt_{n,h}-rw_r} \binom{n-r}{nt_{n,k}-rw_r} \binom{r}{rw_r}}{\sqrt{\binom{n}{nt_{n,h} - rw_r + re_0} \binom{n}{nt_{n,k}- rw_r +r e_0}\binom{n}{nt_{n,h}}\binom{n}{nt_{n,k}}}} &\sim \frac{\frac{(n-r)^{\sum_{x\neq 0} (h'(x)+k'(x))}}{\prod_{x\neq 0} h'(x)!\, k'(x)!}\, \frac{r^{\sum_{x\neq 0} \ell(x)}}{\prod_{x\neq 0} \ell(x)!}}{\sqrt{\frac{n^{\sum_{x\neq 0} (h'(x)+k'(x))}}{\prod_{x\neq 0} h'(x)!\, k'(x)!}\, \frac{n^{\sum_{x\neq 0} (h(x)+k(x))}}{\prod_{x\neq 0} h(x)!\, k(x)!}}} \\[1ex]
&= \prod_{x\neq 0} \sqrt{\binom{h(x)}{\ell(x)}\binom{k(x)}{\ell(x)}} \left(1-\frac{r}{n}\right)^{h'(x) + k'(x)} \left(\frac{r}{n}\right)^{\ell(x)} \\[1ex]
&\tendsn{} \prod_{x\neq 0} \sqrt{\binom{h(x)}{\ell(x)}\binom{k(x)}{\ell(x)}}\ \frac{\delta^{\ell(x)}}{(1+\delta)^{h(x)+k(x)-\ell(x)}}\, .
\label{blurring_convergence_proof_eq6}
\ee
Using Lemma~\ref{uniform_convergence_lemma} repeatedly, it is not difficult to show that the above convergence happens \emph{uniformly} in $r$, provided that $r$ satisfies~\eqref{r_sublinear_range} and that $h,k,\ell$ are kept fixed. We can therefore safely sum~\eqref{blurring_convergence_proof_eq6} over all $r$ appearing in the sum on the rightmost side of~\eqref{blurring_convergence_proof_eq2}, obtaining precisely~\eqref{blurring_entrywise_convergence} once one remembers that
\bb
\sum_{\substack{r\in \{0,\ldots,n\},\\ |\frac{r}{n} - \frac{\delta}{1+\delta}| \,\leq\, v_n}} H(n+\floor{\delta n},\floor{\delta n};\, n,r) \tendsn{} 1
\ee
due, once again, to the concentration bound~\eqref{first_tail_bound_hypergeometric}. 

To obtain~\eqref{blurring_limit}, consider $h,k,h',k'\in \N^{\XX\setminus\{0\}}$. Then
\bb
\lim_{n\to\infty} \bra{h'} \widetilde{B}_{n,\delta}\big(\ketbraa{h}{k}\big) \ket{k'} &= \lim_{n\to\infty} \bra{h'} U_n B_{n,\delta}\big(U_n^\dag \ketbraa{h}{k} U_n^{\vphantom{\dag}}\big) U_n^{\dag} \ket{k'} \\
&= \lim_{n\to\infty} \bra{n, t_{n,h'}} B_{n,\delta}\big( \ketbraa{n, t_{n,h}}{n, t_{n,k}} \big) \ket{n, t_{n,k'}} \\
&= \!\left\{ \begin{array}{ll} \prod\limits_{x\neq 0} \sqrt{\binom{h(x)}{\ell(x)} \binom{k(x)}{\ell(x)}} \frac{\delta^{\ell(x)}}{(1+\delta)^{(h+k-\ell)(x)}} & \text{ if $h\!-\!h'\!\!=\!k\!-\!k'\!=\ell \geq\! 0$,} \\[4ex] 0 & \text{ otherwise.} \end{array} \right.
\ee
where in the second equality we employed~\eqref{action_U_on_type_t_n_p}. Hence, since $\widetilde{B}_{n,\delta}\big(\ketbraa{h}{k}\big)$ has only a finite expansion in the Fock basis, we can take the trace norm limit inside a finite sum and write
\bb
\lim_{n\to\infty} \widetilde{B}_{n,\delta}(\ketbraa{h}{k}) &= \lim_{n\to\infty} \sum_{\ell\in \N^{\XX\setminus\{0\}}} \bra{h\!-\!\ell} \widetilde{B}_{n,\delta}(\ketbraa{h}{k}) \ket{k\!-\!\ell} \ketbraa{h\!-\!\ell}{k\!-\!\ell} \\
&= \sum_{\ell\in \N^{\XX\setminus\{0\}}} \left(\lim_{n\to\infty} \bra{h\!-\!\ell} \widetilde{B}_{n,\delta}(\ketbraa{h}{k}) \ket{k\!-\!\ell} \right) \ketbraa{h\!-\!\ell}{k\!-\!\ell} \\
&= \hspace{0pt}\sum_{\ell\in \N^{\XX\setminus\! \{0\}}} \hspace{-2pt}\left( \prod_{x\neq 0} \sqrt{\binom{h(x)}{\ell(x)} \binom{k(x)}{\ell(x)}}\, \frac{\delta^{\ell(x)}}{(1+\delta)^{h(x)+k(x)-\ell(x)}} \right) \ketbraa{h\!-\!\ell}{k\!-\!\ell}\, ,
\ee
as claimed. This concludes the proof of~\eqref{blurring_limit} and of the lemma.
\end{proof}

Before we continue, we observe that the expression on the right-hand side of~\eqref{blurring_limit} is tantalisingly similar to the one appearing in the equation~\eqref{pure_loss_action_on_Fock} that expresses the action of a pure loss channel $\EE_\lambda : \T(\HH_1)\to \T(\HH_1)$ on Fock states. In fact, the limit of the lifted blurring map is rather easily connected to a tensor product of pure loss channels, each acting on one mode of the $(d-1)$-mode system with Hilbert space $\HH_{d-1} = \big(\ell^2(\N)\big)^{\otimes (d-1)}$. This connection is our second fundamental technical insight, and it will prove crucial in order to establish the quantum blurring lemma. To make it precise, we need to introduce another family of maps $\D_\mu:\TT(\HH_1)\to \TT(\HH_1)$, where $\mu\in [0,1]$. Set
\bb
\D_\mu(X) \coloneqq \mu^{a^\dag a} X \mu^{a^\dag a}\, ,
\label{D_mu_Kraus}
\ee
where $a^\dag a$ is the number operator, so that
\bb
\D_\mu (\ketbraa{h}{k}) = \mu^{h+k} \ketbraa{h}{k}
\label{D_mu}
\ee
for all $h,k\in \N$. We are now ready to prove the following.

\begin{cor} \label{lifted_blurring_strong_convergence_cor}
For some $\delta\in (0,\frac12]$, let $\widetilde{B}_{n,\delta}$ be the lifted blurring map defined by~\eqref{lifted_blurring} and~\eqref{blurring}. Then
\bb
\widetilde{B}_{n,\delta} &\tendsn{s} \big(\EE_{\lambda(\delta)} \circ \D_{\mu(\delta)}\big)^{\otimes (d-1)}
\label{lifted_blurring_strong_convergence}
\ee
with respect to the strong convergence topology (defined in Section~\ref{subsubsec_operator_topologies}), where
\bb
\lambda(\delta) &\coloneqq \frac{1}{1+\delta(1+\delta)}\, ,\quad \mu(\delta) \coloneqq \frac{\sqrt{1+\delta(1+\delta)}}{1+\delta}\, ,
\label{lambda_and_mu}
\ee
and each tensor factor in~\eqref{lifted_blurring_strong_convergence} acts on one copy of $\ell^2(\N)$ inside $\HH_{d-1} = \big(\ell^2(\N)\big)^{\otimes (d-1)}$.
\end{cor}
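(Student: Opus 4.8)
The plan is to prove the corollary in two stages: first identify the limiting map by matching it against the entrywise limit already computed in Lemma~\ref{blurring_convergence_lemma}, and then upgrade convergence on Fock matrix units to strong convergence on arbitrary states by a density-plus-uniform-boundedness argument.

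For the first stage, I would compute the single-mode action of $\EE_\lambda \circ \D_\mu$ on a Fock matrix unit directly from~\eqref{pure_loss_action_on_Fock} and~\eqref{D_mu}. Applying $\D_\mu$ first multiplies $\ketbraa{h}{k}$ by $\mu^{h+k}$, after which $\EE_\lambda$ produces $\mu^{h+k}\lambda^{(h+k)/2}\sum_{\ell}\sqrt{\binom{h}{\ell}\binom{k}{\ell}}\,(\tfrac1\lambda-1)^\ell\,\ketbraa{h-\ell}{k-\ell}$. Substituting the values~\eqref{lambda_and_mu}, one verifies the two elementary identities $\mu(\delta)\sqrt{\lambda(\delta)}=\tfrac{1}{1+\delta}$ and $\tfrac{1}{\lambda(\delta)}-1=\delta(1+\delta)$, so that the coefficient of $\ketbraa{h-\ell}{k-\ell}$ collapses to $\delta^\ell/(1+\delta)^{h+k-\ell}$. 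Taking the tensor product over the $d-1$ modes then reproduces exactly the operator on the right-hand side of~\eqref{blurring_limit}. Hence Lemma~\ref{blurring_convergence_lemma} reads precisely as $\widetilde{B}_{n,\delta}(\ketbraa{h}{k})\tendsn{}\Lambda(\ketbraa{h}{k})$ in trace norm for every $h,k\in\N^{\XX\setminus\{0\}}$, where $\Lambda\coloneqq(\EE_{\lambda(\delta)}\circ\D_{\mu(\delta)})^{\otimes(d-1)}$ is the claimed limit.

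For the second stage, I would extend this by linearity to convergence on every finite linear combination of Fock matrix units, and then invoke uniform boundedness. Both families are contractions in the trace-norm-induced operator norm: $\widetilde{B}_{n,\delta}$ is a sub-channel, being a composition of tensoring with a fixed state, symmetrisation, partial trace, and compression by $\Pi_n$ (conjugation by the isometry $U_n$ preserves the trace norm), while $\Lambda$ is a sub-channel because $\EE_\lambda$ is a channel and $\D_\mu$ is completely positive and trace non-increasing (as $\Tr[\mu^{a^\dag a}X\mu^{a^\dag a}]=\Tr[\mu^{2a^\dag a}X]\leq\Tr X$ for $X\geq 0$, using $\mu(\delta)\leq 1$). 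Given a state $\rho$ and $\e>0$, I would choose a finite-rank operator $\rho_\e$ (a finite Fock-basis truncation) with $\|\rho-\rho_\e\|_1\leq\e$, and bound $\|\widetilde{B}_{n,\delta}(\rho)-\Lambda(\rho)\|_1\leq\|\widetilde{B}_{n,\delta}(\rho-\rho_\e)\|_1+\|\widetilde{B}_{n,\delta}(\rho_\e)-\Lambda(\rho_\e)\|_1+\|\Lambda(\rho_\e-\rho)\|_1\leq 2\e+\|\widetilde{B}_{n,\delta}(\rho_\e)-\Lambda(\rho_\e)\|_1$. Taking $\limsup_{n}$ annihilates the middle term by the first stage, and then letting $\e\to 0$ gives $\|\widetilde{B}_{n,\delta}(\rho)-\Lambda(\rho)\|_1\to 0$, which is exactly strong convergence in the sense of~\eqref{strong_convergence_channels}.

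The computational first stage is purely mechanical bookkeeping, so I expect the only genuine subtlety to lie in the second stage, namely in ensuring that the $\e/3$-type approximation is legitimate. The crucial ingredient there is the uniform trace-norm contractivity of the whole family $\{\widetilde{B}_{n,\delta}\}_n$ together with the limit $\Lambda$; without it, pointwise convergence on the dense set of matrix units would not propagate to arbitrary states. I would therefore take care to confirm that $\Lambda$ is genuinely trace non-increasing, which reduces to checking $\mu(\delta)\leq 1$ for all $\delta\in(0,\tfrac12]$, an immediate consequence of $1+\delta+\delta^2\leq(1+\delta)^2$.
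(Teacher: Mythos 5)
Your proposal is correct and follows essentially the same route as the paper: the same algebraic verification that $\mu(\delta)\sqrt{\lambda(\delta)}=\tfrac{1}{1+\delta}$ and $\tfrac{1}{\lambda(\delta)}-1=\delta(1+\delta)$ identifies the tensor product of $\EE_{\lambda(\delta)}\circ\D_{\mu(\delta)}$ with the limit in~\eqref{blurring_limit}, and the same density-plus-uniform-contractivity argument (which the paper states in one sentence and you spell out as an $\e/3$ estimate, including the check that $\mu(\delta)\leq 1$ so that $\D_{\mu(\delta)}$ is trace non-increasing) upgrades convergence on Fock matrix units to strong convergence.
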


\begin{proof}
For all $h,k\in \N^{\XX\setminus\{0\}}$, we have
\bb
\big(\EE_{\lambda(\delta)} \circ \D_{\mu(\delta)} \big)^{\otimes (d-1)} (\ketbraa{h}{k}) = \bigotimes_{x\neq 0} \big(\EE_{\lambda(\delta)} \circ \D_{\mu(\delta)} \big)(\ketbraa{h(x)}{k(x)})\, .
\label{lifted_blurring_strong_convergence_proof_eq1}
\ee
Now, for each fixed $x\neq 0$, it holds that
\bb
&\big(\EE_{\lambda(\delta)} \circ \D_{\mu(\delta)} \big)(\ketbraa{h(x)}{k(x)}) \\
&\quad \eqt{(i)} \mu(\delta)^{h(x)+k(x)} \EE_{\lambda(\delta)} (\ketbraa{h(x)}{k(x)}) \\
&\quad \eqt{(ii)} \left(\mu(\delta) \sqrt{\lambda(\delta)}\right)^{h(x)+k(x)} \sum_{\ell=0}^{\infty} \sqrt{\binom{h(x)}{\ell} \binom{k(x)}{\ell}} \left(\frac{1}{\lambda(\delta)} - 1\right)^\ell \ketbraa{h(x)-\ell}{k(x)-\ell} \\
&\quad \eqt{(iii)} \frac{1}{(1+\delta)^{h(x)+k(x)}} \sum_{\ell=0}^{\infty} \sqrt{\binom{h(x)}{\ell}\binom{k(x)}{\ell}}\ \big(\delta(1+\delta)\big)^{\ell(x)} \ketbraa{h(x)-\ell}{k(x)-\ell}
\label{lifted_blurring_strong_convergence_proof_eq2}
\ee
Here, (i)~follows from~\eqref{D_mu}, in~(ii) we employed~\eqref{pure_loss_action_on_Fock}, and (iii)~descends from~\eqref{lambda_and_mu}. By plugging~\eqref{lifted_blurring_strong_convergence_proof_eq2} into~\eqref{lifted_blurring_strong_convergence_proof_eq1} and comparing with~\eqref{blurring_limit}, one sees that
\bb
\lim_{n\to\infty} \widetilde{B}_{n,\delta}(\ketbraa{h}{k}) = \big(\EE_{\lambda(\delta)} \circ \D_{\mu(\delta)}\big)^{\otimes (d-1)}(\ketbraa{h}{k})
\ee
with respect to the trace norm topology. This also implies that
\bb
\lim_{n\to\infty} \widetilde{B}_{n,\delta}(X) = \big(\EE_{\lambda(\delta)} \circ \D_{\mu(\delta)}\big)^{\otimes (d-1)}(X)
\label{lifted_blurring_strong_convergence_proof_eq4}
\ee
holds for any operator $X$ with a finite expansion in Fock basis. Since such operators are trace norm dense in the space of all trace class operators, and both $\widetilde{B}_{n,\delta}$ and $\big(\EE_{\lambda(\delta)} \circ \D_{\mu(\delta)}\big)^{\otimes (d-1)}$ are uniformly (in $n$) continuous with respect to the trace norm topology, as they are all sub-channels and hence contractive with respect to the trace norm, we conclude that~\eqref{lifted_blurring_strong_convergence_proof_eq4} actually holds for all trace class operators $X$. This concludes the proof.
\end{proof}

The choice of the strong convergence topology in Corollary~\ref{lifted_blurring_strong_convergence_cor} is necessary, at least in our proof. This is particularly clear if one looks at~\eqref{lifted_blurring_strong_convergence_proof_eq4}, which we established for all operators $X$ with a finite expansion in Fock basis. These operators contain a constant number of excitations when $n\to\infty$; since the excitations are so diluted in the limit, the blurring map scatters them \emph{independently}, and its action can therefore be approximated by that of a (suitably rescaled) pure loss channel. Mathematically, this is reflected in the convergence of the hypergeometric distribution to a binomial distribution, implicit in~\eqref{blurring_limit}.

\subsection{On the support of the output of certain bosonic channels} \label{subsec_verifying_support_condition}

\begin{lemma} \label{vacuum_in_support_lemma}
For $\lambda,\mu\in (0,1)$, let $\EE_\lambda$ and $\D_\mu$ be defined by~\eqref{pure_loss_Kraus} and~\eqref{D_mu}, respectively. Let $m\in \N^+$ be a positive integer. For all $\Delta \in (0,\frac12]$, it holds that
\bb
\ket{0}^{\otimes m} \in \supp \left( \int_0^\Delta \frac{\dd\delta}{\Delta}\ \big(\EE_{\lambda(\delta)}\circ D_{\mu(\delta)}\big)^{\otimes m}(\rho)\right) ,
\label{vacuum_in_support}
\ee
for al $\rho\in \D(\HH_m)$, where $\lambda(\delta)$ and $\mu(\delta)$ are given by~\eqref{lambda_and_mu}.
\end{lemma}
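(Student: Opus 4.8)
The plan is to prove the equivalent statement that $\ket{0}^{\otimes m}$ is orthogonal to the kernel of the positive trace-class operator $\tau_\rho \coloneqq \int_0^\Delta \frac{\dd\delta}{\Delta}\,\big(\EE_{\lambda(\delta)}\circ\D_{\mu(\delta)}\big)^{\otimes m}(\rho)$, which is precisely~\eqref{vacuum_in_support}. Since $\omega\mapsto\tau_\omega$ is linear and positivity-preserving, I would first reduce to a pure input: writing the spectral decomposition $\rho=\sum_i p_i\ketbra{\phi_i}$ and fixing any $i_0$ with $p_{i_0}>0$, the inequality $\tau_\rho\geq p_{i_0}\,\tau_{\ketbra{\phi_{i_0}}}\geq 0$ yields $\ker\tau_\rho\subseteq\ker\tau_{\ketbra{\phi_{i_0}}}$, hence $\supp\tau_{\ketbra{\phi_{i_0}}}\subseteq\supp\tau_\rho$. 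It therefore suffices to treat $\rho=\ketbra{\phi}$ with $\ket\phi\neq 0$ arbitrary.

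Next I would pass to a Kraus description. Reading off~\eqref{pure_loss_Kraus} and~\eqref{D_mu_Kraus}, and using~\eqref{lambda_and_mu} in the form $\tfrac{1}{\lambda(\delta)}-1=\delta(1+\delta)$ together with $r(\delta)\coloneqq\sqrt{\lambda(\delta)}\,\mu(\delta)=\tfrac{1}{1+\delta}$, the single-mode channel $\EE_{\lambda(\delta)}\circ\D_{\mu(\delta)}$ has Kraus operators $K_n(\delta)=\sqrt{\tfrac{1}{n!}(\delta+\delta^2)^n}\,a^n\,r(\delta)^{a^\dag a}$, so that $K_n(\delta)\ket{h}=\sqrt{\binom{h}{n}(\delta+\delta^2)^n}\,r(\delta)^{h}\ket{h-n}$; the $m$-fold channel then has Kraus operators $K_{\vec n}(\delta)=\bigotimes_{j=1}^m K_{n_j}(\delta)$, indexed by $\vec n\in\N^m$. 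For the pure input we have $\braket{\psi|\tau_\rho|\psi}=\int_0^\Delta\frac{\dd\delta}{\Delta}\sum_{\vec n\in\N^m}\big|\braket{\psi|K_{\vec n}(\delta)|\phi}\big|^2$, a nonnegative integrand whose integral vanishes exactly when, for every $\vec n$ and almost every $\delta\in(0,\Delta]$, one has $\braket{\psi|K_{\vec n}(\delta)|\phi}=0$. Thus $\ket\psi\in\ker\tau_\rho$ forces all these scalars to vanish.

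The crux is a rigidity statement obtained by the substitution $x\coloneqq r(\delta)=\tfrac{1}{1+\delta}$, which maps $(0,\Delta]$ onto $[\tfrac{1}{1+\Delta},1)$ and satisfies $\delta+\delta^2=\tfrac{1-x}{x^2}$. A direct computation gives
\bb
\braket{\psi|K_{\vec n}(\delta)|\phi} &= \frac{(1-x)^{|\vec n|/2}}{x^{|\vec n|}}\,G_{\vec n}(x)\,, \\
G_{\vec n}(x) &\coloneqq \sum_{\vec h\succeq \vec n} \braket{\psi|\vec h-\vec n}\,\braket{\vec h|\phi}\,\Big(\prod_{j=1}^m \binom{h_j}{n_j}^{1/2}\Big)\, x^{|\vec h|}\,,
\ee
where $\ket{\vec h}$ is the Fock vector with occupations $\vec h$ and $|\vec h|=\sum_j h_j$. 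The prefactor is nonzero on $(0,1)$, so $G_{\vec n}(x)=0$ for almost every $x\in[\tfrac{1}{1+\Delta},1)$. The coefficient of $x^N$ in $G_{\vec n}$ is bounded in modulus by $N^{|\vec n|/2}\,\|\psi\|\,\|\phi\|$---using $\binom{h_j}{n_j}\le h_j^{n_j}$ and Cauchy--Schwarz over the finitely many $\vec h$ with $|\vec h|=N$---so $G_{\vec n}$ is real-analytic on $(-1,1)$ with radius of convergence at least $1$. Vanishing on a set of positive measure then forces every coefficient to vanish. The lowest-order coefficient, $N=|\vec n|$, receives a contribution only from $\vec h=\vec n$ (the unique index with $\vec h\succeq\vec n$ and $|\vec h|=|\vec n|$), so it equals $\braket{\psi|\vec 0}\,\braket{\vec n|\phi}$, where $\ket{\vec 0}=\ket{0}^{\otimes m}$. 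Hence $\braket{\psi|\vec 0}\,\braket{\vec n|\phi}=0$ for all $\vec n$; since $\ket\phi\neq 0$ has some nonzero Fock coefficient, $\braket{\psi|\vec 0}=0$, i.e.\ $\ket{0}^{\otimes m}\perp\ket\psi$. As $\ket\psi\in\ker\tau_\rho$ was arbitrary, this gives $\ket{0}^{\otimes m}\perp\ker\tau_\rho$, which is the claim.

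The step I expect to be the main obstacle is precisely this propagation from an integral identity on the short interval $(0,\Delta]$ to the vanishing of \emph{every} Fock coefficient: it hinges on recognising, after the change of variables, that each matrix element is a genuine analytic function of $x$ on the whole disc of radius $1$, which is what lets a constraint on a small interval pin down all coefficients. Certifying this analyticity---equivalently, the radius-of-convergence bound---through the crude polynomial estimate $N^{|\vec n|/2}\,\|\psi\|\,\|\phi\|$ on the coefficients of $G_{\vec n}$ is the only genuinely delicate point; the remaining manipulations with the explicit Kraus operators, the change of variables $x=r(\delta)$, and the reduction to pure inputs are routine.
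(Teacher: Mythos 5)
Your proposal is correct and follows essentially the same route as the paper's proof: reduce to a pure input, expand the matrix elements of $\big(\EE_{\lambda(\delta)}\circ\D_{\mu(\delta)}\big)^{\otimes m}$ via the Kraus operators, change variables to $z=1/(1+\delta)$, certify analyticity on the unit disk through the polynomial coefficient bound $N^{|\vec n|/2}\|\psi\|\|\phi\|$, invoke the identity theorem for a function vanishing on an interval, and read off the lowest-order coefficient to isolate $\braket{\psi|\vec 0}\braket{\vec n|\phi}$. The only differences are cosmetic (you argue directly on the kernel rather than by contradiction, and you use real-analyticity plus positive-measure vanishing where the paper uses holomorphy plus an interval of zeros).
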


\begin{rem} \label{Bochner_measurability_rem}
The integral in~\eqref{vacuum_in_support} should be thought of as a Bochner integral in the Banach space of all trace class operators on $m$ bosonic modes. Note that since the functions $\lambda \mapsto \EE_\lambda^{\otimes m}(\rho)$ and $\mu\mapsto \D_\mu(\rho)$ are continuous with respect to the trace norm topology,\footnote{The first claim can be proved rather simply by writing down the action of the pure loss channel on quantum characteristic functions~\cite[Eq.~(4.7)]{Ivan2011} and then using the continuity of quantum characteristic functions~\cite[Theorem~5.4.1]{HOLEVO} together with~\cite[Lemma~4]{G-dilatable}. The second is even easier, and follows by simply expanding the input state in the Fock basis.} they are also Bochner measurable. Then, from the identity $\left\|\big(\EE_\lambda\circ \D_\mu\big)^{\otimes m}(\rho) \right\|_1 \leq 1$ it follows that the above function is Bochner integrable.
\end{rem}

\begin{proof}[Proof of Lemma~\ref{vacuum_in_support_lemma}]
Since any mixed state is lower bounded by a positive multiple of a pure state in the positive semi-definite order, and $A\geq B\geq 0$ implies that $\supp(B) \subseteq \supp(A)$, we can assume without loss of generality that $\rho = \ketbra{\psi} \eqqcolon \psi$ be pure. Proceeding by contradiction, assume that~\eqref{vacuum_in_support} is not satisfied, so that we can find a vector $\ket{\phi} \in \HH_m$ that is orthogonal to $\supp \left( \int_0^\Delta \frac{\dd\delta}{\Delta}\ \big(\EE_{\lambda(\delta)}\circ D_{\mu(\delta)}\big)^{\otimes m}(\rho) \right)$ but not to $\ket{0}^{\otimes n}$. We thus have that
\bb
0 &= \bra{\phi} \left( \int_0^\Delta \frac{\dd\delta}{\Delta}\ \big(\EE_{\lambda(\delta)}\circ D_{\mu(\delta)}\big)^{\otimes m}(\psi) \right) \ket{\phi} \\
&= \int_0^\Delta \frac{\dd\delta}{\Delta}\ \bra{\phi}\big(\EE_{\lambda(\delta)}\circ D_{\mu(\delta)}\big)^{\otimes m}(\psi) \ket{\phi}\, ,
\ee
implying, since the integrand on the right-hand side is non-negative, that
\bb
\bra{\phi}\big(\EE_{\lambda(\delta)}\circ D_{\mu(\delta)}\big)^{\otimes m}(\psi) \ket{\phi} = 0\qquad \forall\ \delta\in (0,\Delta)\, ,
\ee
because the function $\delta \mapsto \bra{\phi}\big(\EE_{\lambda(\delta)}\circ D_{\mu(\delta)}\big)^{\otimes m}(\psi) \ket{\phi}$ is continuous due to~\eqref{lambda_and_mu} and to Remark~\ref{Bochner_measurability_rem}. Using the Kraus representations in~\eqref{pure_loss_Kraus} and~\eqref{D_mu_Kraus} and substituting~\eqref{lambda_and_mu}, we deduce that
\bb
0 &= \bra{\phi} \bigotimes_{i=1}^m \left( a_i^{n_i} \big(\sqrt{\lambda(\delta)}\, \mu(\delta)\big)^{a_i^\dag a_i} \right) \ket{\psi} \\
&= \bra{\phi} \bigotimes_{i=1}^m \left( a_i^{n_i} \left(\tfrac{1}{1+\delta} \right)^{a_i^\dag a_i} \right) \ket{\psi}
\ee
for all $n \in \N^m$, and all $\delta\in (0,\Delta)$ (here, $n = (n_1,\ldots, n_m)^\intercal$). Re-parametrising $z\coloneqq 1/(1+\delta)$, this can be recast as
\bb
\bra{\phi} \bigotimes_{i=1}^m \left( a_i^{n_i} z^{a_i^\dag a_i} \right) \ket{\psi} = 0\qquad \forall\ n \in \N^m\!,\quad \forall\ z\in \left(\tfrac{1}{1+\Delta}, 1\right) .
\label{zeros_F_n}
\ee
Inserting the decomposition $z^{a^\dag a} = \sum_{k=0}^\infty z^k \ketbra{k}$ and using the identity $a^n \ket{k} = \sqrt{\frac{k!}{(k-n)!}}\, \ket{k-n}$, where the right-hand side is understood to be zero when $n > k$, we obtain that
\bb
\bra{\phi} \bigotimes_{i=1}^m \left( a_i^{n_i} z^{a_i^\dag a_i} \right) \ket{\psi} &= \sum_{k\in \N^m\!,\ k\succeq n} z^{\,k_1+\ldots+k_m} \sqrt{\prod_{i=1}^m \frac{k_i!}{(k_i - n_i)!}}\ \phi_{k-n}^* \psi_{k}^{\vphantom{*}} \\
&= \sum_{N= n_1+\ldots + n_m}^\infty z^{\,N} \sum_{\substack{k\in \N^m:\ k\succeq n, \\ k_1 +\ldots + k_m = N}} \sqrt{\prod_{i=1}^m \frac{k_i!}{(k_i - n_i)!}}\ \phi_{k-n}^* \psi_k^{\vphantom{*}} \\
&\eqqcolon F_n(z)\, ,
\label{F_n}
\ee
where $\phi_k \coloneqq \braket{k|\phi}$, similarly for $\ket{\psi}$, the relation $\succeq$ indicates entry-wise comparison, and in the last line we defined a family of holomorphic functions $F_n$ on the open unit disk $\{z\in \C:\, |z| <1\}$. Note that
\bb
\left| \sum_{\substack{k\in \N^m:\ k\succeq n, \\ k_1 +\ldots + k_m = N}} \sqrt{\prod_{i=1}^m \frac{k_i!}{(k_i - n_i)!}}\ \phi_{k-n}^* \psi_k^{\vphantom{*}} \right|\ &\leq \sum_{\substack{k\in \N^m:\ k\succeq n, \\ k_1 +\ldots + k_m = N}} \sqrt{\prod_{i=1}^m \frac{k_i!}{(k_i - n_i)!}}\ |\phi_{k-n}| |\psi_{k}| \\
&\leq N^{(n_1+\ldots+n_m)/2} \sum_{\substack{k\in \N^m:\ k\succeq n, \\ k_1 +\ldots + k_m = N}} |\phi_{k-n}| |\psi_k| \\
&\leq N^{(n_1+\ldots+n_m)/2} \sqrt{\sum_{k\in \N^m} |\phi_k|^2} \sqrt{\sum_{k\in \N^m} |\psi_{n+k}|^2} \\
&\leq N^{(n_1+\ldots+n_m)/2} \\
&= \mathrm{poly}_n\big(\sqrt{N}\big)\, ,
\ee
so that the power series that defines $F_n$ does indeed converge on the open unit disk.

From~\eqref{zeros_F_n} we see that the set of zeros of $F_n$ includes the whole interval $\left(\tfrac{1}{1+\Delta}, 1\right)$. Since $F_n$ is holomorphic on the open unit disk, the only possibility is that $F_n = 0$ identically. In particular, the first term of the power series that defines $F_n$ in~\eqref{F_n}, corresponding to $N=n_1 + \ldots + n_m$, must vanish, so that
\bb
\phi_0^* \psi_n^{\vphantom{*}} = 0 \qquad \forall\ n\in \N^m .
\ee
Since by assumption $\phi_0\neq 0$, it must be that $\psi_n = 0$ for all $n\in \N^m$, implying that $\ket{\psi}$ is the zero vector. We have reached a contradiction, and the proof is complete. 
\end{proof}

The following simple result is the last ingredient we need in order to prove the quantum blurring lemma (Lemma~\ref{quantum_blurring_lemma}).

\begin{lemma} \label{support_lemma}
Let $A\in \T(\HH)$, $A\geq 0$ be a positive semi-definite trace class operator on a separable Hilbert space $\HH$.  For some $\ket{\psi}\in \HH$, the following conditions are equivalent:
\begin{enumerate}[(a)]
\item $\ket{\psi}\in \supp(A)$; and
\item $\lim_{M\to\infty} \Tr\left(\ketbra{\psi} - MA\right)_+ = 0$.
\end{enumerate}
\end{lemma}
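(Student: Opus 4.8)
The plan is to prove the two implications separately, relying on the two variational characterisations of $\Tr X_+$ from Lemma~\ref{variational_program_trace_X_plus_lemma}. Since $A\in\T(\HH)$ is positive it is compact, so I would fix its spectral decomposition $A=\sum_i a_i\ketbra{e_i}$ with $a_i>0$ and $\{\ket{e_i}\}$ orthonormal; then $\supp(A)=\overline{\Span\{\ket{e_i}\}}$, and the orthogonal projector $P_0$ onto $\ker(A)=\supp(A)^\perp$ obeys $AP_0=0$. Note also that $M\mapsto\Tr(\ketbra{\psi}-MA)_+$ is non-increasing by monotonicity (Lemma~\ref{variational_program_trace_X_plus_lemma}(a)), so the limit in (b) always exists and equals the infimum over $M$.

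The implication (b)$\Rightarrow$(a) I would prove by contraposition. If $\ket{\psi}\notin\supp(A)$ then $\|P_0\ket{\psi}\|^2>0$, and feeding the feasible choice $Q=P_0$ into the max-representation~\eqref{second_variational_program_trace_X_plus} gives $\Tr(\ketbra{\psi}-MA)_+\ge\bra{\psi}P_0\ket{\psi}-M\Tr(P_0A)=\|P_0\ket{\psi}\|^2>0$ for every $M$. Hence the limit cannot vanish, which is the contrapositive of (b).

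The real content is (a)$\Rightarrow$(b). In finite dimensions it is immediate: on its support $A\ge a_{\min}P$, so $MA\ge\ketbra{\psi}$ once $M\ge1/a_{\min}$ and the positive part is identically zero. The hard part is that in infinite dimensions the nonzero eigenvalues may accumulate at $0$, so no uniform lower bound $a_{\min}>0$ exists and $\ket{\psi}$ may be spread over infinitely many eigenvectors with arbitrarily small eigenvalues; this is the main obstacle, and I would circumvent it by spectral truncation. Fixing $\epsilon>0$, let $P_\tau$ be the spectral projector of $A$ onto eigenvalues $\ge\tau$, which is finite-rank by compactness; since $P_\tau\to P$ strongly as $\tau\downarrow0$ and $\ket{\psi}\in\supp(A)$, the truncation $\ket{\psi_\tau}:=P_\tau\ket{\psi}$ converges to $\ket{\psi}$ in norm. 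Splitting $\ketbra{\psi}-MA=(\ketbra{\psi}-\ketbra{\psi_\tau})+(\ketbra{\psi_\tau}-MA)$ and using subadditivity (Lemma~\ref{variational_program_trace_X_plus_lemma}(b)) together with the elementary bound $\Tr(\cdot)_+\le\|\cdot\|_1$, the first term is controlled by $\|\ketbra{\psi}-\ketbra{\psi_\tau}\|_1\le(2\|\psi\|+\|\ket{\psi}-\ket{\psi_\tau}\|)\,\|\ket{\psi}-\ket{\psi_\tau}\|$, which I can make $\le\epsilon$ by choosing $\tau$ small.

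For the second term I would exploit that $A$ commutes with $P_\tau$ and $A-\tau P_\tau\ge0$, so $MA\ge M\tau P_\tau$, while $\ketbra{\psi_\tau}\le\|\psi_\tau\|^2P_\tau$; therefore $\ketbra{\psi_\tau}-MA\le(\|\psi_\tau\|^2-M\tau)P_\tau\le0$ for all $M\ge\|\psi\|^2/\tau$, and monotonicity (Lemma~\ref{variational_program_trace_X_plus_lemma}(a)) forces its positive-part trace to be $0$. Combining the two estimates gives $\Tr(\ketbra{\psi}-MA)_+\le\epsilon$ for all sufficiently large $M$, hence $\limsup_{M\to\infty}\Tr(\ketbra{\psi}-MA)_+\le\epsilon$; since $\epsilon>0$ was arbitrary and the quantity is non-negative, the limit is $0$, establishing (b).
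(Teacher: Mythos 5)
Your proposal is correct and follows essentially the same route as the paper: for (b)$\Rightarrow$(a) you test against a projector supported on $\ker(A)$ via~\eqref{second_variational_program_trace_X_plus} (the paper uses the rank-one $Q=\ketbra{\phi}$ with $\ket{\phi}\perp\supp(A)$, $\braket{\psi|\phi}\neq 0$), and for (a)$\Rightarrow$(b) you truncate the spectral expansion to finitely many eigenvectors so that the relevant eigenvalues are bounded below by $\tau>0$, exactly as the paper does with its finite-expansion vectors $\ket{\psi_n}$ and $A\geq(\min_{i\in I_n}a_i)\ketbra{\psi_n}$. The only cosmetic difference is that you control the truncation error by subadditivity and a trace-norm bound, while the paper uses monotonicity together with the pure-state trace-distance formula~\eqref{trace_distance_pure_states}; both yield the same conclusion.
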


\begin{proof}
Clearly~(b) implies~(a), because if $\ket{\psi} \notin \supp(A)$ then we can find some $\ket{\phi}\in \HH$ with $\braket{\phi|A|\phi} = 0$ and $\braket{\psi|\phi}\neq 0$. But then
\bb
\Tr\left(\ketbra{\psi} - MA\right)_+ \geq \bra{\phi}\left(\ketbra{\psi} - MA\right)\ket{\phi} = |\!\braket{\psi|\phi}\!|^2 > 0
\ee
for all $M$, where the first inequality follows by setting $X = \ketbra{\psi} - MA$ and $Q = \ketbra{\phi}$ in~\eqref{second_variational_program_trace_X_plus}.

We will now show that~(a) implies~(b). Let $A = \sum_i a_i \ketbra{i}$ be a spectral decomposition of $A$. Up to restricting the range of $i$, which might be finite or countably infinite, we can assume that $a_i > 0$ for all $i$, so that $\supp(A) = \overline{\Span}\{\ket{i}\}_i$, where the bar denotes (norm) closure. By assumption, $\ket{\psi}$ has an expansion of the form $\ket{\psi} = \sum_i \psi_i \ket{i}$. We can thus find a sequence of (normalised) vectors $\ket{\psi_n} \propto \sum_{i\in I_n} \psi_i \ket{i}$, each of which with a finite expansion (that is, $|I_n|<\infty$) such that $\lim_{n\to\infty} \braket{\psi_n|\psi} = 1$. Since 
\bb
A \geq \sum_{i\in I_n} a_i \ketbra{i} \geq \big(\min\nolimits_{i\in I_n} a_i\big)\, P_n \geq \big(\min\nolimits_{i\in I_n} a_i\big) \ketbra{\psi_n}\, ,
\ee
where $P_n \coloneqq \sum_{i\in I_n} \ketbra{i}$, for all $M \geq \big(\min_{i\in I_n} a_i\big)^{-1}$ we have
\bb
\Tr\left(\ketbra{\psi} - MA\right)_+ &\leq \Tr\left(\ketbra{\psi} - M \big(\min\nolimits_{i\in I_n} a_i\big) \ketbra{\psi_n} \right)_+ \\
&\leq \Tr\left(\ketbra{\psi} - \ketbra{\psi_n} \right)_+ \\
&= \frac12 \left\|\ketbra{\psi} - \ketbra{\psi_n} \right\|_1 \\
&= \sqrt{1 - |\!\braket{\psi_n|\psi}\!|^2}\, ,
\ee
where in the first two inequalities we have used Lemma~\ref{variational_program_trace_X_plus_lemma}(a), and in the last equality~\eqref{trace_distance_pure_states}. Taking the limit $M\to\infty$ first and $n\to\infty$ second proves the claim.
\end{proof}

\subsection{Proof of the quantum blurring lemma} \label{subsec_proof_quantum_blurring_lemma}

This section is devoted to a full proof of the quantum blurring lemma (Lemma~\ref{quantum_blurring_lemma}). This relies heavily on the results established throughout Sections~\ref{subsec_alternative_expressions_blurring}--\ref{subsec_verifying_support_condition}. We will also need the simple lemma below, which builds on a result originally found by Renner~\cite[Lemma~4.2.2]{RennerPhD}.

\begin{lemma} \label{perm_inv_purifications_lemma}
Let $\omega_n = \omega_n^{A^n} = \mathcal{S}_n\big(\omega_n^{A^n}\big)$ and $\tau_n = \tau_n^{A^n} = \mathcal{S}_n\big(\tau_n^{A^n}\big)$ be two permutationally invariant states on a finite-dimensional $n$-copy quantum system $A^n$ (see~\eqref{symmetrisation} for the definition of the symmetrisation operator $\mathcal{S}_n$). Then there exist a Hilbert space $\HH_E \simeq \HH_A$ and two permutationally invariant purifications
\bb
\ket{\psi_n}_{A^n E^n},\, \ket{\phi_n}_{A^n E^n} \in \mathrm{Sym}^n\big(\HH_{AE}\big) 
\ee
of $\omega_n$ and $\tau_n$, respectively, with the property that
\bb
\R \ni \braket{\psi_n|\phi_n} \geq 1 - \frac12 \left\|\omega_n - \tau_n\right\|_1\, .
\ee
Furthermore, if one of the two states is i.i.d., that is, for example, $\omega_n = \omega_1^{\otimes n}$, then the same is true of the corresponding purification, that is, $\ket{\psi_n}_{A^n E^n} = \ket{\psi}_{AE}^{\otimes n}$.
\end{lemma}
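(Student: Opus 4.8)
The plan is to build both purifications by the same \emph{canonical (vectorised) purification} construction, which automatically inherits the permutation symmetry of the state it purifies and whose overlap is governed \emph{directly} by the Holevo inequality in~\eqref{Holevo_Fuchs_van_de_Graaf}. Concretely, I would take $\HH_E \simeq \HH_A$, fix an orthonormal product basis $\{\ket{x^n}\}$ of $\HH_{A^n}\simeq \HH_{E^n}$, and set $\ket{\Omega_n} \coloneqq \sum_{x^n}\ket{x^n}_{A^n}\ket{x^n}_{E^n}$ (the unnormalised maximally entangled vector). The canonical purifications are then
\[
\ket{\psi_n} \coloneqq \big(\sqrt{\omega_n}\otimes I_{E^n}\big)\ket{\Omega_n}\,,\qquad \ket{\phi_n}\coloneqq \big(\sqrt{\tau_n}\otimes I_{E^n}\big)\ket{\Omega_n}\,,
\]
which are normalised since $\braket{\psi_n|\psi_n}=\Tr\omega_n=1$, and likewise for $\ket{\phi_n}$. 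Note that $\HH_{E^n}\simeq\HH_{A^n}$ is large enough to support a purification, so the choice $\HH_E\simeq\HH_A$ suffices.

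First I would check that these vectors lie in $\mathrm{Sym}^n(\HH_{AE})$. Under the canonical reordering $(\HH_A\otimes\HH_E)^{\otimes n}\cong \HH_A^{\otimes n}\otimes\HH_E^{\otimes n}$, membership in $\mathrm{Sym}^n(\HH_{AE})$ is exactly invariance under the \emph{simultaneous} permutation $U_\pi^{A^n}\otimes U_\pi^{E^n}$ for every $\pi\in S_n$. Since permutation unitaries are real, $\ket{\Omega_n}$ is fixed by $U_\pi^{A^n}\otimes U_\pi^{E^n}$; and because $\omega_n$ is permutationally invariant so is $\sqrt{\omega_n}$, giving $U_\pi^{A^n}\sqrt{\omega_n}\,U_\pi^{A^n\dag}=\sqrt{\omega_n}$. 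Combining these two facts yields $(U_\pi^{A^n}\otimes U_\pi^{E^n})\ket{\psi_n}=\ket{\psi_n}$, and identically for $\ket{\phi_n}$. The i.i.d.\ case is immediate: if $\omega_n=\omega_1^{\otimes n}$ then $\sqrt{\omega_n}=\sqrt{\omega_1}^{\otimes n}$ and $\ket{\Omega_n}=\ket{\Omega_1}^{\otimes n}$, whence $\ket{\psi_n}=\ket{\psi_1}^{\otimes n}$.

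The overlap is then computed with the standard identity $\bra{\Omega_n}(M\otimes I_{E^n})\ket{\Omega_n}=\Tr M$, which gives
\[
\braket{\psi_n|\phi_n}=\bra{\Omega_n}\big(\sqrt{\omega_n}\sqrt{\tau_n}\otimes I_{E^n}\big)\ket{\Omega_n}=\Tr\big[\sqrt{\omega_n}\sqrt{\tau_n}\big]\,.
\]
This quantity is manifestly real, being the trace of a product of two Hermitian operators, and in fact non-negative, being the trace of a product of two positive operators. The required lower bound is then precisely the Holevo half of the chain~\eqref{Holevo_Fuchs_van_de_Graaf}, namely $1-\Tr[\sqrt{\omega_n}\sqrt{\tau_n}]\leq \tfrac12\|\omega_n-\tau_n\|_1$, which rearranges into $\braket{\psi_n|\phi_n}\geq 1-\tfrac12\|\omega_n-\tau_n\|_1$.

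I expect no serious obstacle: the whole content is the observation that the canonical purification \emph{simultaneously} supplies permutation symmetry, a real non-negative overlap, and an overlap equal to $\Tr[\sqrt{\omega_n}\sqrt{\tau_n}]$ rather than the fidelity $\|\sqrt{\omega_n}\sqrt{\tau_n}\|_1$ (so that no Uhlmann optimisation or phase-fixing is needed). The only point deserving care is the symmetry verification, i.e.\ correctly identifying $\mathrm{Sym}^n(\HH_{AE})$ with invariance under the \emph{diagonal} permutation of the $A^n$ and $E^n$ registers, and using the reality of permutation matrices to see that $\ket{\Omega_n}$ is diagonally invariant. Relative to Renner's Lemma~4.2.2, the only addition is invoking~\eqref{Holevo_Fuchs_van_de_Graaf} to convert the overlap into the stated trace-distance bound.
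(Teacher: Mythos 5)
Your proposal is correct and is essentially identical to the paper's own proof: both use the canonical purification $(\sqrt{\omega_n}\otimes \id_{E^n})\ket{\Phi}_{AE}^{\otimes n}$ (your $\ket{\Omega_n}$ is exactly $\ket{\Phi}_{AE}^{\otimes n}$ after reordering tensor factors), observe that the overlap equals $\Tr[\sqrt{\omega_n}\sqrt{\tau_n}]$, and invoke the Holevo inequality from~\eqref{Holevo_Fuchs_van_de_Graaf}. Your explicit verification of the diagonal permutation invariance is a welcome elaboration of a step the paper leaves as ``clear from the construction''.
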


\begin{proof}
It suffices to set
\bb
\ket{\psi_n}\coloneqq \sqrt{\omega_n^{A^n}}\! \otimes \id_{E^n}\! \ket{\Phi}_{AE}^{\otimes n} ,\qquad \ket{\phi_n}\coloneqq \sqrt{\tau_n^{A^n}}\! \otimes \id_{E^n}\! \ket{\Phi}_{AE}^{\otimes n} ,
\label{construction_symmetric_purifications}
\ee
where $\ket{\Phi}_{AE} \coloneqq \sum_{i=1}^{|A|} \ket{i}_A \otimes \ket{i}_E$ is the un-normalised maximally entangled state, and $|A| = \dim \HH_A$. Using the Holevo inequality~\cite{Holevo1972}, here reported in~\eqref{Holevo_Fuchs_van_de_Graaf}, we then find that
\bb
\braket{\psi_n|\phi_n} = \Tr \left[ \sqrt{\omega_n} \sqrt{\tau_n} \right] \geq 1 - \frac12 \left\|\omega_n - \tau_n\right\|_1\, ,
\ee
as claimed. Moreover, from the construction in~\eqref{construction_symmetric_purifications} it is clear that if one of the two states is i.i.d.\ then the same is true of the corresponding purification.
\end{proof}

We are now ready to present the full proof of the quantum blurring lemma, which is the linchpin of our proof of the generalised quantum Stein's lemma.

\begin{proof}[Proof of Lemma~\ref{quantum_blurring_lemma} (asymptotic quantum blurring lemma)]
Due to the fact that the trace of the positive part of an operator obeys the data processing inequality (Lemma~\ref{variational_program_trace_X_plus_lemma}(c)), it suffices to prove the claim for the case of pure $\rho$ and pure (and permutationally symmetric) $\rho_n$. Indeed, suppose that the claim has been shown in this latter setting. Then, for an arbitrary state $\rho = \rho_A$ and a sequence of $n$-copy states $\rho_n$ as in the statement, because of Lemma~\ref{perm_inv_purifications_lemma} we can consider two symmetric purifications
\bb
\ket{\psi}^{\otimes n} = \ket{\psi}_{AE}^{\otimes n},\qquad \ket{\phi_n} = \ket{\phi_n}_{A^nE^n} \in \mathrm{Sym}^n\big(\HH_{AE}\big)
\ee
of $\rho^{\otimes n}$ and $\rho_n$, respectively, such that
\bb
\Tr_{E^n} \psi^{\otimes n} &= \Tr_{E^n} \ketbra{\psi}^{\otimes n}_{AE} = \rho^{\otimes n}_A ,\\
\Tr_{E^n} \phi_n &= \Tr_{E^n} \ketbra{\phi_n}_{A^nE^n} = \rho_n^{A^n}\! ,
\label{partial_traces_symm_purifications}
\ee
and
\bb
\R \ni \braket{\psi^{\otimes n}|\phi_n} \geq 1 - \frac12 \left\|\rho^{\otimes n} - \rho_n\right\|_1\, .
\ee
Therefore, using the expression~\eqref{trace_distance_pure_states} for the trace distance between two pure states, we have 
\bb
\limsup_{n\in I} \frac12 \left\| \psi^{\otimes n} - \phi_n \right\|_1 &= \limsup_{n\in I} \sqrt{1- \big|\braket{\psi_n^{\otimes n} | \phi_n}\big|^2} \\
&= \sqrt{1- \liminf_{n\in I} \big|\braket{\psi_n^{\otimes n} | \phi_n}\big|^2} \\
&\leq \sqrt{1- \left(1 - \limsup_{n\in I} \frac12\left\|\rho^{\otimes n} - \rho_n\right\|_1\right)^2} \\
&< \sqrt{1-(1-\e)^2} \\
&\eqqcolon \e' \in (0,1)\, .
\label{epsilon_prime}
\ee
Now, observe that
\bb
\Tr_{E^n} \widebar{B}^{\,\psi}_{n,\delta}(\phi_n) &\eqt{(i)} \Tr_{E^n} \Tr_{A^{\floor{\delta n}} E^{\floor{\delta n}}} \mathcal{S}^{(AE)}_{n+\floor{\delta n}}\left( \phi_n^{A^nE^n} \otimes \psi_{AE}^{\otimes \floor{\delta n}} \right) \\
&\eqt{(ii)} \Tr_{A^{\floor{\delta n}}} \Tr_{E^{n+\floor{\delta n}}} \mathcal{S}^{(AE)}_{n+\floor{\delta n}}\left( \phi_n^{A^nE^n} \otimes \psi_{AE}^{\otimes \floor{\delta n}} \right) \\
&\eqt{(iii)} \Tr_{A^{\floor{\delta n}}} \mathcal{S}^{(A)}_{n+\floor{\delta n}}\left( \Tr_{E^{n+\floor{\delta n}}}  \phi_n^{A^nE^n} \otimes \psi_{AE}^{\otimes \floor{\delta n}} \right) \\
&\eqt{(iv)} \Tr_{A^{\floor{\delta n}}} \mathcal{S}^{(A)}_{n+\floor{\delta n}}\left( \rho_n^{A^n} \otimes \rho_A^{\otimes \floor{\delta n}} \right) \\
&= \widebar{B}^\rho_{n,\delta} (\rho_n)\, .
\label{partial_trace_output_blurring}
\ee
Here, the notation $\mathcal{S}_N^{(AE)}$ after~(i) indicates that the many copies of $A$ and $E$ are understood to be permuted jointly, i.e.\ each pair $AE$ is treated as a single system. Then, (ii)~follows by re-arranging the traces, while in~(iii) we observed that randomly permuting the $N = n+\floor{\delta n}$ pairs of systems $AE$ and subsequently tracing away all $N$ copies of the system $E$ is equivalent to tracing away all the $E$ systems first and then randomly permuting the $N$ copies of $A$ only. Finally, (iv)~follows from~\eqref{partial_traces_symm_purifications}.

Putting all together, and leaving $I'\subseteq I$ arbitrary for the time being, we have 
\bb
&\lim_{M\to\infty} \limsup_{n \in I'} \Tr \left(\rho^{\otimes n} - M \int_0^\Delta \frac{\dd\delta}{\Delta}\, \widebar{B}_{n,\delta}^{\rho}(\rho_n) \right)_+ \\
&\quad \eqt{(v)} \lim_{M\to\infty} \limsup_{n \in I'} \Tr\left( \Tr_{E^n} \left[ \psi^{\otimes n} - M \int_0^\Delta \frac{\dd\delta}{\Delta}\, \widebar{B}_{n,\delta}^{\,\psi}(\phi_n) \right] \right)_{+} \\
&\quad \leqt{(vi)} \lim_{M\to\infty} \limsup_{n \in I'} \Tr \left( \psi^{\otimes n} - M \int_0^\Delta \frac{\dd\delta}{\Delta}\, \widebar{B}_{n,\delta}^{\,\psi}(\phi_n) \right)_{+} \\
&\quad \eqt{(vii)}\ 0\, .
\ee
Here, (v)~follows from~\eqref{partial_traces_symm_purifications} and~\eqref{partial_trace_output_blurring}, (vi)~holds because of Lemma~\ref{variational_program_trace_X_plus_lemma}(c), while~(vii) is the statement of Lemma~\ref{quantum_blurring_lemma} in the case of pure $\rho$ and pure $\rho_n$, applied with $\e\mapsto \e'$, where $\e'$ is defined in~\eqref{epsilon_prime}. In this latter step we fix also the infinite set $I'\subseteq I$.

We can therefore consider, without loss of generality, the case where $\rho = \ketbra{0}$ and $\rho_n = \ketbra{\phi_n} = \phi_n$ are both pure. (Here, the name $\ket{0}$ for the vector that defines $\rho$ is purely conventional; this notation will however prove useful later.) Since $\phi_n$ is permutationally symmetric, we have $\ket{\phi_n}\in \mathrm{Sym}^n\big(\C^d\big)$. By assumption,
\bb
\limsup_{n\in I} \frac12 \left\| \ketbra{0}^{\otimes n} - \phi_n \right\|_1 \leq \e < 1\, ,
\ee
which becomes as before
\bb
\liminf_{n\in I}\, \big|\!\braket{0^n|\phi_n}\!\big| \geq \sqrt{1-\e^2} > 0\, .
\label{limit_overlap}
\ee
We will now show that we can replace the blurring map $\widebar{B}^{\rho}_{n,\delta}$ with its more handy version $B_{n,\delta}$ (cf.~\eqref{quantum_blurring} and~\eqref{blurring}). Indeed, observe that the (mixed) state $\widebar{B}_{n,\delta}^{\,\ket{0}\!\bra{0}}(\phi_n)$ is permutationally symmetric by construction, so that
\bb
\widebar{B}_{n,\delta}^{\,\ket{0}\!\bra{0}}(\phi_n) \geq \Pi_n \widebar{B}_{n,\delta}^{\,\ket{0}\!\bra{0}}(\phi_n) \Pi_n = B_{n,\delta}(\phi_n)
\ee
because the $\widebar{B}_{n,\delta}^{\,\ket{0}\!\bra{0}}(\phi_n)$ and $\Pi_n$ commute. Since due to Lemma~\ref{variational_program_trace_X_plus_lemma}(a)
\bb
\Tr \left( \ketbra{0}^{\otimes n} - M \int_0^{\Delta} \frac{\dd \delta}{\Delta}\ \widebar{B}_{n,\delta}^{\,\ket{0}\!\bra{0}}(\phi_n) \right)_+ \leq \Tr \left( \ketbra{0}^{\otimes n} - M \int_0^{\Delta} \frac{\dd \delta}{\Delta}\ B_{n,\delta}(\phi_n) \right)_+ ,
\label{blurring_lemma_proof_projecting}
\ee
it suffices to show that the right-hand side tends to zero once one takes the limit superior in $n\in I'$ (first) and the limit $M\to\infty$ (second), for some infinite $I'\subseteq I$. In fact, it is not difficult to realise that~\eqref{blurring_lemma_proof_projecting} holds with equality, because $\ket{0}^{\otimes n}$ belongs itself to the symmetric space.

We now lift everything to the Fock space $\HH_{d-1}$, using the canonical embedding $U_n: \mathrm{Sym}^n\big(\C^d\big)\to \HH_{d-1}$ defined by~\eqref{embedding}. Since $U_n$ is an isometry, we have
\bb
\Tr \left( \ketbra{0}^{\otimes n} \!\! - M\! \int_0^{\Delta} \!\frac{\dd \delta}{\Delta}\, B_{n,\delta}(\phi_n) \!\right)_{+} &= \Tr \left( U_n^{\vphantom{\dag}}\! \left(\!\ketbra{0}^{\otimes n} \!\! - M\! \int_0^{\Delta} \!\frac{\dd \delta}{\Delta}\, B_{n,\delta}\big( U_n^\dag U_n^{\vphantom{\dag}}\phi_n U_n^\dag U_n^{\vphantom{\dag}}\big)\!\right) U_n^\dag \right)_{+} \\
&\, = \Tr \left( \ketbra{0}^{\otimes (d-1)}\! - M \!\int_0^{\Delta} \!\frac{\dd \delta}{\Delta}\, \widetilde{B}_{n,\delta}\big(\widetilde{\phi}_n\big) \right)_{+} ,
\label{blurring_lemma_proof_lifting}
\ee
where $\widetilde{B}_{n,\delta}$ is defined by~\eqref{lifted_blurring}, and we set $\ket{\widetilde{\phi}_n} \coloneqq U_n \ket{\phi_n} \in \HH_{d-1}$, with $\widetilde{\phi}_n \coloneqq \ketbra{\widetilde{\phi}_n} \in \D\big(\HH_{d-1}\big)$.

The sequence $\big(\widetilde{\phi}_n\big)_{n\in I}$ is trace-norm-bounded (as $\big\|\widetilde{\phi}_n\big\|_1 = \Tr \widetilde{\phi}_n = 1$ for all $n\in I$). Hence, by Lemma~\ref{weak_star_seq_compactness_lemma} we can find an infinite subset $I'\subseteq I$ --- i.e.\ a sub-sequence of $\big(\widetilde{\phi}_n\big)_{n\in I}$ --- such that
\bb
\widetilde{\phi}_{n} \ctends{w^*}{n\in I'}{0pt} \omega\, ,
\label{blurring_lemma_proof_weak_star_convergence}
\ee
where $\omega\geq 0$ and $\Tr \omega\leq 1$ (i.e.\ $\omega$ is a sub-normalised state). Note that by definition of weak* convergence
\bb
\braket{0|\omega|0} = \lim_{n\in I'}\, \big|\!\braket{0|\widetilde{\phi}_n}\!\big|^2 = \lim_{n\in I'}\, |\!\braket{0^n|\phi_n}\!|^2 \geq 1 - \e^2 > 0\, ,
\ee
entailing that $\omega\neq 0$. Set
\bb
\Lambda_n \coloneqq \int_0^\Delta \frac{\dd\delta}{\Delta}\ \widetilde{B}_{n,\delta}\, ,
\label{Lambda_n}
\ee
and construct the sub-channel on $\HH_{d-1}$ given by
\bb
\Lambda \coloneqq \int_0^\Delta \frac{\dd\delta}{\Delta}\ \big(\EE_{\lambda(\delta)} \circ \D_{\mu(\delta)} \big)^{\otimes (d-1)} ,
\label{Lambda}
\ee
where $\lambda(\delta)$ and $\mu(\delta)$ are defined in~\eqref{lambda_and_mu}. Using Lemma~\ref{variational_program_trace_X_plus_lemma}(b) repeatedly, we now write
\bb
&\Tr \left( \ketbra{0}^{\otimes (d-1)} - M \int_0^{\Delta} \frac{\dd \delta}{\Delta}\ \widetilde{B}_{n,\delta}\big(\widetilde{\phi}_n\big) \right)_+ \\
&\quad = \Tr \left( \ketbra{0}^{\otimes (d-1)} - M \Lambda_n \big(\widetilde{\phi}_n\big) \right)_+ \\
&\quad \leq \Tr \left( \ketbra{0}^{\otimes (d-1)} - M \Lambda (\omega) \right)_+ + M \Tr \left( \big(\Lambda - \Lambda_n\big) (\omega) \right)_+ + M \Tr \left( \Lambda_n\!\left(\omega - \widetilde{\phi}_n\right) \right)_+ \\
&\quad \leq \Tr \left( \ketbra{0}^{\otimes (d-1)} - M \Lambda (\omega) \right)_+ + M \left\| \big(\Lambda - \Lambda_n\big) (\omega) \right\|_1 + M \left\| \Lambda_n\!\left(\omega - \widetilde{\phi}_n\right) \right\|_1\, .
\label{blurring_lemma_proof_decomposition}
\ee
Note that
\bb
\limsup_{n\to\infty} \left\| \big(\Lambda - \Lambda_n\big) (\omega) \right\|_1 &\leq \lim_{n\to\infty} \int_0^\Delta \frac{\dd\delta}{\Delta}\ \left\|\left(\widetilde{B}_{n,\delta} - \big(\EE_{\lambda(\delta)} \circ \D_{\mu(\delta)} \big)^{\otimes (d-1)}\right)(\omega) \right\|_1 = 0\, ,
\label{blurring_lemma_proof_vanishing_second_term}
\ee
where the last equality is obtained by taking the limit inside the integral sign and applying Corollary~\ref{lifted_blurring_strong_convergence_cor}. To perform the first of the above steps, we can use Lebesgue's dominated convergence theorem, since 
\bb
\left\|\left(\widetilde{B}_{n,\delta} - \big(\EE_{\lambda(\delta)} \circ \D_{\mu(\delta)} \big)^{\otimes (d-1)}\right)(\omega) \right\|_1 \leq \left\|\widetilde{B}_{n,\delta}(\omega) \right\|_1 + \left\| \big(\EE_{\lambda(\delta)} \circ \D_{\mu(\delta)} \big)^{\otimes (d-1)}(\omega) \right\|_1 \leq 2
\ee
due to the fact that both $\widetilde{B}_{n,\delta}$ and $\EE_{\lambda(\delta)} \circ \D_{\mu(\delta)}$ are sub-channels and hence contractive with respect to the trace norm (see~\eqref{contractivity_trace_norm}). This takes care of the term $M \left\| \big(\Lambda - \Lambda_n\big) (\omega) \right\|_1$ appearing on the second-to-last line of~\eqref{blurring_lemma_proof_decomposition}.

We now focus our attention on the third term. We have
\bb
\limsup_{n\in I'} \left\| \Lambda_n\!\left(\omega - \widetilde{\phi}_n\right) \right\|_1 \leq \lim_{n\in I'} \int_0^\Delta \frac{\dd\delta}{\Delta}\ \left\| \widetilde{B}_{n,\delta}\!\left(\omega - \widetilde{\phi}_n\right) \right\|_1 = 0\, ,
\label{blurring_lemma_proof_vanishing_third_term}
\ee
where once again the equality is proved by swapping limit and integral via Lebesgue's dominated convergence and then using Corollary~\ref{damping_weakly_vanishing_seq_cor} together with~\eqref{blurring_lemma_proof_weak_star_convergence}. That both steps are possible follows from the trace norm estimates
\bb
\left\| \widetilde{B}_{n,\delta}\!\left(\omega - \widetilde{\phi}_n\right) \right\|_1 \leq \left\| \omega - \widetilde{\phi}_n \right\|_1 \leq 2\, ,
\ee
which follow once again from~\eqref{contractivity_trace_norm}.

Taking the limit superior on $n\in I'$ in~\eqref{blurring_lemma_proof_decomposition} and using~\eqref{blurring_lemma_proof_vanishing_second_term} and~\eqref{blurring_lemma_proof_vanishing_third_term} gives
\bb
\limsup_{n\in I'} \Tr \left( \ketbra{0}^{\otimes (d-1)} - M \int_0^{\Delta} \frac{\dd \delta}{\Delta}\ \widetilde{B}_{n,\delta}\big(\widetilde{\phi}_n\big) \right)_+ \leq \Tr \left( \ketbra{0}^{\otimes (d-1)} - M \Lambda (\omega) \right)_+ .
\ee
Taking the limit $M\to\infty$ on both sides and applying Lemmas~\ref{vacuum_in_support_lemma} and~\ref{support_lemma} yields
\bb
\lim_{M\to\infty} \limsup_{n\in I'} \Tr \left( \ketbra{0}^{\otimes (d-1)} - M \int_0^{\Delta} \!\frac{\dd \delta}{\Delta}\, \widetilde{B}_{n,\delta}\big(\widetilde{\phi}_n\big) \right)_{+} = 0\, .
\ee
(Note that although $\omega$ is sub-normalised rather than normalised, this does not impact the application of Lemma~\ref{vacuum_in_support_lemma} because we have shown before that $\omega \neq 0$, so that $\omega = \kappa \omega'$ for some normalised state $\omega'$ and some $\kappa>0$.) Due to~\eqref{blurring_lemma_proof_projecting} and~\eqref{blurring_lemma_proof_lifting}, this concludes the proof.
\end{proof}

\section{Extension to almost power states} \label{sec_extension_GQSL_almost_iid}

The setting of entanglement testing, or of resource testing more generally, features a null hypothesis that is perfectly i.i.d., that is, of the form $\rho^{\otimes n}$ (see Section~\ref{sec_intro}). This is certainly an idealisation, as it is impossible to certify that a given source produces states that are \emph{exactly} product and \emph{exactly} independent, no matter how many times it is used. What we can realistically certify, instead, is that the correlations between different outcomes of the source are small, especially in the limit in which the source is used many times. Physically, this should not be a problem, as any procedure that allows us to carry out resource testing on states that are exactly i.i.d.\ should realistically also work for states that are only approximately such. But proving this mathematically is highly non-obvious.

The first challenge is to understand how to model an imperfect source that produces approximately i.i.d.\ states. There are several ways to do so, and we will discuss some of them in Section~\ref{subsec_almost_iid}. We will concentrate on a class of sources, called `almost power states', that was identified by Brand\~{a}o and Plenio themselves~\cite[p.~803]{Brandao2010}, based on previous work by Renner~\cite{RennerPhD}, in the course of their original proof attempt. Here, `power' is to be intended as `tensor power', and thus as a synonym of `i.i.d.' After defining the problem precisely, we will extend the generalised quantum Stein's lemma to almost power states in Section~\ref{subsec_extension_GQSL_almost_iid}.

\subsection{Almost power states} \label{subsec_almost_iid}

Sources that produce approximately i.i.d.\ states can be modelled in several different and in principle inequivalent ways. The first basic definitions that are needed to discuss these notions are due to Renner~\cite{RennerPhD}: for a vector $\ket{\psi}$ in some Hilbert space $\HH$ and two non-negative integers $n\geq r\geq 0$, we set~\cite[Eq.~(4.2) and Definition~4.1.4]{RennerPhD}
\bb
\VV(\ket{\psi}; n,r) &\coloneqq \Span\Big\{ U_\pi \!\left(\ket{\psi}^{\otimes (n-r)} \!\otimes \ket{\mu_r} \right)\! : \ \pi\in S_n,\ \ket{\mu_r} \in \HH^{\otimes r} \Big\} , \\
\VVsym(\ket{\psi}; n,r) &\coloneqq \VV(\ket{\psi}; n,r) \cap \mathrm{Sym}^n(\HH)\, .
\ee
Intuitively, $\VV(\ket{\psi}; n,r)$ contains the $n$-copy pure states that are equal to $\ket{\psi}$ on every site except perhaps for at most $r$ `defective' sites, which can be prepared in some arbitrary pure state $\ket{\mu_r}$. If we require permutational symmetry of the global state, which should model the invariance of the source under any exchange of sites as in the setting of de Finetti's theorem~\cite{deFinetti-original}, we obtain the space $\VVsym(\ket{\psi}; n,r)$. We can think of either of these two spaces as modelling a source that produces a global pure state that is approximately equal, in some sense, to $n$ independent copies of $\ket{\psi}$. Renner's work proves that in particular the space $\VVsym(\ket{\psi}; n,r)$ provides a useful generalisation of the notion of i.i.d.\ state: it is precisely these states that appear in his `exponential de Finetti' theorem~\cite[Theorem~4.3.2]{RennerPhD}.

While the above definitions may be deemed satisfactory for pure states, also in light of their applicability and mathematical fruitfulness, it is much less clear how to proceed in the case of mixed states. Fortunately, Brand\~{a}o and Plenio themselves have identified a possible way out: in their original paper on the generalised quantum Stein's lemma~\cite[p.~803]{Brandao2010}, they offer the following definition.

\begin{Def}[{(Almost power states~\cite[p.~803]{Brandao2010})}] \label{almost_power_states_Def}
Let $\rho\in \D(\HH)$ be a state of a quantum system modelled by a Hilbert space $\HH$, and let $\omega_n\in \D(\HH^{\otimes n})$ be another state on $n$ copies of that system. For some integer $0\leq r\leq n$, we say that $\omega_n$ is an \deff{$\boldsymbol{(n,r)}$-almost power state} along $\rho$, and we write $\omega_n \in \aiid{\rho,r}_n$, if there are purifications $\ket{\Gamma}\in (\HH \otimes \HH')^{\otimes n}$ of $\omega_n$ and $\ket{\psi}\in \HH\otimes \HH'$ of $\rho$ such that
\bb
\ket{\Gamma} \in \VVsym(\ket{\psi}; n,r)\, .
\ee 
\end{Def}

\begin{rem} \label{almost_power_perm_symm_rem}
An $(n,r)$-almost power state is by construction permutationally symmetric, in the sense that $\mathcal{S}_n(\omega_n) = \omega_n$, where the symmetrisation operator is given by~\eqref{symmetrisation}.
\end{rem}

The rationale of the above definition is that a source that is in some sense approximately i.i.d.\ should be purifiable to an approximately i.i.d.\ and symmetric pure source. However, it is worth remarking at this point that the above definition is by no means the only possible one. Other alternatives have been recently proposed by Mazzola, Sutter, and Renner~\cite{Mazzola-talk-BIID} and by Renner himself~\cite{Renner-talk-Cambridge}. This renewed interest played a role in convincing the author to advertise the fact, of which he was aware for some time, that the proof of the generalised quantum Stein's lemma presented here can be adapted, \emph{essentially for free}, to cover resource testing of almost power sources. 

Brand\~{a}o and Plenio introduce almost power states to prove a variant of Renner's exponential de Finetti theorem~\cite[Lemma~III.5]{Brandao2010}. The endgame of that part of the argument is their Lemma~III.7, whose proof, interestingly, is precisely the point where the incorrect Lemma~III.9 is employed (see~\cite{gap} for a detailed explanation of the issue). Because of this, Lemma~III.7 is currently not known to be correct. However, physical intuition suggests that it should hold: if the number of defective sites $r$ is sub-linear in $n$, the value of an asymptotically meaningful, extensive quantity such as the regularised relative entropy of entanglement should be unaffected. And indeed, below we show how to recover a weaker variant of Lemma~III.7, in which $r$ is assumed to be bounded rather than merely sub-linear in $n$.

\subsection{Extension of the generalised quantum 
Stein's lemma to almost power states} \label{subsec_extension_GQSL_almost_iid}

The problem we want to tackle here is the calculation of the Stein exponent of the following more complicated variant of the task of resource testing described in Section~\ref{sec_intro}. The parameters of the problem are a state $\rho\in \D(\HH)$, a family $(\FF_n)_n$ of sets $\FF_n\subseteq \D\big(\HH^{\otimes n}\big)$ obeying the Brand\~{a}o--Plenio axioms, and some non-negative integer $r$ (arbitrary but independent of $n$):
\begin{itemize}
\item Null hypothesis: the unknown state is an $(n,r)$-almost power state $\omega_n\in \aiid{\rho,r}_n$.
\item Alternative hypothesis: the unknown state is some arbitrary $\sigma_n\in \FF_n$.
\end{itemize}
The corresponding Stein exponent is given by
\bb
\rel{\stein}{\aiid{\rho,r}}{\FF} &\coloneqq \lim_{\e\to 0^+} \liminf_{n\to\infty} \frac1n\, \rel{D_H^\e}{\aiid{\rho,r}_n}{\FF_n}\, , \\
\rel{D_H^\e}{\aiid{\rho,r}_n}{\FF_n} &\coloneqq \inf_{\omega_n\in \aiid{\rho,r}_n\!\!,\ \sigma_n\in \FF_n} D_H^\e(\omega_n \| \sigma_n)\, .
\ee
Note that for $r=0$ we recover the problem studied before, as in this case $\aiid{\rho,0}_n = \{\rho^{\otimes n}\}$. Since the defective systems are at most $r$, and their number is therefore independent of $n$, they will form a vanishing fraction of the total; hence, it is only natural to speculate that their presence should bear essentially no physical consequence at all. This leads us to conjecture that
\bb
\rel{\stein}{\aiid{\rho,r}}{\FF} \eqt{?} \stein(\rho\|\FF) = D^\infty(\rho\|\FF)\, .
\ee
Below we will show that this is indeed true. Moreover, this conclusion can be reached by running essentially the same argument that allowed us to prove Theorem~\ref{GQSL_thm}, and operating only a minor modification to the proof of the quantum blurring lemma.

\begin{thm} \label{almost_iid_GQSL_thm}
Let $\HH$ be a finite-dimensional Hilbert space, and let $(\FF_n)_n$ be a sequence of sets of states $\FF_n\subseteq \D\big(\HH^{\otimes n}\big)$ that obeys the Brand\~{a}o--Plenio axioms. For some $\rho\in \D(\HH)$ and some fixed integer $r\in \N$, let $\aiid{\rho,r} = \big(\aiid{\rho,r}_n\big)_n$ be the sequence of sets $\aiid{\rho,r}_n$ of $(n,r)$-almost power states\footnote{Strictly speaking, this is only defined for $n\geq r$. Since we are looking at the asymptotic limit anyway, this is by no means a problem.} along $\rho$ (Definition~\ref{almost_power_states_Def}). Then it holds that
\bb
\lim_{n\to\infty} \frac1n\, \rel{D_H^\e}{\aiid{\rho,r}_n}{\FF_n} = D^\infty(\rho\|\FF) \qquad \forall\ \e\in (0,1)\, ,
\label{almost_iid_GQSL_DH}
\ee
implying that
\bb
\rel{\stein}{\aiid{\rho,r}}{\FF} = D^\infty(\rho\|\FF)\, .
\label{almost_iid_GQSL}
\ee
\end{thm}

The proof of the above result is essentially identical to that of Theorem~\ref{GQSL_thm}. The only change is that we will rely on the following modified version of the quantum blurring lemma.

\begin{lemma} \label{almost_iid_blurring_lemma}
Let $\rho\in \D\big(\C^d\big)$ be a finite-dimensional state, and for some infinite set $I\subseteq \N$ let $(\rho_n)_{n\in I}$ be a sequence of permutationally symmetric $n$-copy states $\rho_n = \mathcal{S}_n(\rho_n) \in \D\big( (\C^d)^{\otimes n}\big)$ such that
\bb
\limsup_{n\in I} \frac12\, \big\| \rho_n - \aiid{\rho,r}_n \big\|_1 < 1
\ee
for some $r\in \N$, where $\big\| \rho_n - \aiid{\rho,r}_n \big\|_1\coloneqq \inf_{\omega_n \in \aiid{\rho,r}_n} \|\rho_n - \omega_n\|_1$. Then there exists an infinite subset $I'\subseteq I$ such that, for all $\Delta\in (0,\frac12]$,
\bb
\lim_{M\to\infty} \limsup_{n\in I'} \Tr \left( \rho^{\otimes n} - M \int_0^{\Delta} \frac{\dd \delta}{\Delta}\ \widebar{B}_{n,\delta}^\rho(\rho_n) \right)_+ = 0\, ,
\ee
where $\widebar{B}_{n,\delta}^\rho$ is defined by~\eqref{quantum_blurring}.
\end{lemma}

\begin{proof}
By assumption, there exists a state $\tau_n\in \aiid{\rho,r}_n$ and some $\e\in (0,1)$ such that $\frac12\, \big\| \rho_n - \tau_n \big\|_1 \leq \e$ for all sufficiently large $n\in I'$. Since $\tau_n$ is an $(n,r)$-almost power state along $\rho$, by Definition~\ref{almost_power_states_Def} there exists a purification of $\rho\in \D\big(\C^d\big)$, which we denote by $\ket{0}\in \C^d\otimes \C^d = \C^{d^2}$, and a purification of $\tau_n\in \D\big((\C^d)^{\otimes n}\big)$, denoted by $\ket{\Gamma_n}\in \big(\C^{d^2}\big)^{\otimes n}$, such that
\bb
\ket{\Gamma_n} \in \VVsym(\ket{0}; n,r)\, .
\label{almost_iid_blurring_proof_eq1}
\ee
Using Uhlmann's theorem~\cite{Uhlmann-fidelity} and the Fuchs--van de Graaf inequalities~\eqref{Holevo_Fuchs_van_de_Graaf}, we can also find a purification of $\rho_n$ on the same space, denoted by $\ket{\phi_n}\in \big(\C^{d^2}\big)^{\otimes n}$, such that
\bb
\braket{\phi_n|\Gamma_n} = F(\rho_n,\tau_n) \geq 1 - \frac12\left\|\rho_n - \tau_n\right\|_1 \geq 1 - \e\, ,
\label{almost_iid_blurring_proof_eq2}
\ee
so that
\bb
\frac12 \left\| \phi_n - \Gamma_n \right\|_1 \leq \sqrt{1-(1-\e)^2} = \sqrt{\e(2-\e)}\, .
\label{almost_iid_blurring_proof_eq3}
\ee
Exactly as in the proof of Lemma~\eqref{quantum_blurring_lemma} (cf.~\eqref{blurring_lemma_proof_projecting}), we have
\bb
\Tr \left( \rho^{\otimes n} - M \int_0^{\Delta} \frac{\dd \delta}{\Delta}\ \widebar{B}_{n,\delta}^{\,\rho}(\rho_n) \right)_+ &\leq \Tr \left( \ketbra{0}^{\otimes n} - M \int_0^{\Delta} \frac{\dd \delta}{\Delta}\ \widebar{B}^{\,\ket{0}\!\bra{0}}_{n,\delta}(\phi_n) \right)_+ \\
&\leq \Tr \left( \ketbra{0}^{\otimes n} - M \int_0^{\Delta} \frac{\dd \delta}{\Delta}\ B_{n,\delta}(\phi_n) \right)_+\, .
\ee
By lifting everything to the Fock space $\HH_{d^2-1}$ via the embedding $U_n: \mathrm{Sym}^n\big(\C^{d^2}\big)\to \HH_{d^2-1}$ defined by~\eqref{embedding} (with $d$, the dimension of the underlying space, replaced by $d^2$), we also know that
\bb
\Tr \left( \ketbra{0}^{\otimes n} - M \int_0^{\Delta} \frac{\dd \delta}{\Delta}\ B_{n,\delta}(\phi_n) \right)_+ = \Tr \left( \ketbra{0}^{\otimes (d^2-1)} - M \int_0^{\Delta} \frac{\dd \delta}{\Delta}\ \widetilde{B}_{n,\delta}\big(\widetilde{\phi}_n\big) \right)_+ ,
\ee
as in~\eqref{blurring_lemma_proof_lifting}. Here, $\ket{\widetilde{\phi}_n} \coloneqq U_n \ket{\phi_n} \in \HH_{d^2-1}$, and the lifted blurring map is defined by~\eqref{lifted_blurring}.

As before, since $\big(\widetilde{\phi}_n\big)_{n\in I}$ is trace-norm-bounded, we can extract a weak*-converging subsequence $\big(\widetilde{\phi}_n\big)_{n\in I'}$, for some infinite $I'\subseteq I$. This means that, as in~\eqref{blurring_lemma_proof_weak_star_convergence}, $\phantom{\Big|}\widetilde{\phi}_{n} \ctends{w^*}{n\in I'}{0pt} \omega$ for some trace class, positive semi-definite $\omega\geq 0$. 

The key difference with the i.i.d.\ case lies in how we prove that $\omega \neq 0$. Before, we argued that $\braket{0|\omega|0} > 0$, due to the fact that the overlap between $\ket{\widetilde{\phi}_n}$ and $\ket{0}^{\otimes (d^2-1)}$ in $\HH_{d^2-1}$ --- equivalently, the overlap between $\ket{\phi_n}$ and $\ket{0}^{\otimes n}$ in $\big(\C^{d^2}\big)^{\otimes n}$ --- was uniformly bounded away from zero for all $n$. Now this argument fails, because the only guarantee we have on $\ket{\phi_n}$ is that it has a non-vanishing overlap with a vector in $\VVsym(\ket{0}; n,r)$, that is, $\ket{\Gamma_n}$ (see~\eqref{almost_iid_blurring_proof_eq1}). However, $\VVsym(\ket{0}; n,r)$ contains plenty of vectors that are orthogonal to $\ket{0}^{\otimes n}$, so $\ket{\phi_n}$ itself could also be orthogonal to $\ket{0}^{\otimes n}$ for all $n$. The previous argument, therefore, cannot work.

To find an alternative way to prove that indeed $\omega\neq 0$, we make use of a simple but important observation due to Renner~\cite[Lemma~4.1.5]{RennerPhD}, namely,\footnote{The alphabet underlying the definition of $\mathcal{T}_n$ now has cardinality $d^2$, so we can think of types $t_n\in \mathcal{T}_n$ as functions $t_n:\{0,\ldots, d^2-1\} \to [0,1]$.}
\bb
\VVsym(\ket{0}; n,r) &= \Span\left\{ \ket{n,t_n}:\ t_n \in \mathcal{T}_n\, ,\ \sumno_{x=1}^{d^2-1} nt_n(x) \leq r\right\} \\
&=  \Span\left\{ \ket{n,t_n}:\ t_n \in \mathcal{T}_n\, ,\ nt_n(0) \geq n-r\right\} .
\label{almost_iid_blurring_proof_eq6}
\ee
A brief self-contained justification of~\eqref{almost_iid_blurring_proof_eq6}, essentially identical to Renner's, is as follows: first, if $t_n\in\mathcal{T}_n$ satisfies that $\sum_{x\neq 0} nt_n(x) \leq r$, then clearly $\ket{n,t_n}\in \VVsym(\ket{0};n,r)$, by construction; second, any $\ket{\Psi}\in \VVsym(\ket{0}; n,r)\subseteq \mathrm{Sym}^n\big(\C^{d^2}\big)$ can be written as a linear combination of the vectors $\ket{n,t_n}$, which form a basis of the symmetric space; however, since it must hold that $\braket{n,s_n|\Psi} = 0$ for all $s_n\in\mathcal{T}_n$ such that $\sum_{x\neq 0} ns_n(x) > r$, only the types obeying the condition in~\eqref{almost_iid_blurring_proof_eq6} can actually appear in the decomposition of $\ket{\Psi}$.

An important consequence of~\eqref{almost_iid_blurring_proof_eq6} is that 
\bb
U_n \big(\VVsym(\ket{0};n,r)\big) = \Span\left\{ \ket{k}:\ k\in \N^{d^2-1}\! ,\ \sumno_{x=1}^{d^2-1} k(x) \leq r \right\} \eqqcolon \widetilde{\VV}_r
\label{almost_iid_blurring_proof_eq7}
\ee
for all $n\geq r$. The dimension of $\widetilde{\VV}_r$ is easy to estimate, because if $\sumno_{x=1}^{d^2-1} k(x) \leq r$ then, for all $x\neq 0$, $k(x)$ can take at most $r+1$ distinct values, thus entailing that
\bb
\dim \widetilde{\VV}_r \leq (r+1)^{d^2-1} .
\ee
Call $Q_r$ the orthogonal projector onto $\widetilde{\VV}_r$, and set $\ket{\widetilde{\Gamma}_n}\coloneqq U_n \ket{\Gamma_n}$. We have
\begin{align}
\Tr Q_r\omega\ &\eqt{(i)}\ \lim_{n\in I'} \Tr Q_r \widetilde{\phi}_n \nonumber \\
&\geqt{(ii)}\ \limsup_{n\in I'} \left( \Tr Q_r \widetilde{\Gamma}_n - \frac12 \left\| \widetilde{\phi}_n - \widetilde{\Gamma}_n \right\|_1 \right) \nonumber \\
&\eqt{(iii)}\ \limsup_{n\in I'} \left( 1 - \frac12 \left\| \phi_n - \Gamma_n \right\|_1 \right) \\
&\geqt{(iv)}\ 1 - \sqrt{\e(2-\e)} \nonumber \\
&>\ 0\, . \nonumber
\end{align}
Here, (i)~holds due to weak* convergence, thanks to the fact that $Q_r$ has finite rank; (ii)~is a simple application of Lemma~\ref{variational_program_trace_X_plus_lemma}, and in particular of~\eqref{second_variational_program_trace_X_plus}, once one observes that $\Tr \left(\widetilde{\phi}_n - \widetilde{\Gamma}_n\right)_+ = \frac12 \left\| \widetilde{\phi}_n - \widetilde{\Gamma}_n \right\|_1$ due to normalisation; in~(iii) we noticed that $\ket{\widetilde{\Gamma}_n} \in \widetilde{\VV}_r$ lies in the support of $Q_r$ due to~\eqref{almost_iid_blurring_proof_eq1} and~\eqref{almost_iid_blurring_proof_eq7}, and moreover the isometry $U_n$ cannot change the trace norm; finally, in~(iv) we used the estimate~\eqref{almost_iid_blurring_proof_eq3}.
\end{proof}

We are now ready to present the proof of Theorem~\ref{almost_iid_GQSL_thm}.

\begin{proof}[Proof of Theorem~\ref{almost_iid_GQSL_thm}]
By taking the i.i.d.\ ansatz $\rho^{\otimes n} \in \aiid{\rho,r}_n$ for all $n$, it is not difficult to verify that $\stein(\aiid{\rho,r}\|\FF)\leq \stein(\rho\|\FF) = D^\infty(\rho\|\FF)$. Therefore, we only need to establish the opposite inequality. As before, following Remark~\ref{reformulations_rem} we prove instead that
\bb
\liminf_{n\to\infty} \frac1n\, \rel{D_{\max}^\e}{\aiid{\rho,r}_n}{\FF_n} \geq D^\infty(\rho\|\FF) \qquad \forall\ \e\in (0,1)\, ,
\label{almost_iid_GQSL_proof_eq1}
\ee
where
\bb
\rel{D_{\max}^\e}{\aiid{\rho,r}_n}{\FF_n} \coloneqq \inf_{\omega_n \in \aiid{\rho,r}_n\!\!,\ \sigma_n\in \FF_n} D_{\max}^\e(\omega_n\|\sigma_n)\, . 
\ee
We proceed by contradiction. Assume that there exist some $\e\in (0,1)$ and an infinite set $I\subseteq [n]$ with the property that
\bb
\lim_{n\in I} \frac1n\, \rel{D_{\max}^\e}{\aiid{\rho,r}_n}{\FF_n} < \lambda < D^\infty(\rho\|\FF)\, .
\ee
This means that for all sufficiently large $n\in I$ we can find states $\rho_n$ such that
\bb
\frac12 \left\|\rho_n - \aiid{\rho,r}_n \right\|_1 \leq \e\, ,\qquad D_{\max}(\rho_n \| \FF_n)\leq n\lambda\, .
\ee
Without loss of generality, up to symmetrising we can assume that $\rho_n$ is permutationally symmetric for every $n$ --- remember that $\aiid{\rho,r}_n$ contains only permutationally symmetric states (Remark~\ref{almost_power_perm_symm_rem}). Hence, by Lemma~\ref{almost_iid_blurring_lemma}, there exists an infinite subset $I'\subseteq I$ such that
\bb
\lim_{M\to\infty} \limsup_{n\in I'} \Tr \left( \rho^{\otimes n} - M \widetilde{\rho}_n \right)_+ = 0\, ,
\label{almost_iid_GQSL_proof_eq4}
\ee
where
\bb
\widetilde{\rho}_n \coloneqq \int_0^\Delta \frac{\dd\delta}{\Delta}\ \widebar{B}^{\,\rho}_{n,\delta}(\rho_n)\, .
\ee
We can now proceed exactly as in the above proof of the standard, i.i.d.\ version of the generalised quantum Stein's lemma. There, we have already shown that~\eqref{almost_iid_GQSL_proof_eq4} leads to a contradiction, and the same exact reasoning applies here. Note that the fact that $\rho_n$ was $\e$-away from $\rho^{\otimes n}$ in trace norm, which is precisely the assumption that we are dropping here, was never used in that part of the proof --- it was only employed before, in order to apply the quantum blurring lemma (Lemma~\ref{quantum_blurring_lemma}).
\end{proof}

As a consequence, we immediately deduce a slightly weaker form of~\cite[Lemma~III.7]{Brandao2010}. The main difference lies in the fact that we need to assume that the number of defective sites, $r$, is a fixed (but arbitrary) constant for all $n$, while in the original statement it can grow with $n$, as long as it does so sub-linearly. In any case, we consider the following statement some positive evidence that~\cite[Lemma~III.7]{Brandao2010}, whose original proof is compromised because of the issue found in~\cite{gap}, might be correct after all. If this could be shown to be the case, the whole proof strategy of~\cite{Brandao2010} could be salvaged by by-passing the problematic Lemma~III.9 altogether.

\begin{cor}
Let $\HH$ be a finite-dimensional Hilbert space, and let $(\FF_n)_n$ be a sequence of sets of states $\FF_n\subseteq \D\big(\HH^{\otimes n}\big)$ that obeys the Brand\~{a}o--Plenio axioms. Then, for all $\rho\in \D(\HH)$ and all fixed $r\in \N$, the relative entropy of resource minimised over all $(n,r)$-almost power states along $\rho$ is asymptotically identical to the regularised relative entropy of $\rho$ itself:
\bb
\lim_{n\to\infty} \frac1n \inf_{\omega_n \in \aiid{\rho,r}_n\!\!,\ \sigma_n\in \FF_n} D(\omega_n \| \sigma_n) = D^\infty(\rho \| \FF)\, .
\ee
\end{cor}

\begin{proof}
The fact that 
\bb
\limsup_{n\to\infty} \frac1n \inf_{\omega_n \in \aiid{\rho,r}_n\!\!,\ \sigma_n\in \FF_n} D(\omega_n \| \sigma_n) \leq D^\infty(\rho \| \FF)
\ee
follows by choosing the trivial ansatz $\omega_n = \rho^{\otimes n}$. We will therefore focus on the converse inequality, whose proof follows a standard argument. For any $\omega_n \in \aiid{\rho,r}_n$, $\sigma_n\in \FF_n$, and $\e\in (0,1)$, let $Q_n$ be such that $0\leq Q_n\leq \id$, $\Tr Q_n \omega_n = 1-\e$, and $\Tr Q_n \sigma_n = 2^{-D_H^\e(\omega_n\|\sigma_n)}$. For $p,q\in [0,1]$, define $D_2(p\|q) \coloneqq p \log \frac{p}{q} + (1-p) \log\frac{1-p}{1-q}$, and observe that
\bb
D_2(p\|q) &= - h_2(p) + p \log\frac{1}{q} + (1-p) \log\frac{1}{1-q} \\
&\geq -1 + p \log\frac{1}{q}\, .
\ee
Due to the data processing inequality for the relative entropy, we have 
\bb
D(\omega_n\|\sigma_n) &\geq D_2\big( \Tr Q_n \omega_n\, \big\| \Tr Q_n \sigma_n \big) \\
&\geq -1 + \left(\Tr Q_n \omega_n\right) \log \frac{1}{\Tr Q_n \sigma_n} \\
&= -1 + (1-\e)\, D_H^\e(\omega_n\|\sigma_n)\, .
\ee
The claim follows from Theorem~\ref{almost_iid_GQSL_thm} by minimising over $\omega_n$ and $\sigma_n$, dividing by $n$, and taking first the limit $n\to\infty$ and then the limit $\e\to 0^+$.
\end{proof}

\vspace{2ex}
\noindent \emph{Acknowledgements.} I am grateful to Mario Berta, Fernando G.\ S.\ L.\ Brand\~{a}o, Martin Plenio, Marco Tomamichel, and especially to Bartosz Regula for the many draining but illuminating discussions on the topic of this paper that we have had over the past few years. I also thank Renato Renner for presenting his ideas on almost-i.i.d.\ states at the workshop `Bridging Quantum Information and Mathematical Physics', held at the University of Cambridge in August 2024. I acknowledge financial support from the European Union under the European Research Council (StG ETQO no.~101165230) and from MIUR (Ministero dell'Istruzione, dell'Universit\`a e della Ricerca) through the project `Dipartimenti di Eccellenza 2023--2027' of the `Classe di Scienze' department at the Scuola Normale Superiore.

\bibliography{../../biblio}

\end{document}